\definecolor{mycolor}{rgb}{0.122, 0.435, 0.698}% Rule colour
\newcommand{\mybox}[1]{%
  \setbox0=\hbox{#1}%
  \setlength{\@tempdima}{\dimexpr\wd0+13pt}%
  \begin{tcolorbox}[colframe=mycolor,boxrule=0.5pt,arc=4pt,
      left=6pt,right=6pt,top=3pt,bottom=3pt,boxsep=0pt,width=\@tempdima]
    #1
  \end{tcolorbox}
}
\newcommand{\introparagraph}[1]{\vspace{0.7mm} \noindent \textbf{\em #1.}}
\newcommand{\TRUE}[0]{\mathsf{TRUE}}
\newcommand{\bC}{\mathbb{C}}
\newcommand{\bB}{\mathbb{B}}
\newcommand{\bS}{\mathbb{S}}
\newcommand{\bT}{\mathbb{T}}
\newcommand{\bx}{\mathbf{x}}
\newcommand{\RPQ}{\mathsf{RPQ}}
\newcommand{\DFA}{\mathsf{DFA}}
\newcommand{\CFG}{\mathsf{CFG}}
\newcommand{\TC}{\mathsf{TC}}
\newcommand{\id}[1]{\langle{#1}\rangle}
\newcommand{\obtainedfrom}{\text{ :- } }
\newcommand{\CQ}{\mathsf{CQ}}
\newcommand{\UCQ}{\mathsf{UCQ}}
\begin{document}
\pagenumbering{arabic}

\received{December 2024}
\received[revised]{February 2025}
\received[accepted]{March 2025}

%\nolinenumbers
%\input{rebuttal.tex}
%\nolinenumbers

\title{Circuits and Formulas for Datalog over Semirings}

\author{Austen Z. Fan}
\affiliation{%
  \institution{University of Wisconsin--Madison}
  \city{Madison}
  \country{USA}
  }
\email{afan@cs.wisc.edu}

\author{Paraschos Koutris}
  \affiliation{%
    \institution{University of Wisconsin--Madison}
    \city{Madison}
    \country{USA}
    }
  \email{paris@cs.wisc.edu}

\author{Sudeepa Roy}
\affiliation{%
% Adjust the value as per your need
    \institution{Duke University}
    \city{Durham}
    \country{USA}
}
\email{sudeep@cs.duke.edu}

\begin{CCSXML}
<ccs2012>
   <concept>
       <concept_id>10003752.10010070.10010111.10011711</concept_id>
       <concept_desc>Theory of computation~Database query processing and optimization (theory)</concept_desc>
       <concept_significance>500</concept_significance>
       </concept>
 </ccs2012>
\end{CCSXML}
\ccsdesc[500]{Theory of computation~Database theory}

\keywords{Datalog, Provenance Polynomials, Circuits, Formulas, Semirings}

\begin{abstract}
In this paper, we study circuits and formulas for provenance polynomials of Datalog programs. We ask the following question: given an absorptive semiring and a fact of a Datalog program, what is the optimal depth and size of a circuit/formula that computes its provenance polynomial? We focus on absorptive semirings as these guarantee the existence of a polynomial-size circuit. Our main result is a dichotomy for several classes of Datalog programs on whether they admit a formula of polynomial size or not. We achieve this result by showing that for these Datalog programs the optimal circuit depth is either $\Theta(\log m)$ or $\Theta(\log^2 m)$, where $m$ is the input size. We also show that for Datalog programs with the polynomial fringe property, we can always construct low-depth circuits of size $O(\log^2 m)$. Finally, we give characterizations of when Datalog programs are bounded over more general semirings.
\end{abstract}

\maketitle
\allowdisplaybreaks

\section{Introduction}
\label{sec:intro}

In this work, we study how to efficiently store provenance information for Datalog programs when interpreted over semirings. Computing provenance information for query results is a fundamental task in data management, with applications to various domains~\cite{GreenKT07}. In the context of Datalog, the presence of recursion in the queries causes problems in how efficiently we can capture provenance. This is because provenance needs to keep track of all the possible ways in which we can derive a given fact, which for Datalog programs can be exponentially large or even unbounded with respect to the input size. Contrast this with the task of storing provenance information for Conjunctive Queries ($\CQ$s) or even Unions of Conjunctive Queries ($\UCQ$s), where there are only polynomially many such derivations {considering data complexity \cite{Vardi88}}. 

In more technical terms, the provenance of a fact in Datalog is captured by a provenance polynomial $p$ that is a sum of products, the sum being over all possible proof trees (or derivation trees) of the fact, and the product being over all possible leaves of the proof tree, which always correspond to input facts {(see Section~\ref{sec:provenance_polynomial} for the exact definition.)}. For example, consider the Datalog program that computes the transitive closure $T(x,y)$ of a directed graph: the provenance polynomial for a fact $T(s,t)$ encodes all possible paths from $s$ to $t$. In general, the number of proof trees can be infinite. In the transitive closure example, the presence of cycles in the graph can lead to {paths with unbounded length}. By interpreting the polynomial $p$ over different semirings, we obtain different information from $p$. However, there exist semirings for which the value of the polynomial $p$ is not well-defined (for example, the arithmetic semiring). To overcome this issue, we focus on semirings where the infinite sum always takes a finite value in the semiring; in this paper, we will work with {\em absorptive} (or {\em $0$-stable}~\cite{KhamisNPSW24}) semirings\footnote{{Examples of semirings beyond absorptive semirings where the provenance has a bounded representation is given in prior work \cite{DeutchMRT14} without a generalized property, which remains an interesting future work.}}.

To efficiently store these polynomials, we turn into circuits and formulas. These can be viewed as compressed data structures that represent the polynomial, with the guarantee that the polynomial value can be computed in time linear to the representation size.
In a seminal work~\cite{DeutchMRT14}, it was shown that all provenance polynomials over absorptive semirings can be represented by circuits of polynomial size. This means that information can be compressed very efficiently, since the polynomials written as a sum-of-products can be exponentially large. On the other hand, the provenance polynomial for transitive closure requires super-polynomial size formulas~\cite{KarchmerW90}. This motivates the following question:  
\begin{quote} 
\centering
   {\em Which Datalog programs admit polynomial-size formulas and which not?}
\end{quote}

An ideal answer would classify every Datalog program in one of two categories, and prove the dichotomy for various semirings -- not only for the Boolean semiring. 
This paper answers this question for several fragments of Datalog. A key idea underlying our approach is that to bound the size of a formula we need to be able to bound the depth of a circuit (the circuits we consider have fan-in two). In particular, if $m$ is the input size, a circuit of depth $O(\log m)$ implies a polynomial-size formula, while a circuit lower bound of $\Omega(\log^{1+\epsilon} m)$ for some $\epsilon >0$ implies a super-polynomial lower bound. Hence, we can rephrase our initial question  as: {\em which Datalog programs admit circuits of depth $O(\log m)$, and which require circuits of super-logarithmic depth?} The question of constructing small-depth circuits is tied not only to the existence of polynomial-size formulas, but also on how easy it is to parallelize the computation of the polynomial. We should note here for comparison that the provenance polynomials for all $\UCQ$s admit $O(\log m)$-depth circuits (and thus polynomial-size formulas) when interpreted over any semiring.

\introparagraph{Our Contributions} We next summarize our contributions in this work.
\begin{itemize}
\item We first show (Section~\ref{sec:bounded}) that logarithmic-depth circuits exist for Datalog programs that are {\em bounded}. These are programs that can reach a fixpoint (over a semiring) in a constant number of iterations, independent of the input size. Thus, bounded programs have polynomial-size formulas. We then extend existing characterizations of boundedness~\cite{Naughton86} from the Boolean semiring to more general semirings. The latter result may be of independent interest.

\item Second, in Section~\ref{sec:rpq} we study Datalog programs with rules that have path bodies: these are called {\em basic chain Datalog programs} and correspond to solving a context-free reachability problem over a labeled graph (reachability via a Context-Free Grammar -- $\CFG$). When the corresponding $\CFG$ is a regular grammar, this class of programs corresponds to Regular Path Queries ($\RPQ$s). For basic chain Datalog, we show a dichotomy: the provenance polynomial has a polynomial-size formula if and only if the corresponding grammar is finite, i.e., accepts finitely many words. For $\RPQ$s, we can show a stonger dichotomy on the depth: circuits can have depth either $\Theta(\log m)$ or  $\Theta(\log m^2)$, with nothing in-between.

\item Third, we show in Section~\ref{sec:upper} a general upper bound on the circuit depth. In particular, we prove that Datalog programs with the {\em polynomial fringe property} admit circuits of depth only $O(\log^2 m)$ for any absorptive semiring. Briefly, the polynomial fringe property says that all proof trees of a fact have polynomially many leaves. All linear Datalog programs have this property, and so all linear programs have small-depth circuits.

\item Finally, in Section~\ref{sec:upper} we show a formula-size dichotomy for monadic connected linear Datalog for semirings that are absorptive and $\otimes$-idempotent. In particular, we show that boundedness is a necessary and sufficient condition for polynomial-size formulas.
\end{itemize}

Before we state and prove our main results, we define preliminary notions in Section~\ref{sec:prelim} and then present some basic results for circuits and formulas in Section~\ref{sec:basic}.
\section{Preliminaries}
\label{sec:prelim}

In this part, we introduce important terminology and background.

\subsection{Datalog}

We follow the notation in~\cite{KhamisNPSW24}. A {\em Datalog} program $\Pi$ consists of a set of rules, {each of the form}:
$$R_0(\bx_0) \obtainedfrom R_1(\bx_1) \wedge \cdots \wedge R_m(\bx_m),$$
where $R_0, \ldots, R_m$ are relation names (not necessarily distinct), and each $\bx_i$ is a tuple of variables and/or constants. The atom $R_0(\bx_0)$ is called the {\em head}, and the conjunction $R_1(\bx_1) \wedge \cdots \wedge R_m(\bx_m)$ is called the {\em body}. A relation, or equivalently predicate, name that occurs in the head of some rule in $\Pi$ is called an {\em intensional database predicate (IDB)}, otherwise it is called an {\em extensional database predicate (EDB)}. {With a slight abuse of notation, we also refer to IDBs and EDBs as relations \em{per se}.} All EDBs form the input database, which is denoted as $I$. For the output we adopt the {\em predicate} I/O convention where a designated IDB predicate, called the {\em target IDB}, is the output~\cite{HillebrandKMV91}. The finite set of all constants occurring in the input $I$ is called the active domain of $I$ and denoted as $\textsf{Dom}(I)$. 
A rule is called an {\em initialization rule} if its body contains no IDBs, and a {\em recursive rule} otherwise.

\introparagraph{Linear and monadic datalog} A rule is called {\em linear} if its body contains at most one IDB and a Datalog program is called {\em linear} if every rule is linear. The {\em arity} of a predicate is the total number of variables and constants it contains. A predicate is called {\em monadic} if it has arity 1, and a Datalog program is called monadic if every IDB is monadic (regardless of the arities of EDBs). 

\introparagraph{Grounding} A {\em grounding} of a Datalog rule is a rule where we instantiate each variable with a constant from the active domain $\textsf{Dom}(I)$. The {\em grounded program} of $\Pi$ consists of all possible groundings for all the rules in $\Pi$.

\begin{example}
Let $E$ be a binary relation that encodes whether two vertices are connected via an edge in a directed graph. Consider the following two Datalog programs:
$$\begin{aligned} & T(x,y) \obtainedfrom E(x,y) \qquad\qquad & U(x) \obtainedfrom A(x)  \\ 
& T(x,y) \obtainedfrom T(x,z) \wedge E(z, y)  \qquad\qquad\qquad & U(x) \obtainedfrom U(y) \wedge E(x,y)
\end{aligned}$$
The first program encodes the transitive closure of the graph ($\TC$) with target the IDB $T$. The second (monadic) program with target the IDB $U$ finds the reachable nodes starting from a set of nodes that satisfy the property $A$.
\end{example}

%In this paper, we will also study a class of Datalog programs called {\em Sirups}, which stands for SIngle recursive RUle Programs. These are Datalog programs with exactly only one recursive rule (and one or more initialization rules).

\introparagraph{Proof Trees} A {\em proof tree} for a fact records how this fact is derived from EDBs and rules in a Datalog program. Formally, we have the following definition.

\begin{definition}[Proof Tree]
Consider a Datalog program $\Pi$, an input database $I$ and a fact $\alpha$ in some IDB of $\Pi$. A proof tree of $\alpha$ w.r.t. $\Pi$ and $I$ is a finite labeled rooted tree $\tau=(V, E, \lambda)$, with $\lambda$ mapping each vertex to a fact, such that:
\begin{enumerate}
\item If $v \in V$ is the root, then $\lambda(v)=\alpha$.
\item If $v \in V$ is a leaf, then $\lambda(v) \in I$.
\item If $v \in V$ is an internal node with children $u_1, \ldots, u_n$, then there is a grounding of a rule of the form $\lambda(v) :- \lambda(u_1), \ldots, \lambda(u_n)$.
%$R_0\left(\bx_0\right):-R_1(\bx_1), \ldots, R_n(\bx_n)$ in $\Pi$ and a function $h: \bigcup_{i \in[n]} \bx_i \rightarrow \dom(I)$ such that $\lambda(v)=R_0(h(\bx_0))$, and $\lambda(u_i)=R_i(h(\bx_i))$ for each $i \in[n]$.
\end{enumerate}
\end{definition} 

We further say that a proof tree is {\em tight}~\cite{UllmanG88} if there are no internal nodes $v,v'$ that are in the same leaf-to-root path and $\lambda(v) = \lambda(v')$. Even though the number of a proof trees for a given fact may be infinite, there are only finitely many tight proof trees. 
%\sr{give example of proof trees?}
%This observation is critical for the next part.

\begin{example}\label{ex:TC}
    Consider the Datalog program $\TC$ and let $E$ be a binary relation instance shown in Table~\ref{t:E}. Graphically, this corresponds to the directed graph shown in Figure~\ref{fig:E}. A proof tree for the IDB fact $T(s,t)$ is depicted in Figure~\ref{fig:proof_tree}.
    
    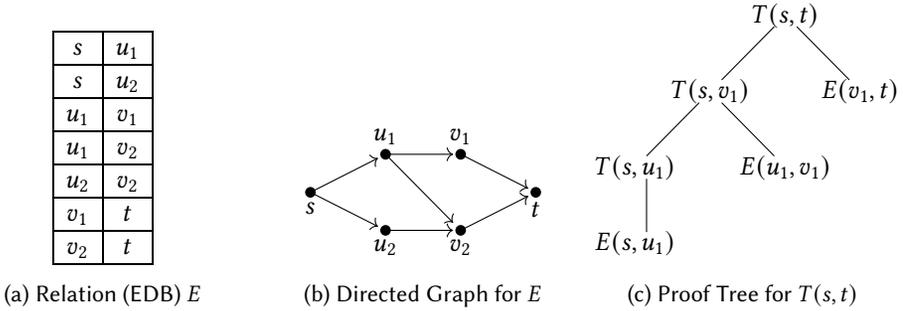
\begin{figure}
    \begin{subfigure}{0.3\linewidth}
    \centering
        \begin{tabular}{|c|c|}
        \hline
           $s$  & $u_1$ \\
        \hline
           $s$  & $u_2$ \\
        \hline
            $u_1$ & $v_1$ \\
        \hline 
            $u_1$ & $v_2$ \\
        \hline
            $u_2$ & $v_2$ \\
        \hline
            $v_1$ & $t$ \\
        \hline
            $v_2$ & $t$ \\
        \hline
        \end{tabular}
        \caption{Relation (EDB) $E$}
        \label{t:E}   
    \end{subfigure}
    \begin{subfigure}{0.3\linewidth}
        \centering
        \begin{tikzpicture}
            \fill (0,0.5) circle[radius=2pt] node[anchor=north]  {$s$};
            \fill (1,1) circle[radius=2pt] node[anchor=south]  {$u_1$};
            \fill (1,0) circle[radius=2pt] node[anchor=north]  {$u_2$};
            \fill (2,1) circle[radius=2pt] node[anchor=south]  {$v_1$};
            \fill (2,0) circle[radius=2pt] node[anchor=north]  {$v_2$};
            \fill (3,0.5) circle[radius=2pt] node[anchor=north] {$t$};
            \draw[->] (0,0.5) -- (0.9,0.95);
            \draw[->] (0,0.5) -- (0.9,0.05);
            \draw[->] (1,1) -- (1.9,1);
            \draw[->] (1,1) -- (1.9,0.1);
            \draw[->] (1,0) -- (1.9,0);
            \draw[->] (2,0) -- (2.9,0.45);
            \draw[->] (2,1) -- (2.9,0.55);
        \end{tikzpicture}
        \caption{Directed Graph for $E$}
        \label{fig:E} 
    \end{subfigure}
    \begin{subfigure}{0.3\linewidth}
        \centering
        \begin{tikzpicture}
            \draw (0,0) node  {$T(s,t)$};
            \draw (-1,-1) node  {$T(s, v_1)$};
            \draw (1,-1) node  {$E(v_1, t)$};
            \draw (-2,-2) node {$T(s,u_1)$};
            \draw (0,-2) node {$E(u_1,v_1)$};
            \draw (-2,-3) node {$E(s,u_1)$};
            \draw (-0.15,-0.15) -- (-0.85,-0.85);
            \draw (0.15,-0.15) -- (0.85,-0.85);
            \draw (-1.15, -1.15) -- (-1.85, -1.85);
            \draw (-0.85, -1.15) -- (-0.15, -1.85);
            \draw (-1.85,-2.15) -- (-1.85,-2.85);
        \end{tikzpicture}
        \caption{Proof Tree for $T(s,t)$}
        \label{fig:proof_tree}
    \end{subfigure}
    \caption{{Example EDB relation and one proof tree for the fact $T(s,t)$ of the Datalog program that computes the transitive closure of a graph ($\TC$). There are two other proof trees for $T(s, t)$.}}
    \end{figure}
\end{example}

\subsection{Semirings}

A (commutative) \emph{semiring} is an algebraic structure $\mathbb{S}:=(\mathbf{D}, \oplus, \otimes, \mathbf{0}, \mathbf{1})$, where $\oplus$ and $\otimes$ are the \emph{addition} and \emph{multiplication} in $\mathbb{S}$ such that: (1) $(\mathbf{D}, \oplus, \mathbf{0})$ and $(\mathbf{D}, \otimes, \mathbf{1})$ are commutative monoids, (2) $\otimes$ is distributive over $\oplus$, (3) $\mathbf{0}$ is an annihilator of $\otimes$ in $\mathbf{D}$.  Notable examples are the Boolean semiring $\bB:=$ ( $\{\textsc{False},\textsc{True}\}, \vee, \wedge$, \textsc{False}, \textsc{True} $)$, the Tropical semiring $\mathbb{T}:=(\mathbb{N}\cup \{+\infty\}, \min ,+,+\infty, 0)$, the $\text{Tropical}^-$ semiring $\mathbb{T}^-:=(\mathbb{Z}\cup \{+\infty\}, \min ,+,+\infty, 0)$, and the Counting semiring $\mathbb{C}:=(\mathbb{N},+, \cdot, 0,1)$.\footnote{$\mathbb{N}$ stands for the natural numbers, and $\mathbb{Z}$ for integers.}

We say that $\mathbb{S}$ is {\em idempotent} if $x \oplus x = x$ for every $x \in \mathbf{D}$ and that $\mathbb{S}$ is {\em absorptive} if $\mathbf{1} \oplus x = \mathbf{1}$ for every $x \in \mathbf{D}$. An absorptive semiring is idempotent, since $x \oplus x = x\otimes (\mathbf{1}+\mathbf{1}) = x\otimes \mathbf{1} = x$ for any $x \in \mathbf{D}$, not necessarily the other way around. For example, $\bT$ is absorptive while $\bT^-$ is idempotent but not absorptive.
A semiring $\bS$ is {\em positive} if the mapping $h: \bS \rightarrow \bB$ defined by $h(x):= \textsc{False}$ if $x = \mathbf{0}$ and $h(x):= \textsc{True}$ if $x \neq \mathbf{0}$ is a semiring homomorphism\footnote{That is, the mapping respects the $\otimes$ and $\oplus$ operations in the semirings.}~\cite{Green11}. All semirings discussed in this paper will be positive semirings. 

Finally, we say that a semiring $\bS$ is {\em naturally ordered} if the binary  relation $x \leq_\bS y$ defined as $\exists z: x \oplus z =y$ is a partial order. The semirings $\bB, \mathbb{T}, \mathbb{C}$ are all naturally ordered.

%\section{Datalog, Semirings, and Circuits}

\subsection{Datalog Semantics over Semirings}

{We provide a short introduction to Datalog semantics over semirings; interested readers should consult~\cite{KhamisNPSW24} and references therein for a more formal definition. For a fixed program and an input database,} let $N$ denote the total number of IDB facts and let $\bS$ be a naturally ordered semiring.  We now define the immediate consequence operator (ICO) as a map $f: \bS^N \rightarrow \bS^N$ that maps each IDB fact $\alpha$ to a new value by $\oplus$-summing over all grounded rules with $\alpha$ as the head, and for each such rule taking the $\otimes$-product of its facts. If the semiring $\bS$ is naturally ordered, then the least fixpoint is the least fixed point of $f$ under the same partial order extended to $\bS^n$ componentwise. To compute this, we can apply {\em naive evaluation}: we start with $x \gets \mathbf{0}$ for every IDB fact, and repeatedly apply $f$  until a fixpoint is reached. 

This process may not converge in general. However, it always converges in a finite number of steps if the semiring is $p$-stable~\cite{KhamisNPSW24}. A semiring is {\em $p$-stable} for some integer $p \geq 0$ if for any element $u$ in the semiring, $\mathbf{1} \oplus u \oplus u^2 \oplus \dots \oplus u^p = \mathbf{1} \oplus u \oplus u^2 \oplus \dots \oplus u^{p+1}$. Note that an absorptive semiring is exactly a 0-stable semiring.

\subsection{Provenance Polynomials}\label{sec:provenance_polynomial}

%\introparagraph{Provenance Polynomials} 
In this paper, we follow the convention in provenance polynomials and consider each EDB fact to be {\em tagged by a variable of an semiring element}~\cite{GreenKT07}. That is, we associate each EDB fact $\alpha$ with a variable $x_\alpha$ representing the value of that fact over a semiring $\bS$. Given a proof tree $\tau$ with leaves $\mathcal{L}(\tau)$, the {\em monomial of a proof tree} is the product of all variables of the leaves in that proof tree. The {\em provenance polynomial of a fact} is  the sum of  monomials of all possible proof trees for the fact:
$$ \bigoplus_{\tau:\text{ proof tree}} \bigotimes_{t \in \mathcal{L}(\tau)} x_{\lambda(t)}$$
There could be infinitely many proof trees for a fact. However, if we want to interpret the polynomial only over absorptive semirings, it is equivalent to consider only the (finitely many) tight proof trees. This gives us the following finite polynomial:
$$ p^I_\Pi(\alpha) := \bigoplus_{\tau:\text{ tight proof tree}} \bigotimes_{t \in \mathcal{L}(\tau)} x_{\lambda(t)}$$

\begin{proposition} \label{prop:tight:tree}
    Consider any Datalog program $\Pi$, input database $I$ and an IDB fact $\alpha$. Let $\bS$ be any absorptive semiring. Then, $p^I_\Pi(\alpha)$ is equivalent to $ \bigoplus_{\tau:\text{ proof tree}} \bigotimes_{t \in \mathcal{L}(\tau)} x_{\lambda(t)}$ over $\bS$.
\end{proposition}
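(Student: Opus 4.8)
The plan is to establish the equality in the natural order of $\bS$ via a ``pumping-down'' argument on proof trees, using absorption to discard every non-tight contribution. Write $M(\tau) := \bigotimes_{t \in \mathcal{L}(\tau)} x_{\lambda(t)}$ for the monomial of a proof tree $\tau$, let $T := p^I_\Pi(\alpha) = \bigoplus_{\tau \text{ tight}} M(\tau)$ (a \emph{finite} sum, since there are only finitely many tight proof trees), and let $P := \bigoplus_{\tau \text{ proof tree}} M(\tau)$ be the full sum. Since $\bS$ is absorptive it is idempotent (as already observed in the excerpt), and an idempotent semiring is always naturally ordered with $a \leq_\bS b \iff a \oplus b = b$; in particular $\leq_\bS$ is a genuine partial order, so it suffices to show $T \leq_\bS P$ and $P \leq_\bS T$. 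The first inequality is immediate, because every tight proof tree is a proof tree, so $T$ is among the partial sums constituting $P$.

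The heart of the argument is the claim that for every proof tree $\tau$ there is a tight proof tree $\tau^\ast$ with $M(\tau^\ast) \mid M(\tau)$, i.e.\ $M(\tau) = M(\tau^\ast) \otimes r$ for some product $r$ of variables. I would prove this by induction on the number of nodes of $\tau$. If $\tau$ is already tight, take $\tau^\ast = \tau$. Otherwise there are internal nodes $v, v'$ on a common leaf-to-root path with $\lambda(v) = \lambda(v')$ and $v'$ a proper descendant of $v$; form $\tilde\tau$ by replacing the subtree rooted at $v$ with the subtree rooted at $v'$. Because $\lambda(v) = \lambda(v')$, the splicing preserves the root, leaf, and rule-grounding conditions, so $\tilde\tau$ is a valid proof tree; moreover its leaf multiset is contained in that of $\tau$ (we deleted exactly the leaves lying in the subtree of $v$ but not of $v'$), whence $M(\tilde\tau) \mid M(\tau)$. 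As $\tilde\tau$ has strictly fewer nodes, the induction hypothesis yields a tight $\tau^\ast$ with $M(\tau^\ast) \mid M(\tilde\tau) \mid M(\tau)$.

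Absorption then closes the gap. Writing $M(\tau) = M(\tau^\ast) \otimes r$, we compute $M(\tau^\ast) \oplus M(\tau) = M(\tau^\ast) \otimes (\mathbf{1} \oplus r) = M(\tau^\ast) \otimes \mathbf{1} = M(\tau^\ast)$, where the middle step uses $\mathbf{1} \oplus r = \mathbf{1}$. Since $\tau^\ast$ is one of the tight trees, this gives $T \oplus M(\tau) = T$ for \emph{every} proof tree $\tau$. Consequently any finite partial sum $\bigoplus_{\tau \in F} M(\tau)$ satisfies $T \oplus \bigoplus_{\tau \in F} M(\tau) = T$, so it is $\leq_\bS T$; in other words, $T$ is an upper bound of all finite partial sums.

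The one subtlety I expect to be the main obstacle is making the passage from finite partial sums to the genuinely infinite sum $P$ rigorous, since a priori $P$ is not a finite expression. I would handle this by pinning down how $P$ is interpreted over $\bS$: because $\bS$ is absorptive (equivalently $0$-stable), naive evaluation of the immediate consequence operator reaches its least fixpoint in finitely many steps, and that fixpoint value is exactly the order-theoretic supremum of the finite partial sums $\bigoplus_{\tau \in F} M(\tau)$. Granting this, the previous paragraph shows $T$ is an upper bound of all these partial sums, so $P \leq_\bS T$, while $T$ is itself attained by the partial sum over all tight trees, giving $T \leq_\bS P$. Antisymmetry of $\leq_\bS$ then yields $P = T$. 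The delicate part is justifying that the fixpoint genuinely coincides with this supremum and that $0$-stability forces convergence after finitely many iterations; everything else reduces to the combinatorial splicing together with the one-line absorption computation.
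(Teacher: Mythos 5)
Your proposal is correct and follows essentially the same route as the paper: the paper's proof is exactly your splicing argument (replace the subtree at $v$ by the subtree at its equally-labeled descendant $v'$, iterate to reach a tight tree whose monomial divides the original) followed by the one-line absorption computation $m \oplus m \otimes n = m \otimes (\mathbf{1} \oplus n) = m$. Your additional order-theoretic scaffolding and the explicit treatment of how the infinite sum is interpreted (as the finitely-attained least fixpoint, equal to the supremum of finite partial sums) only make rigorous what the paper leaves implicit, so there is nothing to flag.
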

\begin{proof}
To show this, we will prove that every monomial that corresponds to a non-tight proof tree gets absorbed by the monomial of a tight proof tree. Indeed, observe that any internal node of a proof tree corresponds to an IDB fact. If a proof tree is not tight, then there exist two internal nodes $n_1, n_2$ along a leaf-to-root path sharing the same IDB fact. Without loss of generality, assume that $n_2$ is in the subtree of $n_1$. We can then replace the subtree of $n_1$ by the subtree of $n_1$. Clearly, the resulting tree is a proof tree of $\alpha$ and, more importantly, its leaves is a strict sub-multi-set of the original proof tree's. We repeat this process until we end up with a tight proof tree for $\alpha$. {The constructed monomial of a tight proof tree absorbs the monomial of the given non-tight proof tree due to the algebraic equation $m \oplus m \otimes n  = m \otimes (\mathbf{1}\oplus n) = m \otimes \mathbf{1} = m$ where the second equality applies the absorption rule.}
\end{proof}

As an example, the provenance polynomial of a fact in $\TC$ over the Tropical semiring $\bT$ computes the minimum of the weight of all paths for a particular pair of vertices, where the weight of a path is the sum of all its edge weights. The provenance polynomial for $T(s,t)$ in Example~\ref{ex:TC} is 
$$
p^{E}_{\TC}(T(s,t)) = (x_{s,u_1} \otimes x_{u_1, v_1} \otimes x_{v_1,t}) \oplus (x_{s,u_1} \otimes x_{u_1, v_2} \otimes x_{v_2,t}) \oplus (x_{s, u_2} \otimes x_{u_2, v_2} \otimes x_{v_2, t}).
$$
\subsection{Circuits and Formulas over Semirings for Datalog}

%\introparagraph{Circuits and Formulas over Semirings}
A \emph{circuit} $F$ over a semiring $\mathbb{S}$ is a Directed Acyclic Graph (DAG) with input nodes (with fan-in 0) variables representing EDB facts in a set $S_x$ and the constants $\mathbf{0}, \mathbf{1}$. Every other node is labelled by $\oplus$ or $\otimes$ and has fan-in 2; these nodes are called \emph{$\oplus$-gates} and \emph{$\otimes$-gates}, respectively. An input gate of $F$ is any gate with fan-in 0 and an output gate of $F$ is any gate with fan-out 0. The {\em size} of the circuit $F$, denoted as $|F|$, is the number of gates in $F$. The {\em depth} of $F$ is the length of the longest path from an input to an output node. 

A {\em formula} over a semiring $\mathbb{S}$ is a circuit over $\mathbb{S}$ such that the outdegree of every gate is exactly one (with the exception of the output gate, which has degree 0). In other words, the difference between a formula and a circuit is that any gate output cannot be reused across multiple gates. The depth of a formula is defined analogously. 
%Note that circuit and formula over a semiring is {\em monotone} in the literature~\cite{}.

A {\em polynomial over a semiring} $\bS$ is a finite expression involving $\oplus$ and $\otimes$ of variables representing elements in $\bS$. Its canonical (or DNF) form is rearranging the expression to be a sum $\oplus$ of product $\otimes$ of variables. Two polynomials $p_1$ and $p_2$ are said to be {\em equivalent over a semiring} $\bS$, denoted as $p_1 \equiv_{\bS} p_2$, if $p_1$ and $p_2$ define the same function over $\bS$. {By Proposition~\ref{prop:tight:tree}, a provenance polynomial over any absorptive semiring $\bS$ has an equivalent polynomial over $\bS$, and thus the corresponding function over $\bS$ is well-defined.}

A circuit (or a formula) over $\bS$ is said to {\em produce} a polynomial over $\bS$ if its output gate coincides with the polynomial as its canonical form (up to rearranging the ordering of monomials and variables therein) when evaluating the circuit in the bottom-up fashion. A circuit (or a formula) is said to {\em compute} a polynomial $p$ over $\bS$ if the polynomial it produces is equivalent to $p$ over $\bS$.

We take the point of view that a polynomial over a semiring is a formal expression and its $\oplus$ and $\otimes$ operations are subject to different interpretations over different semirings. 

\section{Known Bounds on Circuits/Formulas for Datalog over Semirings}
\label{sec:basic}

In this section, we present some basic lower and upper bounds that are known or can be easily derived from the state-of-the-art.

For a general Datalog program $\Pi$ over an absorptive semiring $\bS$ and an IDB fact $\alpha$, it could be that the number of proof trees is exponential in the input size. Hence, the provenance polynomial $p^I_\Pi(\alpha)$, written in DNF, could be exponentially large. But, it was shown~\cite{DeutchMRT14} that a circuit representation of $p^I_\Pi(\alpha)$ is always of polynomial size. 

\begin{theorem}[Deutch et. al~\cite{DeutchMRT14}]
Let $\Pi$ be a Datalog program and $\bS$ be any absorptive semiring. Then, for any input $I$ and any IDB fact $\alpha$, we can construct a circuit for the provenance polynomial $p^I_\Pi(\alpha)$ over $\bS$ with polynomial size in $I$.
\end{theorem}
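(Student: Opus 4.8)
The plan is to construct the circuit by unfolding the naive evaluation procedure over the semiring, exploiting the fact that an absorptive semiring is $0$-stable so that the fixpoint is reached after a bounded (in terms of the number of IDB facts) number of iterations. Concretely, recall that there are $N$ IDB facts in the grounded program, and $N$ is polynomial in $|I|$ since we fix the program $\Pi$ (data complexity). I would track, for each IDB fact $\beta$ and each iteration $k$, a value $v_\beta^{(k)}$ representing the partial provenance obtained after $k$ rounds of applying the immediate consequence operator starting from $\mathbf{0}$. The circuit has one gate (really, one small gadget of gates) computing $v_\beta^{(k)}$ for each pair $(\beta, k)$, and the key observation is that $v_\beta^{(k)}$ is expressible in terms of the values $v_\gamma^{(k-1)}$ of the previous layer via the grounded rules with head $\beta$.

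The main steps, in order, are as follows. First I would fix the grounded program, so that every rule is variable-free and there are polynomially many grounded rules, each with a constant-size body (the body length is bounded by the program, not the input). Second, I would build the circuit layer by layer: layer $0$ assigns $v_\beta^{(0)} = \mathbf{0}$ for every IDB fact, and EDB facts $\alpha$ are represented by their tagging variables $x_\alpha$ as input gates. For the transition from layer $k-1$ to layer $k$, for each IDB fact $\beta$ I collect all grounded rules having $\beta$ as head; for each such rule I form the $\otimes$-product over its body atoms (using $x_\alpha$ for EDB atoms and $v_\gamma^{(k-1)}$ for IDB atoms), and then $\oplus$-sum these products together. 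Since both products and sums are over constantly- or polynomially-many terms, each can be realized by a balanced binary tree of fan-in-two gates, contributing only polynomially many gates per layer. Third, I would invoke $0$-stability (absorptiveness) to argue that the evaluation reaches a fixpoint after at most $N$ iterations: once a fact reaches its fixpoint value it never changes, and because the semiring is absorptive (hence idempotent and, when naturally ordered, the values are monotone nondecreasing in the natural order), each additional layer can only add new monomials, and no new \emph{tight} proof-tree monomial can require more than $N$ nested IDB derivations. Thus taking $K = N$ layers suffices, and the output gate is $v_\alpha^{(N)}$ for the target fact $\alpha$.

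For the correctness argument I would prove by induction on $k$ that the polynomial produced at gate $v_\beta^{(k)}$ is equivalent over $\bS$ to the $\oplus$-sum over all proof trees of $\beta$ of height at most $k$ of their monomials. The base case is immediate, and the inductive step follows because a height-$\le k$ proof tree of $\beta$ decomposes into a choice of grounded rule with head $\beta$ together with height-$\le (k-1)$ subproofs of the body IDB atoms, matching exactly the product-of-sums structure at the gate; distributivity of $\otimes$ over $\oplus$ converts this into the required sum-of-products form. I would then use Proposition~\ref{prop:tight:tree} to pass from all proof trees to tight proof trees, and combine it with the fact that every tight proof tree has height at most $N$ (no IDB fact repeats on a root-to-leaf path) to conclude that $v_\alpha^{(N)}$ computes $p^I_\Pi(\alpha)$ over $\bS$.

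The counting of the total size is routine: there are $N$ layers and $N$ facts per layer, each gadget has size polynomial in the number of grounded rules and the (constant) body length, so $|F|$ is polynomial in $|I|$. The step I expect to be the main obstacle is pinning down the correct iteration bound and justifying it cleanly. Showing that exactly $N$ layers suffice relies on the interplay between $0$-stability of the absorptive semiring and the height bound on tight proof trees; one must be careful that the layered construction, which sums over \emph{all} proof trees up to a height bound rather than directly over tight ones, still stabilizes within $N$ iterations. The cleanest route is probably to argue monotonicity of $v_\beta^{(k)}$ in the natural order (available because $\bS$ is naturally ordered and absorptive), establish that the sequence of layer values is nondecreasing, and then observe that once two consecutive layers agree the computation has reached the least fixpoint, which by absorptiveness it must do within a number of steps bounded by the structure of tight proof trees. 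Everything else, the gate construction and the inductive correctness proof, is then a matter of careful bookkeeping.
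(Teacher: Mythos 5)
Your proposal is correct and follows essentially the construction the paper relies on: the paper gives no proof of its own for this statement (it is imported from Deutch et al., with only the remark that the circuit size is that of the grounded program and the depth is the number of IDB facts times a logarithmic factor), and your layered unrolling of naive evaluation for $N$ iterations---with correctness via induction on height, the height-$\le N$ bound on tight proof trees, and Proposition~\ref{prop:tight:tree}---is exactly that construction, mirroring also the paper's own proof of Theorem~\ref{thm:bounded:depth} for the bounded case. One simplification worth noting: the stabilization/monotonicity argument you flag as the main obstacle is unnecessary, since once you know every tight proof tree has height at most $N$ and every non-tight monomial is absorbed by a tight one, the equivalence $v_\alpha^{(N)} \equiv_{\bS} p^I_\Pi(\alpha)$ follows directly without any claim that the layer values have literally reached the least fixpoint.
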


In particular, the size of the above constructed circuit is the size of the grounded program, and the depth is the number of IDB facts in the grounded program times a logarithmic factor.
The question on whether we can construct a polynomial-size formula for the provenance polynomial is more complicated. Interestingly, the question about the minimum size of the formula is tied to finding the minimum depth of a circuit, as the following result shows.

\begin{theorem}[Wegener~\cite{Wegener83}]\label{thm:size_depth_tradeoff}
Let $F$ be a formula over the Boolean semiring of size $|F|$. Then, there exists an equivalent formula (circuit) of depth $O(\log |F|)$.
\end{theorem}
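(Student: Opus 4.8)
The plan is to use the classical tree-balancing argument (Spira--Brent): since a formula has fan-out one, its underlying DAG is a \emph{tree}, so we can repeatedly contract a balanced subtree and thereby drive the depth down to logarithmic. Concretely, I would prove by strong induction on $n = |F|$ the sharper statement that every formula over $\bB$ of size $n$ is equivalent to one of depth at most $c \log n$ for a fixed constant $c$, from which the claimed $O(\log |F|)$ bound is immediate. Let $d(n)$ denote the minimal depth achievable over all formulas of size at most $n$; the goal is the recurrence $d(n) \le d(2n/3) + O(1)$.

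First I would establish the existence of a balanced separator: in the tree underlying $F$ there is a gate $v$ whose subformula $G$ (rooted at $v$) has size in the interval $[n/3,\, 2n/3]$. This follows from the standard ``heavy-path'' descent --- start at the root and always move to the child whose subtree is larger. Since each gate has fan-in two, the larger child of a gate of subtree-size $s > 2n/3$ has size at least $(s-1)/2 > n/3 - 1/2$, hence at least $n/3$; and subtree sizes strictly decrease along the path, so the first gate whose subtree drops to size $\le 2n/3$ necessarily still has size $\ge n/3$.

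Next I would decompose $F$ around $G$. Writing $y$ for the value computed at $v$, the single leaf-occurrence of $G$ means that $F$ is affine in $y$: following the path from $v$ to the root and using distributivity, $F \equiv (A \wedge y) \vee B$ for subformulas $A, B$ not mentioning $G$. Equivalently, letting $F|_{y=0}$ and $F|_{y=1}$ denote $F$ with the subformula $G$ replaced by the constants $\mathbf{0}$ and $\mathbf{1}$ and then simplified, one has the identity
$$ F \;\equiv\; \bigl(F|_{y=1} \wedge y\bigr) \vee F|_{y=0}, $$
which I would verify using only distributivity, idempotence, and absorption (all valid in $\bB$). The crucial bookkeeping is on sizes: $F|_{y=0}$ and $F|_{y=1}$ are each obtained from $F$ by deleting the subtree $G$ (of size $\ge n/3$) and propagating the constant through the constant-simplification rules, which only removes gates, so each has size at most $n - |G| \le 2n/3$; likewise $|G| \le 2n/3$.

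Finally I would close the recursion. By the induction hypothesis I can balance $G$, $F|_{y=0}$, and $F|_{y=1}$ --- all of size at most $2n/3$ --- into formulas of depth at most $d(2n/3)$, and then reassemble them via the displayed identity at the cost of two extra gate-levels (one $\wedge$ and one $\vee$). This yields $d(n) \le d(2n/3) + 2$, which solves to $d(n) = O(\log n)$ and completes the induction. I expect the main obstacle to be getting the separator and the size accounting exactly right, so that all three recursive subformulas are genuinely a constant factor smaller than $F$; the algebraic identity and the recurrence itself are then routine. I would also remark that the same affine decomposition $F \equiv (A \otimes y) \oplus B$ holds over any absorptive semiring, so the construction is not special to $\bB$.
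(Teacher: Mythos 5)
Your proposal is correct, and it reconstructs essentially the argument behind the result the paper only cites (Wegener~\cite{Wegener83}; the underlying technique is Spira--Brent tree balancing), so there is no in-paper proof to diverge from. The one non-routine point in this setting --- that the circuits here have no negation, so the classical decomposition $F \equiv (F|_{y=1}\wedge y)\vee(F|_{y=0}\wedge \neg y)$ is unavailable --- you handle correctly: monotonicity (equivalently, absorption, since $B \oplus (B\otimes y) = B\otimes(\mathbf{1}\oplus y) = B$) validates the one-sided identity $F \equiv (F|_{y=1}\wedge y)\vee F|_{y=0}$, and your closing remark that the affine decomposition extends to any absorptive semiring is likewise sound because fan-out one makes $F$ linear in the value $y$ of the separator subformula. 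The remaining details (the $[n/3,2n/3]$ separator via heavy-path descent, the size bound $n-|G|$ for the restrictions, and the recurrence $d(n)\le d(2n/3)+2$) are standard and accounted for correctly up to harmless additive constants.
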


In other words, to show super-polynomial lower bounds for formulas over the Boolean semiring, it suffices to show that circuits have super-logarithmic depth. For the other direction, we have the following folklore proposition:

\begin{proposition}
Let $F$ be a circuit over any semiring of depth $d$. Then, there exists an equivalent formula of size $2^d$ and depth $d$.
\end{proposition}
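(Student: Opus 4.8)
The plan is to convert a depth-$d$ circuit into an equivalent formula by the standard \emph{unfolding} (or tree-expansion) of the circuit's DAG structure. The key observation is that the only difference between a circuit and a formula is that a circuit may reuse the output of a gate as input to several other gates (fan-out $>1$), whereas a formula requires fan-out exactly one. So the natural approach is to eliminate sharing by duplicating every shared subcircuit, producing a tree.

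First I would argue by induction on the depth $d$. For the base case $d=0$, the circuit is a single input gate (a variable or one of $\mathbf{0}, \mathbf{1}$), which is already a formula of size $2^0 = 1$ and depth $0$. For the inductive step, consider the output gate $g$ of a circuit $F$ of depth $d \geq 1$; since every internal gate has fan-in $2$, $g$ is an $\oplus$- or $\otimes$-gate with two input wires feeding from subcircuits $F_1$ and $F_2$, each of depth at most $d-1$. By the induction hypothesis, each $F_i$ has an equivalent formula $F_i'$ of size at most $2^{d-1}$ and depth at most $d-1$. I would then form the formula $F'$ by combining $F_1'$ and $F_2'$ with a fresh copy of the gate $g$ at the top. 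Because $F_1'$ and $F_2'$ are now disjoint trees, no gate is shared, so $F'$ is indeed a formula; its depth is at most $(d-1)+1 = d$, and its size is at most $2^{d-1} + 2^{d-1} + 1 \leq 2^d$ (counting the new top gate). Equivalence over the semiring $\bS$ is immediate since unfolding replicates identical subcomputations and the value each gate computes is unchanged.

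I do not expect a serious obstacle here, as the statement is folklore and the construction is routine; the one point worth stating carefully is that equivalence is preserved over \emph{any} semiring, not just an absorptive one, which holds because duplicating a subcircuit does not alter the function it computes—each gate is evaluated bottom-up to the same element regardless of how many gates consume its output. The mild bookkeeping concern is the exact size bound: a naive count gives $2^{d-1}+2^{d-1}+1 = 2^d + 1$, so to land exactly at $2^d$ one either absorbs the additive constant into the leaves (e.g., counting only non-input gates, or noting that a depth-$d$ binary tree has at most $2^d$ leaves and hence at most $2^d$ gates of any fixed type up to a constant) or states the bound as $O(2^d)$. Since the proposition is used only to certify that small depth yields small formula size, this constant is immaterial, and I would simply note the bound follows from the fact that a binary tree of depth $d$ has at most $2^d$ leaves.
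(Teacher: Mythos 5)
Your proof is correct: the paper states this proposition as folklore and gives no proof at all, and your unfolding/tree-expansion argument by induction on depth is precisely the standard argument the authors are implicitly invoking. Your handling of the size bookkeeping is also apt --- the clean way to land exactly at $2^d$ is to count leaves (or, equivalently, note that the stated bound is meant up to the constant slack, which is immaterial for the paper's use of the proposition, namely that logarithmic depth yields polynomial formula size).
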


This means that circuits of logarithmic depth imply polynomial-size formulas. Combined, we obtain that a Boolean function has a formula of size polynomial in the input size $n$ if and only if it has circuits of depth $O(\log n)$ over $\mathbb{B}$. 

Hence, we can apply existing depth lower bounds for our purposes. A celebrated result by Karchmer and Widgerson~\cite{KarchmerW90} asserts that any monotone Boolean circuit that decides whether $t$ is reachable from $s$ in a directed graph with $n$ vertices must have depth $\Omega(\log^2 n)$. We formally state this result in the language of provenance polynomials. Recall that $\TC$ is the Datalog program for transitive closure. Moreover, define a {\em $(\ell, n)$-layered directed graph} to be a directed graph with $n$-layers, each layer having $\ell$ vertices, such that there only exist directed edges between two vertices that are in different but consecutive layers.

\begin{theorem}[Karchmer \& Wigderson~\cite{KarchmerW90}]\label{thm:Karchmer_Widgerson}
For any $n>0$, there exists an input $I$ that is a $(n^{0.1}, n)$-layered directed graph and a fact $\alpha = T(s,t)$ such that any circuit that computes the provenance polynomial $p^I_{\TC}(\alpha)$ over $\bB$ requires $\Omega(\log^2 n)$ depth.
\end{theorem}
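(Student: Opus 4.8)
The plan is to recognize this statement as a restatement, in the language of provenance polynomials, of Karchmer and Wigderson's monotone depth lower bound for $s$-$t$ connectivity, and to supply the translation between the two frameworks while importing the hard combinatorial bound from~\cite{KarchmerW90}. Concretely, I would take $I$ to be the \emph{complete} $(n^{0.1},n)$-layered directed graph, i.e.\ the instance in which every edge between consecutive layers is present and hence tagged by its own variable $x_{u,v}$. The first step is to identify what $p^I_{\TC}(T(s,t))$ computes over $\bB$: since $\oplus$ is $\vee$ and $\otimes$ is $\wedge$ in $\bB$, each monomial is the conjunction of the edge variables along a path from $s$ to $t$ (by Proposition~\ref{prop:tight:tree} it suffices to consider tight proof trees, which for $\TC$ on a layered graph are exactly the simple $s$-$t$ paths), and the whole polynomial is $\TRUE$ precisely when some such path has all of its edges set to $\TRUE$. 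In other words, as a function of the edge variables, $p^I_{\TC}(T(s,t))$ is exactly the monotone reachability predicate $\STCON_{s,t}$ on the layered graph.

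Next I would observe that a circuit over $\bB$ computing this polynomial is nothing other than a monotone Boolean circuit for $\STCON_{s,t}$. The only gate types available in a circuit over $\bB$ are $\oplus$-gates ($\vee$) and $\otimes$-gates ($\wedge$), the inputs are the edge variables together with $\mathbf{0},\mathbf{1}$, and there is no negation; therefore any depth-$d$ circuit over $\bB$ computing $p^I_{\TC}(T(s,t))$ is a depth-$d$ monotone circuit deciding layered connectivity, and conversely. Consequently the minimal circuit depth over $\bB$ coincides, up to constant factors, with the monotone circuit depth of layered $s$-$t$ connectivity.

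With this correspondence established, the theorem follows by invoking the Karchmer--Wigderson machinery. Their characterization equates the monotone circuit depth of a monotone function $f$ with the deterministic communication complexity of the associated Karchmer--Wigderson game, where one player holds a $1$-input of $f$, the other a $0$-input, and they must agree on a coordinate that is $1$ in the first input and $0$ in the second. For $\STCON_{s,t}$ restricted to $(n^{0.1},n)$-layered graphs this becomes the connectivity (Fork) game: the players must locate an edge present in a connected instance but absent in a disconnected one. Karchmer and Wigderson lower-bound the communication complexity of this game by $\Omega(\log^2 n)$, which, through the correspondence above, yields the claimed $\Omega(\log^2 n)$ depth bound for any circuit computing $p^I_{\TC}(T(s,t))$ over $\bB$.

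The genuinely hard part is exactly this last ingredient, the $\Omega(\log^2 n)$ communication lower bound for the connectivity game on layered graphs, which is the technical core of~\cite{KarchmerW90} and is imported rather than reproved; it rests on a round-by-round adversary argument exploiting the layered structure. The only things left to verify on our side are routine: that restricting attention to tight proof trees does not change the Boolean value (Proposition~\ref{prop:tight:tree}), that over a layered DAG proof trees correspond cleanly to $s$-$t$ paths with no cycles to worry about, and that the parameters of $n^{0.1}$ vertices per layer over $n$ layers fall within the regime for which the $\Omega(\log^2 n)$ bound is stated.
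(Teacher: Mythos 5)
Your proposal is correct and takes essentially the same route as the paper, which states this theorem as a direct import from Karchmer--Wigderson without proving it: your contribution is exactly the translation the paper's restatement presupposes, namely that $p^I_{\TC}(T(s,t))$ over $\bB$ on the complete $(n^{0.1},n)$-layered instance is the monotone $\STCON$ function, that circuits over $\bB$ are precisely monotone Boolean circuits, and that the $\Omega(\log^2 n)$ bound then follows from the Karchmer--Wigderson communication-game (FORK) lower bound, which both you and the paper import rather than reprove. No gaps; your handling of tight proof trees via Proposition~\ref{prop:tight:tree} and of the width parameter $n^{0.1}$ (giving $\log(n^{0.1})\cdot\log n = \Omega(\log^2 n)$) is sound.
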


We now observe that $st$-connectivity over a layered directed graph actually admits a linear size circuit. In other words, Theorem~\ref{thm:Karchmer_Widgerson} demonstrates the trade-off between linear-size circuit and superpolynomial-size formula.

\begin{theorem}
    Over any semiring $\bS$, the provenance polynomial for $st$-connectivity over a layered directed graph admits a linear size circuit with linear depth.
\end{theorem}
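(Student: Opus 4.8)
The plan is to construct a circuit directly from the layered structure of the graph, exploiting the fact that in an $(\ell, n)$-layered directed graph every $s$-to-$t$ path visits each layer at most once and therefore has length at most $n-1$. First I would observe that the provenance polynomial $p^I_{\TC}(T(s,t))$ is naturally organized by the layer structure: for a vertex $v$ in layer $i$, let $P_i(v)$ denote the provenance polynomial capturing all paths from $s$ to $v$. Because edges only go between consecutive layers, a path reaching $v$ in layer $i$ must pass through some vertex $u$ in layer $i-1$, so we get the recurrence
$$
P_i(v) = \bigoplus_{u \text{ in layer } i-1,\ (u,v) \in E} P_{i-1}(u) \otimes x_{u,v},
$$
with the base case $P_0(s) = \mathbf{1}$ (and $P_0(v) = \mathbf{0}$ for other sources, or more generally $P_i(v)$ seeded from the relevant initialization facts). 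The target polynomial is then $P_{n-1}(t)$ up to relabeling of which layer contains $t$.

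The key step is to realize this recurrence as a circuit of fan-in two. Each value $P_i(v)$ is computed as an $\oplus$-sum over at most $\ell$ incoming edges, where each summand is a single $\otimes$-gate combining $P_{i-1}(u)$ with the edge variable $x_{u,v}$. Since there are at most $\ell$ terms, the $\oplus$-sum uses a balanced binary tree of $\oplus$-gates of size $O(\ell)$ and depth $O(\log \ell)$ to combine them. Crucially, the subcircuits computing the $P_{i-1}(u)$ are shared across all vertices $v$ in layer $i$ that reference them: this sharing is exactly what a circuit (as opposed to a formula) permits, and it is what keeps the total size linear. There are $n$ layers and $\ell$ vertices per layer, hence $O(n\ell)$ gates in total, which is linear in the input size $|I|$ (the number of edges is $\Theta(n\ell^2)$ in the worst case, so linear in the number of edges, or at worst linear in the number of vertices times a constant; in either reading the circuit size is $O(|I|)$).

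For the depth bound, I would note that the circuit proceeds layer by layer, and each layer contributes the depth of one $\otimes$-gate plus the $O(\log \ell)$ depth of one $\oplus$-tree. Chaining across $n$ layers gives total depth $O(n \log \ell)$. To get the claimed \emph{linear} depth I would instead build the $\oplus$-sum within each layer as a left-leaning (unbalanced) chain of $\oplus$-gates, so that each layer adds only $O(1)$ to the critical path of any fixed source-to-sink path — more precisely, the longest root-to-leaf path in the DAG tracks a single graph path of length at most $n-1$, accumulating $O(1)$ gates per layer and thus $O(n) = O(|I|)$ depth overall. Either accounting yields a circuit whose depth is linear in the input size, matching the statement.

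The main obstacle is being careful that the construction works over an \emph{arbitrary} semiring, not merely an absorptive one, so I cannot invoke Proposition~\ref{prop:tight:tree} to restrict to tight proof trees. The point to verify is that the layered structure already guarantees finiteness: every proof tree of $T(s,t)$ corresponds to a simple path (no repeated layers, hence no repeated vertices), so the provenance polynomial is a finite sum and the recurrence above enumerates exactly these paths without any double counting or omission. I would confirm by induction on the layer index $i$ that $P_i(v)$ equals the formal $\oplus$-sum of monomials over all $s$-to-$v$ paths, using distributivity of $\otimes$ over $\oplus$ at each step; this makes the circuit \emph{produce} (not merely compute) the provenance polynomial in its canonical form, which is the strongest reading of the claim and holds over every semiring $\bS$.
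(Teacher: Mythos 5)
Your construction is essentially the paper's: the paper also treats the graph itself as the circuit, placing an $\oplus$-gate at each vertex that sums the polynomials arriving on its incoming edges and a $\otimes$-gate on each edge combining the vertex polynomial with the edge variable, which is exactly your recurrence $P_i(v) = \bigoplus_{u} P_{i-1}(u)\otimes x_{u,v}$ read gate by gate. Your size accounting (one $\otimes$-gate per edge plus an $\oplus$-tree per vertex, hence $O(|I|)$ total, correcting your initial miscount of $O(n\ell)$) and your correctness argument over an arbitrary semiring match the intended proof; indeed your last paragraph is more careful than the paper, which asserts correctness without comment, whereas you rightly observe that acyclicity of the layered graph makes every proof tree a simple path, so the polynomial is a finite sum that the circuit produces via distributivity with no appeal to absorptiveness.

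One side remark in your depth accounting is wrong, though harmlessly so: a left-leaning chain of $\oplus$-gates at a vertex of in-degree $k$ does not add $O(1)$ to the critical path---an input entering at the far end of the chain traverses all $k-1$ gates, so chains can contribute $\Theta(\ell)$ depth per layer and the circuit depth is not $O(n)$. But the theorem only claims depth linear in the input size, and that is immediate either way: depth never exceeds size, which is $O(|I|)$, and your balanced-tree version already achieves depth $O(n\log\ell)\le O(|I|)$.
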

\begin{proof}
    Essentially, we prove that the graph itself is a circuit for $st$-connectivity. We construct the circuit inductively in a bottom-up fashion. For the base case, any edge is computed by the input edge variable that represents it. For any internal vertex, if there are more than 1 incoming edge, we first construct a $\oplus$-gate that sums over the polynomials for those edges. We say this $\oplus$-gate computes the polynomial for this vertex. Then, for each outgoing edge of that vertex, construct a $\otimes$-gate between the polynomial for this vertex and the edge variable of the outgoing edge. Clearly, this circuit computes $st$-connectivity correctly and is of linear size and linear depth.
\end{proof}

In the above theorem the two distinguished vertices $s$ and $t$ are located at the bottom layer and the top layer respectively (c.f. Theorem 4.7 in~\cite{BarakKRVV23}). The particular structure of the layered graphs will be important for our lower bound results. A circuit with depth matching this lower bound can be constructed using the ``path-doubling'' technique~\cite{KarchmerW90,UllmanG88}.

An immediate consequence of~\autoref{thm:Karchmer_Widgerson} and Theorem~\ref{thm:size_depth_tradeoff} is that the Boolean provenance polynomials for the $\TC$ program need in general super-polynomial size formulas. Observe that the above lower bound applies only to circuits over the Boolean semiring, so one might ask what happens for other semirings. Fortunately, lower bounds can "transfer up" to any positive semiring.

\begin{proposition}\label{prop:positive_semiring}
    Any size or depth upper bound for circuits or formulas over any positive semiring $\bS$ for a polynomial $p$ over $\bS$ implies the same upper bound for $p$ over $\bB$. Any size or depth lower bound for circuits or formulas for $p$ over $\bB$ implies the same lower bound for $p$ over any positive semiring $\bS$.
\end{proposition}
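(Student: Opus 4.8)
The plan is to exploit the one structural feature that defines a positive semiring: the map $h\colon \bS \to \bB$ sending $\mathbf{0}$ to $\textsc{False}$ and every nonzero element to $\textsc{True}$ is a semiring homomorphism. The crucial observation is that a circuit (or formula) is a purely syntactic object --- a DAG with gates labelled $\oplus$, $\otimes$ and leaves labelled by variables and the constants $\mathbf{0}, \mathbf{1}$ --- so one and the same object $F$ can be read over $\bS$ or over $\bB$ without any change to its size or depth. Thus both claims reduce to a single transfer statement: if $F$ computes $p$ over $\bS$, then the very same $F$ computes $p$ over $\bB$.

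To prove this, I would first record that ``$F$ computes $p$ over $\bS$'' is equivalent to the functional equality $F_\bS = p_\bS$, where $F_\bS$ and $p_\bS$ denote the functions $\bS^n \to \bS$ obtained by evaluating the circuit and the polynomial over $\bS$. This holds because putting $F$ into DNF form only uses distributivity, which is valid in every semiring, so the produced polynomial and $F$ induce the same function; equivalence $\equiv_{\bS}$ to $p$ then says exactly $F_\bS = p_\bS$. Next, the homomorphism $h$ commutes with evaluation: by induction on the gates of $F$, for every $\vec{a} \in \bS^n$ we have $h(F_\bS(\vec{a})) = F_\bB(h(\vec{a}))$, and likewise for $p$, using $h(x \oplus y) = h(x) \oplus h(y)$, $h(x \otimes y) = h(x) \otimes h(y)$ together with $h(\mathbf{0}) = \textsc{False}$ and $h(\mathbf{1}) = \textsc{True}$.

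The final ingredient is that $h$ is surjective, which is immediate since $\mathbf{0}$ and $\mathbf{1}$ already witness both Boolean values. Hence, given any Boolean assignment $\vec{b} \in \bB^n$, I can pick $\vec{a} \in \bS^n$ with $h(\vec{a}) = \vec{b}$ (for instance $a_i \in \{\mathbf{0}, \mathbf{1}\}$), and then
$$
F_\bB(\vec{b}) = h(F_\bS(\vec{a})) = h(p_\bS(\vec{a})) = p_\bB(\vec{b}),
$$
where the middle equality uses $F_\bS = p_\bS$. This gives $F_\bB = p_\bB$, so $F$ computes $p$ over $\bB$ with identical size and depth, which establishes the upper-bound direction. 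The lower-bound direction is its contrapositive: a circuit or formula for $p$ over $\bS$ of size or depth below a given threshold would yield, by the construction just described, one of the same parameters over $\bB$, contradicting the assumed lower bound over $\bB$.

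I expect the only real subtlety to be the first step --- carefully relating the syntactic notions ``produces'' and ``computes'' to the functional equalities $F_\bS = p_\bS$ and $F_\bB = p_\bB$ --- since the homomorphism argument itself is routine. In particular one must note that the DNF canonical form is taken only up to distributivity, and not up to semiring-specific identities such as idempotence or absorption, so that the polynomial produced by $F$ genuinely agrees with $F$ as a function over every semiring; otherwise the equivalence between ``computes'' and functional equality could break down.
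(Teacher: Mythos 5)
Your proposal is correct and follows essentially the same route as the paper's proof: reinterpret the very same circuit over $\bB$ and invoke the positivity homomorphism $h\colon \bS \to \bB$, with the lower-bound claim as the contrapositive. You simply spell out the details the paper leaves implicit (that $h$ commutes with gate-by-gate evaluation, and that surjectivity via $\{\mathbf{0},\mathbf{1}\}$ lets every Boolean assignment be lifted to $\bS$), which makes your writeup a more careful version of the same argument --- and in fact cleaner than the paper's one-liner, which even swaps the roles of $\wedge$ and $\vee$ in its gate reinterpretation.
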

\begin{proof}
Given a circuit (or formula) over $\bS$ for $p$, we simply interpret each $\oplus$-gate as $\wedge$ and each $\otimes$-gate as $\vee$, and consider each variable to take values from $\{\textsc{False}, \textsc{True}\}$. The resulting circuit (or formula) computes $p$ over $\bB$ since $\bS$ is positive, and thus there is a semiring homomorphism from $\bS$ to $\bB$.
\end{proof}

Finally, one might wonder for which Datalog programs it would be possible to construct circuit of logarithmic depth -- and thus formulas of polynomial size. One first observation is that Datalog programs with only initialization rules correspond to $\UCQ$s; but in this case, the possible proof trees are only polynomially many.

\begin{proposition}\label{thm:UCQ_AC0}
There exists a circuit of size $\operatorname{poly}(|I|)$ and depth $O(\log |I|)$ for any $\UCQ$ with input $I$ over any semiring. Moreover, there exists a formula of size $\operatorname{poly}(|I|)$ for any $\UCQ$ with input $I$ over any semiring.
\end{proposition}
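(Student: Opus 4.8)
The plan is to exploit the fact that a $\UCQ$, viewed as a Datalog program with only initialization rules, admits proof trees of depth one only, of which there are merely polynomially many. First I would observe that since every rule body consists solely of EDB atoms, each proof tree of the target fact $\alpha$ consists of a root labeled $\alpha$ together with a single layer of leaves, one leaf per body atom of some rule under a grounding whose head equals $\alpha$ and whose body facts all lie in $I$. Because the query is fixed (data complexity), each of the constantly many rules has a fixed number of variables $k$, so the number of such groundings is at most $|\textsf{Dom}(I)|^{k} = \operatorname{poly}(|I|)$. Summing over the constantly many rules, the provenance polynomial
$$p^I_\Pi(\alpha) = \bigoplus_{(r,g)} \bigotimes_{\beta \in \mathcal{L}(g)} x_{\beta}$$
is a sum of $\operatorname{poly}(|I|)$ monomials, where the inner product ranges over the leaf EDB facts $\mathcal{L}(g)$ of the grounding $g$, each monomial being a product of at most $m$ variables with $m$ the (constant) maximum body length.

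Next I would construct the circuit directly from this sum-of-products. Each monomial, being a product of at most $m$ variables, is computed by a balanced binary tree of $\otimes$-gates of depth $\lceil \log m \rceil = O(1)$. The overall polynomial is then the $\oplus$-sum of the $\operatorname{poly}(|I|)$ monomial values, which I assemble with a balanced binary tree of $\oplus$-gates of depth $O(\log \operatorname{poly}(|I|)) = O(\log |I|)$. The total depth is therefore $O(\log|I|) + O(1) = O(\log|I|)$, and the total size is $O(\operatorname{poly}(|I|) \cdot m) = \operatorname{poly}(|I|)$. Correctness over an arbitrary semiring is immediate, since we only regroup the associative operations $\oplus$ and $\otimes$ of the canonical sum-of-products form; in particular no absorptive or idempotent property is needed, so the claim holds over any semiring.

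For the formula claim I would simply invoke the folklore proposition that a depth-$d$ circuit over any semiring has an equivalent formula of size $2^d$; substituting $d = O(\log|I|)$ yields a formula of size $2^{O(\log|I|)} = \operatorname{poly}(|I|)$. Alternatively, one observes that the circuit built above is already tree-shaped once each input variable is duplicated at each of its (polynomially many) occurrences, so it is itself a formula of polynomial size without appeal to the depth-to-formula conversion.

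I do not expect a genuine obstacle here. The only point requiring care is the counting argument that the number of groundings, and hence of monomials, is polynomial; this is precisely the data-complexity bound for $\UCQ$s and hinges on the absence of recursion, so that every proof tree has depth one. Everything downstream is a routine balanced-tree construction whose depth bound $O(\log|I|)$ comes entirely from the logarithm of the polynomial number of monomials being summed.
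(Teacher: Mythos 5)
Your proposal is correct and follows essentially the same route as the paper's proof: write the provenance polynomial in its sum-of-products form, note that the number of groundings (and hence summands) is polynomial in $|I|$ because the query is fixed, and assemble the $\otimes$- and $\oplus$-operations in balanced binary trees to get depth $O(\log |I|)$, with the formula bound following from the folklore depth-to-formula conversion. Your additional observations---that no semiring property beyond associativity is needed, and that the construction is already tree-shaped---are accurate elaborations of the paper's terser argument rather than a different approach.
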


\begin{proof}
The provenance polynomial of any $\UCQ$ over any semiring is simply the sum over all valid valuations of the product over tuples, for every rule with the same IDB head. The circuit simply takes the sum of product form, where we do the $\oplus$-gates and $\otimes$-gates in a binary tree fashion. Since there are at most $\operatorname{poly}(|I|)$ many summands, the depth of the circuit is at most $O(\log |I|)$.
\end{proof}

As we will see in the next section, the above proposition can be generalized to other Datalog programs that we call {\em bounded}.
\section{Boundedness of Datalog over Semirings}
\label{sec:bounded}

In this section, we define the notion of boundedness for a Datalog program over a semiring, and then explore different characterizations of boundedness. Boundedness is a fundamental property for our purposes because it implies small-depth circuits.

\begin{definition}[Boundedness]
A Datalog program $\Pi$ with target a predicate $T$ over a naturally ordered semiring $\bS$ is called \emph{bounded} if there exists a constant $k$ (independent of the input size) such that the naive evaluation of $\Pi$ reaches the fixpoint for $T$ in at most $k$ iterations for any input.   
\end{definition}

%We say that $\Pi$ is bounded if every IDB predicate in $\Pi$ is bounded. 
It could be the case that some IDB predicates in the program are bounded, but the program itself is not. In general, the problem of whether a given Datalog program $\Pi$ is bounded for the Boolean semiring is undecidable~\cite{HillebrandKMV91, GaifmanMSV93, Vardi88, Naughton89, NaughtonS87}.  For example, it is shown that program boundedness is undecidable for two linear rules and one initialization rule~\cite{HillebrandKMV91}. 

%Throughout this paper, we mainly focus on {\em predicate I/O convention}, where the inputs are EBDs and the output is a designated IDB. In this case, the designated rule in $\Pi$ is called {\em bounded} (equivalently we say its head IDB is bounded) if and only if the program $\Pi$ is bounded. 

\begin{example}
Consider the following Datalog program:
$$\begin{aligned} 
& T(x,y) \obtainedfrom E(x,y) \\ 
& T(x,y) \obtainedfrom A(x) \wedge T(z,y)
\end{aligned}$$
This program is bounded for any absorptive semiring. Indeed, in this case the program is equivalent to one where we replace the second rule with $T(x,y) \obtainedfrom A(x) \wedge E(z,y) $, which is a $\UCQ$.  
\end{example}

\introparagraph{Boundedness and Circuits}
Boundedness is a crucial property that allows us to construct semiring circuits of optimal depth, i.e., logarithmic to the input size, as the next theorem shows.

\begin{theorem} \label{thm:bounded:depth}
Let $\Pi$ be a bounded Datalog program with target $T$ over a naturally ordered semiring $\bS$. Then, for any input $I$ of size $n$ and any fact $\alpha$ of $T$, we can construct a circuit for the provenance polynomial $p^I_\Pi(\alpha)$ over $\bS$ with polynomial size and depth $O(\log |I|)$.
\end{theorem}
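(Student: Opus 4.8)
The plan is to exploit the definition of boundedness directly: if the program reaches its fixpoint in at most $k$ iterations (with $k$ constant), then the provenance polynomial $p^I_\Pi(\alpha)$ can be expressed as the output of the immediate consequence operator applied $k$ times starting from $\mathbf{0}$. I would think of the naive evaluation as unfolding the grounded program into a layered structure of depth $k$: at each iteration, the new value of every IDB fact is obtained by $\oplus$-summing over all grounded rules having that fact as head, and for each such rule $\otimes$-multiplying the current values of the body facts. Since absorptive semirings are $0$-stable, Proposition~\ref{prop:tight:tree} guarantees that this converged value equals $p^I_\Pi(\alpha)$ over $\bS$.

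First I would set up the circuit construction inductively over the $k$ iterations. For iteration $0$, every IDB fact has value $\mathbf{0}$ and every EDB fact is an input variable. For each iteration $i$ from $1$ to $k$, I would build, for each IDB fact $\beta$, a gadget that computes its new value: the number of grounded rules with head $\beta$ is at most $\operatorname{poly}(|I|)$ (bounded by a constant power of $|\dom(I)|$ determined by the fixed program), and each rule body has a constant number of atoms. So each IDB fact's value at layer $i$ is a sum of $\operatorname{poly}(|I|)$ products, each product over a constant number of previous-layer values. Arranging the $\oplus$-gates and $\otimes$-gates of each such sum-of-products in a balanced binary-tree fashion gives a gadget of depth $O(\log|I|)$ and size $\operatorname{poly}(|I|)$ per IDB fact per layer.

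Next I would bound the totals. There are $O(|I|^c)$ IDB facts for a constant $c$ (the maximum arity of an IDB), and $k$ layers, so the total number of gadgets is $\operatorname{poly}(|I|)$, giving total size $\operatorname{poly}(|I|)$. Crucially, because $k$ is a \emph{constant} independent of $|I|$, stacking $k$ layers each of depth $O(\log|I|)$ yields a total depth of $k \cdot O(\log|I|) = O(\log|I|)$. The output gate is the gate computing the value of $\alpha$ at layer $k$. Correctness follows because the circuit simulates exactly the $k$-step naive evaluation, which by boundedness has already reached the fixpoint, and that fixpoint value equals $p^I_\Pi(\alpha)$ over $\bS$.

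The main obstacle, and the step I would be most careful about, is the passage from ``boundedness of the target $T$'' to a clean circuit. The definition only guarantees the fixpoint of $T$ is reached in $k$ steps, but intermediate IDB predicates feeding into $T$ might in principle need to stabilize too; I would argue that the value of $\alpha$ after $k$ naive-evaluation rounds is already the correct provenance polynomial, so it suffices to unfold exactly $k$ rounds regardless of whether other predicates have separately converged — the boundedness hypothesis is precisely what makes $k$ rounds sufficient. A secondary subtlety is ensuring that the per-layer gadgets correctly reuse the gates computing previous-layer IDB values (this is fine in a circuit, since gate outputs may fan out), and confirming that the constant hidden in the $O(\log|I|)$ depth does not secretly depend on $|I|$ through $k$ — which it does not, by the definition of boundedness. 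I would also note that this construction generalizes Proposition~\ref{thm:UCQ_AC0}, since a $\UCQ$ is just the $k=1$ case.
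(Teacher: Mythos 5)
Your proposal is correct and matches the paper's own proof essentially step for step: both build a $k$-layer circuit simulating naive evaluation of the grounded program, with each layer computing every IDB fact's value as a balanced-binary-tree sum of polynomially many constant-size rule products, giving depth $k \cdot O(\log |I|) = O(\log |I|)$ since $k$ is a constant by boundedness. Your extra care about intermediate predicates and gate reuse is sound but not a departure from the paper's argument.
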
 

\begin{proof}
Since the predicate $T$ is bounded, there is a constant $k$ independent of the input size such that after $k$ fixpoint iterations every fact of $T$ has reached a fixpoint. This means that we can construct a circuit with $k$ layers, where each layer encodes the evaluation of one iteration.

To construct each layer, we first construct the grounded program w.r.t. $\Pi$ and the input $I$. Then, we compute every grounded rule -- this has constant depth, since every grounded rule can be computed via a number of $\otimes$-operations linear to the size of the body of the rule. Finally, we need to $\oplus$-sum all grounded rules with the same IDB fact as head. 

We now argue about the size and depth of each layer. First, note that the number of gates in each layer is $O(M)$, where $M$ is the size of the grounded program. $M$ is always polynomial to the input size. Second, for every IDB fact $\alpha$, there can be at most a polynomial number of grounded rules with head $\alpha$. Since we can always implement a commutative and associative summation with a circuit of logarithmic depth, we have depth only $O(\log |I|)$. 
\end{proof}

%In fact, the circuit size in the above theorem will be linear to the size of the grounded program $\Pi$ on the input $I$.

\introparagraph{Characterizing Boundedness}
When Datalog is evaluated over the Boolean semiring, boundedness is equivalent to saying that the target predicate is equivalent to a $\UCQ$~\cite{Naughton86}. We will show that we can extend this equivalence to a subclass of absorptive semirings.

To prove this, we will use the fact that for a predicate $T$ of $\Pi$ in a $p$-stable semiring $\bS$, there exists an infinite sequence of $\CQ$s $C_0, C_1, C_2, \dots$ such that for every input database $I$, we have $T(I) =_{\bS} \bigcup_{i=0}^\infty C_i(I)$. The $\CQ$s $C_0, C_1, \dots$ are called the {\em expansions} of $\Pi$ for $T$. The $p$-stability is necessary such that $T(I)$ is a well-defined quantity.

\begin{example}
Consider the transitive closure example. The expansions for $T$ can be defined as:
\begin{align*}
    C_0(x,y) & := E(x,y) \\
    C_1(x,y) & := E(x,z) \wedge E(z,y) \\
    C_2(x,y) & := E(x,z) \wedge E(z,w) \wedge E(w,y) \\
    \dots
\end{align*}
\end{example}

We will also need to define containment of $\UCQ$s w.r.t. a semiring $\bS$, denoted as $\subseteq_\bS$. To do this, we will use the natural order of the semiring $\bS$. To guarantee that this exists, we will consider $\oplus$-idempotent semirings, where natural order is defined as $a \leq_\bS b \Leftrightarrow a \oplus b = b$. (see~\cite{KRS14} for a formal treatment of $\UCQ$s containment w.r.t. any semiring.)

\begin{theorem} \label{thm:bound:idem}
 Let $T$ be the target predicate for a Datalog program $\Pi$ over an idempotent $p$-stable semiring $\bS$. Let $C_0, C_1, \dots$ be its expansions. Then, $\Pi$ is bounded if and only if there exists an integer $N \geq 1$ such that for all $n > N$, $C_n \subseteq_{\bS} \bigcup_{i=0}^{{N}} C_i$.    
\end{theorem}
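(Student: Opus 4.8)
The plan is to route both directions of the equivalence through a single intermediate statement: $\Pi$ is bounded if and only if there exists a finite $N$ with $T \equiv_\bS \bigoplus_{i=0}^N C_i$, i.e.\ the infinite union of expansions collapses to a finite one. The enumerated absorption condition in the theorem is then an easy reformulation of this finite-collapse statement. Throughout, I would write $T^{(j)}$ for the value of the target after $j$ rounds of naive evaluation (with $T^{(0)} = \mathbf{0}$), and use that the natural order $\subseteq_\bS$ is available because $\bS$ is idempotent, while the infinite sum $\bigoplus_{i=0}^\infty C_i$ is well-defined because $\bS$ is $p$-stable.

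First I would record the structural facts that tie naive evaluation to expansions. \textbf{(i)} Naive evaluation is monotone and sound: $T^{(0)} \subseteq_\bS T^{(1)} \subseteq_\bS \cdots$ and $T^{(j)} \subseteq_\bS T$ for every $j$. \textbf{(ii)} Unfolding the immediate consequence operator shows that for every input $I$ we have $T^{(j)}(I) = \bigoplus_{i \in D_j} C_i(I)$, where $D_j$ is the set of expansions arising from proof-tree shapes of depth at most $j$; this is because summing over grounded rules and multiplying over body atoms for $j$ rounds is exactly the $\oplus$-sum over all proof trees of depth at most $j$, grouped by shape. \textbf{(iii)} Each $D_j$ is finite, since the program has boundedly many rules of bounded body size and hence only finitely many proof-tree shapes of depth at most $j$; consequently each expansion $C_i$ has a finite rank $r_i := \min\{j : i \in D_j\}$, and $\{i : r_i \le j\}$ is finite for every $j$.

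Granting these facts, the two directions are short. For the forward direction, if $\Pi$ is bounded with constant $k$ then $T = T^{(k)} = \bigoplus_{i \,:\, r_i \le k} C_i$, a finite sum by \textbf{(iii)}; taking $N := \max\{i : r_i \le k\}$ gives $T \equiv_\bS \bigoplus_{i=0}^N C_i$ (any extra terms with $i \le N$ but $r_i > k$ still satisfy $C_i \subseteq_\bS T$, so they do not change the value), and then every $C_n$ with $n > N$ satisfies $C_n \subseteq_\bS T \equiv_\bS \bigoplus_{i=0}^N C_i$ because $C_n$ is one of the summands of $T$. For the backward direction, assume $C_n \subseteq_\bS \bigoplus_{i=0}^N C_i$ for all $n > N$. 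Writing $S_M := \bigoplus_{i=0}^M C_i$, a one-line induction gives $S_M = S_N$ for all $M \ge N$ (each new term is absorbed), so the partial sums are eventually constant and $T \equiv_\bS S_N$. Finally, each $C_i$ with $i \le N$ is derivable within $r_i$ rounds, so with $k := \max_{i \le N} r_i$ we obtain $\bigoplus_{i=0}^N C_i \subseteq_\bS T^{(k)} \subseteq_\bS T \equiv_\bS \bigoplus_{i=0}^N C_i$, forcing $T^{(k)} = T$; monotonicity then yields $T^{(j)} = T$ for all $j \ge k$, i.e.\ $\Pi$ is bounded.

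The main obstacle I expect is establishing facts \textbf{(ii)} and \textbf{(iii)} cleanly over an arbitrary idempotent $p$-stable semiring rather than the Boolean semiring: one must verify that the semiring-valued immediate consequence operator, after $j$ rounds, genuinely equals the $\oplus$-sum of the expansion $\CQ$s of depth at most $j$, so that the set-theoretic subsumption used in Naughton's original argument is replaced throughout by $\subseteq_\bS$-absorption, and one must check that idempotence is exactly what lets absorbed terms drop out of both the finite and the infinite sums. The finiteness in \textbf{(iii)} and the well-definedness of the infinite sum via $p$-stability are what keep the argument from being merely formal: without $p$-stability the quantity $T(I)$ need not converge, and without idempotence the natural order $\subseteq_\bS$ --- and hence the very statement of the absorption condition --- would not be available.
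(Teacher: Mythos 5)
Your proof is correct and follows essentially the same route as the paper's: both directions rest on the background identity $T(I) =_\bS \bigcup_{i=0}^\infty C_i(I)$, on $p$-stability to collapse the infinite sum to a finite partial sum, and on idempotence-based absorption applied term by term — your induction showing the partial sums $S_M$ stabilize at $S_N$ is exactly the paper's telescoping chain of equalities, and your observation that each $C_n$ is a summand of the total sum, hence $C_n \subseteq_\bS T \equiv_\bS \bigoplus_{i=0}^N C_i$, is precisely the paper's argument for the bounded-implies-condition direction. The one substantive difference is where "the fixpoint value equals a finite union of expansions" must be converted into "naive evaluation stabilizes after constantly many rounds": your facts (ii)--(iii), namely that $T^{(j)}$ equals the $\oplus$-sum of the expansions of rank at most $j$ and that each expansion has a finite, input-independent rank $r_i$, yield the explicit bound $k := \max_{i \le N} r_i$ with $T^{(k)} = T$, whereas the paper's proof of this theorem stops at the value identity $\phi(p^I_\Pi(\alpha)) = \bigoplus_{i=0}^n \phi(p^I_{C_i}(\alpha))$ and leaves that final step implicit (a version of it surfaces only later, in the proof of Proposition~\ref{cor:bounded:ucq}, where the maximal expansion index bounds the number of iterations). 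So your write-up buys a fully self-contained backward direction, at the cost of having to verify the unfolding identity (ii) over an arbitrary idempotent $p$-stable semiring — a verification you correctly flag as the delicate point, and which the paper (like you) essentially takes as given when it asserts the expansion identity for $T(I)$.
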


\begin{proof}
First, assume that the condition holds.
Consider some fact $\alpha$ in $T$, and consider its provenance polynomial for some instance $I$. We note that we can write its polynomial as $\sum_{i=0}^\infty p^I_{C_i}(\alpha)$, where $p^I_{C_i}(\alpha)$ is the provenance polynomial of $\alpha$ for the expansion CQ $C_i$. Since the semiring is $p$-stable, this polynomial is well-defined (i.e., has a finite value), and moreover there is some $K$ (possibly depending in the instance $I$) such that $\sum_{i=0}^\infty p^I_{C_i}(\alpha) = \sum_{i=0}^K p^I_{C_i}(\alpha)$.

Then, we have that for any assignment $\phi$ of the polynomial to the semiring domain:
\begin{align*}
\phi(p^I_{\Pi}(\alpha)) = \bigoplus_{i=0}^\infty \phi(p^I_{C_i}(\alpha)) 
& = \left( \bigoplus_{i=0}^n \phi(p^I_{C_i}(\alpha)) \oplus \phi(p^I_{C_{n+1}}(\alpha)) \right) \oplus \bigoplus_{i=n+2}^K \phi(p^I_{C_i}(\alpha)) \\
& = \left( \bigoplus_{i=0}^n \phi(p^I_{C_i}(\alpha) ) \right) \oplus \bigoplus_{i=n+2}^K \phi(p^I_{C_i}(\alpha)) \\
& = \left( \bigoplus_{i=0}^n \phi(p^I_{C_i}(\alpha) ) \right) \oplus \bigoplus_{i=n+3}^K \phi(p^I_{C_i}(\alpha)) \\
& = \dots \\
& = \bigoplus_{i=0}^n \phi(p^I_{C_i}(\alpha) )
\end{align*}
where the second equality (and all subsequent ones) holds because $C_{n+1} \subseteq_{\bS} \bigcup_{i=0}^n C_i$ and thus $\phi(p^I_{C_{n+1}}(\alpha)) \leq_{\bS} \bigoplus_{i=0}^n \phi(p^I_{C_i}(\alpha))   $.

For the second direction, assume that $T$ is bounded and reaches the fixpoint in at most $k$ iterations. Then, by taking the CQ expansions corresponding to the first $k$ iterations, there exists $n>0$ such that for any input $I$, $T(I) =_{\bS} \bigcup_{i=1}^n C_i(I)$. This means that for any input $I$ and fact $\alpha$, we have that
$$ \bigoplus_{i=0}^n \phi(p^I_{C_i}(\alpha)) = \bigoplus_{i=0}^\infty \phi(p^I_{C_i}(\alpha)) $$
Now, note that for any $j > n$:
$$\phi(p^I_{C_j}(\alpha)) \leq_{\bS} \bigoplus_{i=0}^\infty \phi(p^I_{C_i}(\alpha)) )$$
and thus we have the desired containment property.
\end{proof}

When the semiring is absorptive and $\otimes$-idempotent, we can obtain an even stronger characterization that is equivalent to that of the Boolean semiring. We note that absorptive and $\otimes$-idempotent semirings correspond to the class $\mathbf{C}_{\textsf{hom}}$ of semirings, defined in~\cite{KRS14}. 
%\mathbf{C}_{\textsf{hom}}$ contains many important semirings, including all distributive lattices.
{Naaf showed that $\mathbf{C}_{\textsf{hom}}$ is exactly the class of bounded distributive lattices (Proposition 3.1.8 in~\cite{Naaf24}).} For this class, $\CQ$ containment is exactly characterized by the existence of a homomorphism.

\begin{theorem}
Let $T$ be the target predicate for a Datalog program $\Pi$  over an $\otimes$-idempotent and absorptive semiring $\bS$. Let $C_0, C_1, \dots$ be its expansions. Then, $\Pi$ is bounded if and only if there exists $N \geq 1$ such that for all $n > N$, there exists $m \leq N$ such that there is a homomorphism from $C_m$ to $C_n$.   
\end{theorem}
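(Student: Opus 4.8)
The plan is to reduce the statement to the $\UCQ$-containment characterization already established in Theorem~\ref{thm:bound:idem}, and then to convert containment into a homomorphism condition by transferring everything to the Boolean semiring via positivity. First I would note that an $\otimes$-idempotent and absorptive semiring $\bS$ is in particular $\oplus$-idempotent and $0$-stable, hence $p$-stable with $p=0$, so Theorem~\ref{thm:bound:idem} applies: $\Pi$ is bounded if and only if there exists $N \geq 1$ such that for every $n > N$ we have $C_n \subseteq_{\bS} \bigcup_{i=0}^{N} C_i$. It therefore suffices to prove, for this class of semirings and for each fixed $n > N$, the equivalence
$$ C_n \subseteq_{\bS} \bigcup_{i=0}^{N} C_i \quad\Longleftrightarrow\quad \exists\, m \leq N \text{ and a head-preserving homomorphism } C_m \to C_n, $$
and then quantify over all $n > N$ with the same threshold $N$.

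For the ``$\Leftarrow$'' direction I would argue monotonicity directly over $\bS$: given a head-preserving homomorphism $\theta : C_m \to C_n$, for any input $D$ and any homomorphism $g : C_n \to D$ the composite $g \circ \theta$ witnesses $C_m$ in $D$, and since $\bS$ is $\otimes$-idempotent the resulting monomial is a $\otimes$-product over a \emph{subset} of the atoms used by $g$; by absorption a product over fewer factors is $\geq_{\bS}$ the full product, so every monomial of $p^{D}_{C_n}(\alpha)$ is dominated by a monomial of $p^{D}_{C_m}(\alpha)$. Hence $C_n \subseteq_{\bS} C_m \subseteq_{\bS} \bigcup_{i=0}^{N} C_i$, so the containment condition of Theorem~\ref{thm:bound:idem} holds for all $n > N$ and $\Pi$ is bounded.

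For the ``$\Rightarrow$'' direction I would transfer to the Boolean semiring. Because $\bS$ is positive, the map $h\colon \bS\to\bB$ sending nonzero elements to $\textsc{True}$ is a semiring homomorphism, and semiring homomorphisms preserve the natural order; lifting each Boolean valuation to $\bS$ (sending $\textsc{False}\mapsto\mathbf{0}$ and $\textsc{True}\mapsto\mathbf{1}$) and applying $h$ shows that $C_n \subseteq_{\bS} \bigcup_{i=0}^N C_i$ implies $C_n \subseteq_{\bB} \bigcup_{i=0}^N C_i$ (this is the containment analogue of Proposition~\ref{prop:positive_semiring}). Over $\bB$ the classical Sagiv--Yannakakis argument applies: evaluating the $\UCQ$ on the frozen canonical database $D[C_n]$ of $C_n$, with the head frozen to $\alpha$, the left-hand side is satisfied by the identity map, so containment forces some disjunct $C_m$ with $m\le N$ to be satisfied as well, and its satisfying assignment is exactly a head-preserving homomorphism $C_m \to C_n$. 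Chaining the two equivalences yields the stated characterization.

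The main obstacle is the ``$\Rightarrow$'' step, namely the reduction of $\CQ$-into-$\UCQ$ containment to a single disjunct. This reduction is false for general (even idempotent) positive semirings -- the counting semiring, where bag containment in a union need not be witnessed disjunct-wise, is the standard cautionary example -- so the argument must genuinely use that $\bS$ is absorptive and positive. Absorptivity is what lets us collapse to the Boolean semiring (equivalently, $\mathbf{C}_{\textsf{hom}}$ is exactly the class of bounded distributive lattices~\cite{Naaf24}, where $\CQ$ containment is homomorphism-based~\cite{KRS14}), and it is precisely this collapse that recovers the disjunct-wise behavior. An alternative to the Boolean detour is to run the frozen-database argument directly in the free bounded distributive lattice on the EDB tags, where one checks that $\bigotimes_{t} x_t \leq_{\bS} p^{D[C_n]}_{\bigcup_i C_i}(\alpha)$ forces one disjunct's homomorphic image to lie within the atoms of $C_n$; both routes arrive at the same homomorphism.
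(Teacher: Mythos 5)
Your proposal is correct, and its skeleton is exactly the paper's: reduce to the containment criterion of Theorem~\ref{thm:bound:idem}, then translate containment of $C_n$ in $\bigcup_{i=0}^{N} C_i$ into the existence of a homomorphism $C_m \to C_n$ for some $m \le N$. The difference is in how that translation is justified. The paper's proof is two lines: it invokes the result of \cite{KRS14} that for semirings in $\mathbf{C}_{\textsf{hom}}$ (i.e., absorptive and $\otimes$-idempotent), $\UCQ$ containment $Q_1 \subseteq_{\bS} Q_2$ holds iff every disjunct of $Q_1$ admits a homomorphism from some disjunct of $Q_2$, and combines this black box with Theorem~\ref{thm:bound:idem}. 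You instead re-derive that characterization from first principles: the ``$\Leftarrow$'' direction by monomial domination (a head-preserving homomorphism $\theta : C_m \to C_n$ composed with any valuation of $C_n$ yields, using $\otimes$-idempotence, a product over a \emph{subset} of the factors, which dominates by absorption since $a \oplus a \otimes b = a$, and $\oplus$-idempotence lets you sum these bounds), and the ``$\Rightarrow$'' direction by collapsing to $\bB$ via the positivity homomorphism $h$ (which preserves the natural order) followed by the classical Sagiv--Yannakakis frozen-database extraction of a single disjunct. Both steps check out, including the prerequisite that absorptive implies $\oplus$-idempotent and $0$-stable so that Theorem~\ref{thm:bound:idem} applies. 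What your route buys is self-containedness and a clear accounting of exactly where each algebraic hypothesis is used ($\otimes$-idempotence for collapsing multiplicities, absorption for domination and for the Boolean collapse); your caveat that the disjunct-wise reduction fails over, e.g., the counting semiring correctly identifies why the hypothesis on $\bS$ is not removable. What the paper's route buys is brevity and a slightly stronger conceptual anchor, since \cite{KRS14} (and the lattice-theoretic view of $\mathbf{C}_{\textsf{hom}}$ via \cite{Naaf24}) gives the containment characterization uniformly for the whole class rather than through the detour to $\bB$.
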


\begin{proof}
Indeed, we know from~\cite{KRS14} that for any such semiring, for $\UCQ$s $Q_1, Q_2$, we have that $Q_1 \subseteq_\bS Q_2$ if and only if for every $\CQ$ $q$ in $Q_1$ there exists a $\CQ$ $q'$ in $Q_2$ such that there is a homomorphism from $q'$ to $q$. Combined with Theorem~\ref{thm:bound:idem}, we obtain the desired result.
\end{proof}

The above condition for boundedness coincides precisely with the condition of boundedness for the Boolean semiring. Hence, the property of boundedness with absorptive $\otimes$-idempotent semirings is completely characterized by the behavior of boundedness in the Boolean semiring.

\begin{corollary} \label{cor:bounded:equiv}
 Let $\bS$ be an absorptive $\otimes$-idempotent semiring, and $\Pi$ be a Datalog program. Then, $\Pi$ is bounded over $\bS$ if and only if it is bounded over the Boolean semiring.  
\end{corollary}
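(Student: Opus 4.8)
The plan is to derive the corollary directly from the immediately preceding characterization theorem by invoking it twice: once for $\bS$ and once for the Boolean semiring $\bB$. The starting observation is that $\bB = (\{\textsc{False}, \textsc{True}\}, \vee, \wedge, \textsc{False}, \textsc{True})$ is itself both absorptive and $\otimes$-idempotent. Absorptivity holds because $\mathbf{1} \oplus x = \textsc{True} \vee x = \textsc{True} = \mathbf{1}$ for every $x$, and $\otimes$-idempotence holds because $x \wedge x = x$. Hence $\bB$ lies in precisely the class of semirings (absorptive and $\otimes$-idempotent, i.e. $\mathbf{C}_{\textsf{hom}}$) to which that theorem applies, so its homomorphism criterion is available for $\bB$ as well.

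Second, I would emphasize that the expansions $C_0, C_1, \ldots$ of $\Pi$ for $T$ are purely syntactic objects: they are the $\CQ$s obtained by unfolding the recursion of $\Pi$, and their definition makes no reference to the semiring. In particular, the same sequence $(C_i)_i$ witnesses the identity $T(I) =_\bS \bigcup_{i} C_i(I)$ over \emph{every} absorptive ($0$-stable) semiring, since the expansions enumerate proof-tree shapes and, by Proposition~\ref{prop:tight:tree}, over any absorptive semiring it suffices to sum over the finitely many tight proof trees. Consequently the combinatorial condition appearing in the preceding theorem---that there exists $N \ge 1$ such that for every $n > N$ there is some $m \le N$ with a homomorphism from $C_m$ to $C_n$---is a property of the program $\Pi$ alone and is entirely independent of the semiring.

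Putting these together, the argument concludes by chaining two equivalences. Applying the preceding theorem to $\bS$ shows that $\Pi$ is bounded over $\bS$ if and only if the homomorphism condition holds; applying the same theorem to $\bB$ shows that $\Pi$ is bounded over $\bB$ if and only if the \emph{identical} homomorphism condition holds. Therefore $\Pi$ is bounded over $\bS$ if and only if it is bounded over $\bB$. I do not anticipate a genuine obstacle: the only point requiring care is the verification that the expansions, and hence the homomorphism criterion, are semiring-independent, since it is exactly this single combinatorial criterion that transfers the notion of boundedness between the two semirings.
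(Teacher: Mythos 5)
Your proof is correct and takes essentially the same route as the paper, which gives no separate proof but derives the corollary immediately from the preceding homomorphism characterization on the grounds that the criterion is purely syntactic and coincides with the classical boundedness condition over $\bB$. Your explicit verification that $\bB$ is itself absorptive and $\otimes$-idempotent, so the theorem can be instantiated at both $\bS$ and $\bB$ and the semiring-independent condition chained between them, is precisely the detail the paper leaves implicit.
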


Finally, we obtain the following equivalent characterization of boundedness, which was only known for the Boolean semiring.

\begin{proposition} \label{cor:bounded:ucq}
 Let $\bS$ be an absorptive $\otimes$-idempotent semiring, and $\Pi$ be a Datalog program. Then, $\Pi$ is bounded over $\bS$ if and only if it the target predicate $T$ is equivalent to a $\UCQ$.
\end{proposition}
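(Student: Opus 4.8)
The plan is to prove the two directions separately, routing the harder ``$\UCQ \Rightarrow$ bounded'' direction through the Boolean semiring so as to reuse the classical characterization of Naughton~\cite{Naughton86} together with Corollary~\ref{cor:bounded:equiv}. The forward direction will be essentially the computation already carried out in the second half of the proof of Theorem~\ref{thm:bound:idem}, so the only genuinely new ingredient I would need is a transfer lemma asserting that provenance equivalence over $\bS$ descends to provenance equivalence over $\bB$.

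For the forward direction, I would assume $\Pi$ is bounded over $\bS$, so that naive evaluation reaches the fixpoint for $T$ after at most $k$ iterations on every input, for a constant $k$. Since for a fixed program only finitely many expansion $\CQ$s arise within the first $k$ iterations, there is a fixed integer $n$ (depending only on $\Pi$ and $k$) with $T(I) =_\bS \bigcup_{i=0}^n C_i(I)$ for every input $I$, exactly as in Theorem~\ref{thm:bound:idem}. The finite union $Q := \bigcup_{i=0}^n C_i$ is a $\UCQ$, and by construction $p^I_\Pi(\alpha) \equiv_\bS p^I_{Q}(\alpha)$ for every $I$ and fact $\alpha$, so $T$ is equivalent to a $\UCQ$ over $\bS$.

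For the reverse direction, I would assume $T \equiv_\bS Q$ for some $\UCQ$ $Q$ and first establish the transfer claim $T \equiv_\bB Q$. Fix an input $I$, a fact $\alpha$, and an arbitrary Boolean assignment $\psi$ of the EDB variables, and lift it to the $\bS$-assignment $\phi$ with $\phi(x) = \mathbf{1}$ when $\psi(x) = \textsc{True}$ and $\phi(x) = \mathbf{0}$ otherwise. Because $\bS$ is positive, the map $h\colon \bS \to \bB$ is a semiring homomorphism with $h \circ \phi = \psi$ on variables. Since $\bS$ is absorptive (hence $0$-stable), the infinite provenance sum $\phi(p^I_\Pi(\alpha)) = \bigoplus_{i=0}^\infty \phi(p^I_{C_i}(\alpha))$ stabilizes at some finite $K$; applying $h$ to the stabilized finite sum, and using that the $\bS$-partial sums are constant for all indices $\geq K$, one checks that $\psi(p^I_\Pi(\alpha)) = h(\phi(p^I_\Pi(\alpha)))$, while $\psi(p^I_Q(\alpha)) = h(\phi(p^I_Q(\alpha)))$ is immediate as $Q$ is a finite polynomial. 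Combined with $\phi(p^I_\Pi(\alpha)) = \phi(p^I_Q(\alpha))$, this yields $\psi(p^I_\Pi(\alpha)) = \psi(p^I_Q(\alpha))$, i.e., $T \equiv_\bB Q$. Then Naughton's theorem~\cite{Naughton86} gives that $\Pi$ is bounded over $\bB$, and Corollary~\ref{cor:bounded:equiv} transfers this back to boundedness over $\bS$.

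The main obstacle will be the transfer lemma $T \equiv_\bS Q \Rightarrow T \equiv_\bB Q$, specifically the interaction of $h$ with the \emph{infinite} provenance sum defining $T$: since $h$ acts on semiring elements rather than on formal infinite sums, I must first invoke $0$-stability to replace the infinite sum by a finite one before pushing $h$ through, and then verify that the Boolean truncation point is compatible with the $\bS$ one, which is where positivity and the componentwise stabilization are used. Everything else is bookkeeping: the forward direction is a direct rereading of Theorem~\ref{thm:bound:idem}, and once $T \equiv_\bB Q$ is in hand the conclusion is immediate from~\cite{Naughton86} and Corollary~\ref{cor:bounded:equiv}.
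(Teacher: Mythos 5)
Your proof is correct, but your backward direction takes a genuinely different route from the paper's. The paper argues entirely inside $\bS$: from $T \equiv_\bS \cup_j D_j$ it invokes the homomorphism characterization of $\UCQ$ containment over $\mathbf{C}_{\textsf{hom}}$ semirings~\cite{KRS14} twice, composes the resulting homomorphisms to get $C_{i_j} \rightarrow C_i$ for every expansion $C_i$, and concludes directly that the finite union $\cup_j C_{i_j}$ equals the infinite union over $\bS$, which yields boundedness with the explicit bound $k = \max_j i_j$. You instead prove a transfer lemma ($T \equiv_\bS Q \Rightarrow T \equiv_\bB Q$, via positivity and stabilization of the infinite sum), then apply the classical Boolean characterization~\cite{Naughton86} and pull boundedness back up with Corollary~\ref{cor:bounded:equiv}. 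Both are sound: your route yields the slightly stronger intermediate fact that $\UCQ$-equivalence itself descends from $\bS$ to $\bB$, and it isolates where each hypothesis is used (your transfer lemma needs only positivity and absorptiveness; $\otimes$-idempotence enters only through Corollary~\ref{cor:bounded:equiv}), whereas the paper's direct argument is shorter, stays in one semiring, and produces a concrete iteration bound. Note that since Corollary~\ref{cor:bounded:equiv} is itself derived from the same \cite{KRS14} machinery, your route ultimately rests on the same foundations, just packaged differently. One simplification you missed: your worry about matching the Boolean truncation point to the $\bS$ one is unnecessary --- since $\bB$ is itself absorptive, Proposition~\ref{prop:tight:tree} lets you replace both infinite sums by the \emph{same} finite tight-tree polynomial, after which $h(\phi(p)) = \psi(p)$ follows immediately from $h$ being a homomorphism with $h \circ \phi = \psi$ on variables (alternatively, positivity alone gives $h\bigl(\bigoplus_{i \leq K}\phi(p^I_{C_i}(\alpha))\bigr) = \TRUE$ iff some summand is nonzero, which handles the tail without comparing truncation points).
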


\begin{proof}
For the first direction, suppose that $T$ is bounded. Then, the construction in Theorem~\ref{thm:bound:idem} tells us that $T$ is equivalent to a finite union of CQ expansions, which is a UCQ.

For the other direction, suppose that the IDB predicate $T$ is equivalent to the UCQ $\cup_{j=1}^\ell D_j$. Consider the expansions of $T, \cup_{i=0}^\infty C_i$. Using again the UCQ containment characterization in~\cite{KRS14}, since $\cup_{j=1}^\ell D_j \subseteq_\bS \cup_{i=0}^\infty C_i$, for every $D_j$, there exists a homomorphism from some $C_{i_j}$ to $D_j$. Consider now the UCQ $\cup_{j=1}^\ell C_{i_j}$. Since $ \cup_{i=0}^\infty C_i \subseteq_\bS \cup_{j=1}^\ell D_j$, for any $C_i$, there exists a homomorphism $D_{j} \rightarrow C_i$ for some $D_j$, and hence a homomorphism $C_{i_j} \rightarrow C_i$. Hence, 
$\cup_{j=1}^\ell C_{i_j} =_{\bS} \cup_{i=0}^\infty C_i =_{\bS} T$. But this means that if we take the largest index $k = \max_{j=1,\ell} i_j$, the fixpoint will be reached after at most that many iterations.
\end{proof}

We do not know whether the characterization of boundedness via equivalence to a $\UCQ$ holds for more general semirings than the ones in the class $\mathbf{C}_{\textsf{hom}}$. It is also an open question whether there is a program $\Pi$ that is bounded over the Boolean semiring, but is unbounded over an absorptive one (even, say, the tropical semiring).

\section{Circuit Bounds for Basic Chain Datalog}
\label{sec:rpq}

A {\em chain rule} is a rule of the following form:
$$
P(x, y) \obtainedfrom Q_0(x, z_1) \wedge Q_1(z_1, z_2) \wedge \ldots \wedge Q_k(z_k, y),
$$
where the $Q_i$'s are binary predicates and $x, y, z_i$'s are distinct variables. A {\em basic chain Datalog} program is a program $\Pi$ whose recursive rules are chain rules~\cite{Yannakakis90, UllmanG88}. A basic chain Datalog program $\Pi$ corresponds to a {\em context-free grammar} ($\CFG$) in the following way: IDB and EDB predicates correspond to non-terminal symbols and terminal symbols in the $\CFG$ respectively, with the designated IDB predicate as the starting non-terminal symbol, and rules in $\Pi$ corresponds to productions in the $\CFG$ by ignoring the variables. For example, the Datalog program $\TC$ corresponds to the $\CFG$ $T \gets T E \mid E$. 

\begin{definition}[Context-Free Reachability] 
Given a directed edge-labeled graph $G$ with labels over a finite alphabet $\Sigma$ and a $\CFG$ $L$, the context-free reachability problem returns all pairs of vertices $(s,t)$ such that there exists a path whose concatenation of its edge labels lies in $L$.
\end{definition}

A regular language is equivalent to a left-linear (or right-linear) $\CFG$, and thus a basic chain Datalog program where all rules are left-linear (or all are right-linear) corresponds to a regular language~\cite{Yannakakis90}. The class of basic chain Datalog programs that correspond to regular languages captures exactly the class of Regular Path Queries ($\RPQ$s)~\cite{AV97}. An $\RPQ$ is a context-free reachability problem where the language $L$ is a regular one. For example, the $\TC$ program corresponds to the $\RPQ$ with alphabet $\{E\}$ and regular language $E^*$.
The next proposition states the connection between $\CFG$s, regular grammars and Datalog programs in a formal manner.  

% As discussed earlier, we will prove circuit depth lower bound for $\SSST$ $\RPQ$, which is the strongest variant among $\SIS$ and $\AP$ variants. The next proposition states the connection between $\RPQ$s and datalog programs in a formal manner.  

\begin{proposition}\label{prop:datalog_from_RPQ}
Context-free reachability is equivalent to a basic chain Datalog program such that all rules are of the form $T(x,y) \obtainedfrom U(x,z), V(z,y)$ or $T(x,y) \obtainedfrom U(x,z)$. Additionally, if the language is regular, all recursive rules are left-linear.
\end{proposition}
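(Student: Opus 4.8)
The plan is to establish the equivalence in both directions and then separately verify the claim about regular languages, which amounts to a grammar-normalization argument. First I would show that any context-free reachability problem can be expressed by a basic chain Datalog program of the restricted form. Given a $\CFG$ $L$ with starting non-terminal $T$, I would convert $L$ into Chomsky Normal Form, so that every production is either $A \to BC$ (two non-terminals) or $A \to a$ (a single terminal), with the usual caveat about the empty word. The key correspondence is that a production $A \to BC$ becomes the Datalog rule $A(x,y) \obtainedfrom B(x,z) \wedge C(z,y)$ and a production $A \to a$ becomes $A(x,y) \obtainedfrom E_a(x,y)$, where $E_a$ is the EDB predicate encoding the $a$-labeled edges of the input graph $G$. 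I would then argue by induction on the length of a path (equivalently, the height of a parse/proof tree) that the IDB fact $T(s,t)$ is derivable exactly when there is an $s$-$t$ path whose label sequence lies in $L(T)$; the inductive step follows because the split $A \to BC$ in the grammar mirrors the split of the path at the intermediate vertex $z$ in the body of the rule.

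For the converse direction I would take an arbitrary basic chain Datalog program whose recursive rules are chain rules of the form $P(x,y) \obtainedfrom Q_0(x,z_1) \wedge \cdots \wedge Q_k(z_k,y)$, read off the induced $\CFG$ (non-terminals are IDBs, terminals are EDBs, productions drop the variables), and observe that the body of each chain rule is precisely a path pattern whose concatenated labels form the right-hand side of a $\CFG$ production. Hence the set of facts $P(s,t)$ computed by $\Pi$ is exactly the set of vertex pairs connected by a path whose label lies in the language generated from $P$, which is a context-free reachability instance. Combining the two directions yields the stated equivalence, with the normal-form step ensuring that the resulting program uses only rules of the two restricted shapes $T(x,y) \obtainedfrom U(x,z), V(z,y)$ and $T(x,y) \obtainedfrom U(x,z)$.

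For the additional claim about regular languages, I would invoke the standard fact that a language is regular iff it is generated by a left-linear (equivalently right-linear) grammar, and then perform the analogous normalization into a left-linear form where every production is either $A \to B a$ or $A \to a$. Under the Datalog correspondence, $A \to B a$ becomes $A(x,y) \obtainedfrom B(x,z) \wedge E_a(z,y)$, which is again of the required shape but with the single IDB appearing on the left of the body --- this is exactly what makes the recursive rule left-linear. I would note that left-linear here refers to the recursive IDB occurring as the first atom, matching the $\TC$ example $T \gets TE \mid E$ given just before the proposition.

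The main obstacle I anticipate is handling the boundary cases of the grammar normalization cleanly, in particular the empty-word / length-zero path issue: Chomsky Normal Form cannot directly produce the empty word, and context-free reachability over the empty label corresponds to trivial self-reachability $T(x,x)$, which needs a separate convention (either excluding $\varepsilon$ from $L$ or adding an explicit reflexive initialization rule). A secondary subtlety is that the definition of context-free reachability in the excerpt returns pairs via \emph{paths}, so I must be careful that the inductive correspondence tracks multisets/sequences of edges correctly and that intermediate vertices in the rule bodies range over the full active domain. Beyond these bookkeeping points, the argument is a routine translation between parse trees and proof trees, so I expect the only real care to be in stating the normal forms precisely and matching the left-linear convention to the one implicitly used in the surrounding text.
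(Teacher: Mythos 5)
Your proof is correct and follows exactly the intended argument: the paper states this proposition without an explicit proof, treating it as the standard consequence of the $\CFG$--Datalog correspondence described just before it (IDBs as non-terminals, EDBs as terminals), together with Chomsky Normal Form for the general case and left-linear grammar normalization for the regular case --- precisely your construction in both directions. Your flagged boundary case ($\varepsilon$ and self-reachability) is sensible extra care; note also that the paper's second rule form $T(x,y) \obtainedfrom U(x,z)$ is evidently a typo for $T(x,y) \obtainedfrom U(x,y)$, which your translation of $A \to a$ implicitly and correctly assumes.
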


%\begin{proof}
%Let $M$ be a deterministic finite automaton ($\DFA$) that corresponds to the language of the RPQ. The IDBs are the states in the $\DFA$ ($T_i(x,y)$ for state $i$), and the EDBs are the symbols in the active alphabet ($R_a(x,y)$ for label $a$). We also introduce a target binary IDB $T$.

%The transition function is encoded by left-linear rules. That is, for a transition from state $i$ to state $j$ with label $a$, we introduce the rule $T_j(x,y):- T_i(x,z), R_a(z,y)$. In the case where $i$ is the initial state, we will instead use the rule $T_j(x,y):- R_a(x,y)$. Finally, for every accept state $i$, we introduce the rule $T(x,y) :- T_i(x,y)$. It is easy to see that the resulting Datalog program satisfies all desired properties.
%\end{proof}

In this section, we prove both upper bound and lower bound results on the size and depth of circuits and formulas for basic chain Datalog programs. This is not only an interesting result on its own, but also demonstrates the main ideas which will be applied to more complex Datalog programs later on. The results of this section are summarized in Table~\ref{tab:regular}. In particular, we can show the following two dichotomies, that depend on whether the corresponding grammar accepts finitely many words.

\begin{theorem}
Let $\Pi$ be a basic chain Datalog program with target $T$ that corresponds to a regular language $L$ over an absorptive semiring $\bS$. For input size $m$:
\begin{itemize}
\item if $L$ is finite, $T$ has $\Theta(\log m)$-depth circuits and polynomial-size formulas;
\item  if $L$ is infinite, $T$ has $\Theta(\log^2 m)$-depth circuits and super-polynomial-size formulas.
\end{itemize}
\end{theorem}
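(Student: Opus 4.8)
The plan is to establish the four depth bounds implicit in the statement---an upper and a lower bound in each of the two cases---and then read off the formula-size consequences. Throughout I will invoke Proposition~\ref{prop:positive_semiring} to push every lower bound down to the Boolean semiring $\bB$, and Wegener's Theorem~\ref{thm:size_depth_tradeoff} together with the folklore depth-to-formula proposition to pass between depth and formula size. \textbf{The finite case.} When $L$ is finite, say $L=\{w_1,\dots,w_r\}$ with $r$ constant, the target $T$ is equivalent to a $\UCQ$: each word $w_i$ unfolds into a single chain $\CQ$ joining $|w_i|$ consecutive edge atoms, and $T(x,y)$ is the union of these $r$ chains, so $p^I_\Pi(T(s,t))$ is exactly that $\UCQ$'s provenance. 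Proposition~\ref{thm:UCQ_AC0} then gives a circuit of depth $O(\log m)$ and a formula of polynomial size over any semiring. The matching $\Omega(\log m)$ bound is the generic fan-in-two argument: it suffices to exhibit an input of size $m$ whose provenance depends on $\Omega(m)$ distinct variables (forcing depth $\geq\log_2(\Omega(m))$), which a single join variable ranging over $\Theta(m)$ values realizes as soon as $L$ contains a word of length at least two; the degenerate case $L\subseteq\{\varepsilon\}\cup\Sigma$ has constant-depth circuits and is outside the scope of the lower bound.

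\textbf{The infinite case, upper bound.} By Proposition~\ref{prop:datalog_from_RPQ} the program is left-linear, so $L$ is accepted by a DFA $A=(Q,\Sigma,\delta,q_0,F)$ of constant size. I form the product graph $P=A\times G$ on vertex set $Q\times\dom(I)$, which has $N=O(m)$ vertices, and note that $T(s,t)$ holds exactly when some $(q_f,t)$ with $q_f\in F$ is reachable from $(q_0,s)$ in $P$; moreover $p^I_\Pi(T(s,t))$ equals the transitive-closure provenance of $P$ between these endpoints. Over an absorptive (that is, $0$-stable) semiring the star of the weighted adjacency matrix $M$ of $P$ satisfies $M^\ast=(\mathbf{I}\oplus M)^{N-1}$, because absorption collapses every walk to a simple one. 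Computing $(\mathbf{I}\oplus M)^{2^{\lceil\log N\rceil}}$ by repeated squaring (the path-doubling technique) uses $O(\log N)$ matrix products, each a semiring circuit of depth $O(\log N)$, for total depth $O(\log^2 N)=O(\log^2 m)$ and polynomial size.

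\textbf{The infinite case, lower bound.} This is the crux, and I would reduce the Karchmer--Wigderson instance of Theorem~\ref{thm:Karchmer_Widgerson} to our problem. Since $L$ is infinite and regular, $A$ has a reachable, co-reachable cycle: there are words $u,v,w$ with $|v|\geq 1$ and a state $q$ satisfying $\delta(q_0,u)=q$, $\delta(q,v)=q$, and $\delta(q,w)\in F$. Starting from the hard $(n^{0.1},n)$-layered graph $H$, in which every $s$--$t$ path uses exactly $n-1$ edges, I build an edge-labeled graph $G'$ by replacing each edge of $H$ with a gadget path spelling $v$, prepending a gadget spelling $u$ at the source and appending one spelling $w$ at the sink. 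An $s$--$t$ path of $H$ then corresponds to an $s'$--$t'$ path of $G'$ spelling $u\,v^{\,n-1}\,w\in L$, and since $|u|,|v|,|w|$ are constants we get $|G'|=\Theta(|H|)$, hence $m=\Theta(|H|)$ and $\log^2 m=\Theta(\log^2 n)$. Substituting $\mathbf{1}$ for every gadget-internal edge variable in $p^{G'}_\Pi(T(s',t'))$ recovers exactly $p^{H}_{\TC}(\alpha)$; as specializing variables to constants never increases depth, any depth-$d$ circuit for the former yields a depth-$\leq d$ circuit for the latter, and Theorem~\ref{thm:Karchmer_Widgerson} forces $d=\Omega(\log^2 n)=\Omega(\log^2 m)$. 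Proposition~\ref{prop:positive_semiring} lifts this to every positive absorptive semiring, and combining it with Theorem~\ref{thm:size_depth_tradeoff} rules out polynomial-size formulas, since a poly-size formula would yield an $O(\log m)$-depth circuit.

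\textbf{Main obstacle.} The delicate point is the tightness of the gadget construction: I must guarantee that the \emph{only} $s'$--$t'$ paths of $G'$ whose label lies in $L$ are those induced by $s$--$t$ paths of $H$, so that no spurious monomials survive the substitution and the recovered polynomial is exactly $p^{H}_{\TC}(\alpha)$. This amounts to showing that $A$ is forced through state $q$ at every layer boundary and cannot ``shortcut'' a gadget in a different state; controlling this---by picking $u,v,w$ from a pumped word of a canonical form, or by placing fresh labels inside each gadget so that the run of $A$ along a gadget is forced---is where the real work lies.
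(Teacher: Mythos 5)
Your proposal is correct and follows essentially the same route as the paper: the finite case via boundedness/equivalence to a $\UCQ$ (the paper uses Proposition~\ref{prop:RPQ_bounded_iff_Datalog_bounded} with Theorems~\ref{thm:bounded:depth} and~\ref{thm:finite_L}), the infinite-case upper bound via the DFA product graph plus repeated squaring (Theorems~\ref{sec:rpq:matrix} and~\ref{thm:dichotomy_RPQ}), and the lower bound via a pumping-style subdivision of the Karchmer--Wigderson layered instance with the constant-$\mathbf{1}$ rewiring of inputs, lifted by Proposition~\ref{prop:positive_semiring} and converted to formula-size bounds by Theorem~\ref{thm:size_depth_tradeoff} --- exactly the two directions of the paper's Theorem~\ref{thm:dichotomy_RPQ}.

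One remark on your flagged ``main obstacle'': it dissolves, and the paper needs no extra tightness argument. In your $G'$, the gadget-internal vertices and the pendant prefix/suffix paths are fresh, so \emph{every} $s'$--$t'$ path is forced to traverse the $u$-prefix, a sequence of complete $v$-gadgets, and the $w$-suffix; hence the $s'$--$t'$ paths are in bijection with the $s$--$t$ paths of $H$, and since $H$ is layered each such path spells exactly $u\,v^{\,n-1}\,w \in L$. Acceptance is a property of the label string alone --- the automaton does not live in the instance --- so there is no way for it to ``shortcut a gadget in a different state,'' and no spurious monomials arise. The only residual concerns (walks versus paths, ambiguous derivations of the same word) are handled by absorption and $\oplus$-idempotence, and are moot here anyway because the layered $H$ is acyclic. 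Note also that your fallback fix of placing fresh labels inside the gadgets would be illegitimate: the labels are the EDB predicates of the \emph{fixed} program $\Pi$, so the alphabet cannot be enlarged; fortunately no fix is needed.
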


\begin{theorem}
Let $\Pi$ be a basic chain Datalog program that corresponds to a $\CFG$ $L$ over an absorptive semiring $\bS$. Then, $T$ admits polynomial-size formulas if and only if $L$ is finite.  
\end{theorem}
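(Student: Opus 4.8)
The plan is to prove the two directions separately, reusing the two techniques already in play: the $\UCQ$ circuit of Proposition~\ref{thm:UCQ_AC0} for the finite case, and a reduction from the transitive-closure lower bound of Theorem~\ref{thm:Karchmer_Widgerson} for the infinite case. The key observation, which makes both directions go through over an arbitrary absorptive semiring, is that for a target fact $T(s,t)$ the provenance polynomial of a basic chain Datalog program equals the (idempotent) sum, over all $s$-$t$ paths in the input graph whose edge-label word lies in $L$, of the product of the edge variables along the path: distinct parse trees of the same word along the same path yield the same monomial, and, since an absorptive semiring is idempotent, the polynomial depends only on the set of accepted $s$-$t$ paths (this is the same absorption principle used in Proposition~\ref{prop:tight:tree}).

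For the forward direction, suppose $L$ is finite. A finite language over a finite alphabet has words of bounded length, so each $w = a_1 \cdots a_\ell \in L$ corresponds to a fixed-length $\CQ$ $C_w(x,y) \obtainedfrom a_1(x,z_1) \wedge \cdots \wedge a_\ell(z_{\ell-1}, y)$, and $T$ is equivalent to the finite union $\bigcup_{w \in L} C_w$, a $\UCQ$ whose size is independent of the input. By the observation above, the provenance polynomial of $T$ coincides over $\bS$ with that of this $\UCQ$, so Proposition~\ref{thm:UCQ_AC0} immediately gives a polynomial-size formula (in fact an $O(\log|I|)$-depth circuit).

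For the converse, suppose $L$ is infinite; I would derive a super-polynomial formula lower bound by embedding the hard instances of Theorem~\ref{thm:Karchmer_Widgerson}. Being infinite over a finite alphabet, $L$ contains words of unbounded length, so fix an increasing sequence $d_1 < d_2 < \cdots$ with a word $w^{(i)} \in L$ of length $d_i$. For each $i$, take the hard layered graph $G$ of Theorem~\ref{thm:Karchmer_Widgerson} with $d_i + 1$ layers, so that every $s$-$t$ path uses exactly $d_i$ edges, and relabel it by placing every edge from layer $j$ to layer $j+1$ into the EDB relation named by the $j$-th symbol of $w^{(i)}$, keeping its provenance variable; call the result $G'$. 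By the layered structure every $s$-$t$ path spells precisely $w^{(i)} \in L$ and no other $s$-$t$ path exists, so the accepted $s$-$t$ paths are exactly the $s$-$t$ paths of $G$ and, as formal polynomials, $p^{G'}_{\Pi}(T(s,t)) = p^{G}_{\TC}(T(s,t))$. Hence any circuit for the former computes the latter and requires depth $\Omega(\log^2 d_i)$ over $\bB$; as the input size $m$ is polynomial in $d_i$ this is $\Omega(\log^2 m)$, so by the contrapositive of Theorem~\ref{thm:size_depth_tradeoff} the formulas over $\bB$ are super-polynomial, and Proposition~\ref{prop:positive_semiring} transfers this lower bound to $\bS$. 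Notably, this argument never uses that $L$ is regular, so it applies verbatim to general $\CFG$s.

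The step I expect to require the most care is the exact identity $p^{G'}_{\Pi}(T(s,t)) = p^{G}_{\TC}(T(s,t))$, as opposed to mere agreement of the reachability relations: one must check that the layered labeling forces every accepted $s$-$t$ path to spell the single word $w^{(i)}$ (so that no monomials coming from other words of $L$ sneak in) and that the multiplicities contributed by different parse trees of $w^{(i)}$ are absorbed. A secondary point is to confirm that an unbounded set of word lengths already yields an infinite family of hard instances with $\log d_i = \Theta(\log m)$, so the super-logarithmic depth is genuinely super-polynomial in the input size no matter how sparsely the lengths of $L$ are distributed.
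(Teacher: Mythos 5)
Your proof is correct, but its hard direction takes a genuinely different route from the paper's. The paper obtains the lower bound through Theorem~\ref{thm:unbounded_CFG}: it invokes the pumping lemma for context-free languages to get a decomposition $uvwxy$, disposes of degenerate cases (arranging $|v|\geq 1$, and falling back to the regular-language argument of Theorem~\ref{thm:dichotomy_RPQ} when $w$ and $x$ are empty), expands every edge of the hard layered graph into a path spelling $v$ while attaching prefix and suffix paths spelling $uv$ and $wx^{\ell}y$, using the layered structure to fix the pumping exponent; the circuit-level reduction then wires the gadget edges to the constant $\mathbf{1}$. You bypass pumping entirely: for each length $d_i$ realized by some $w^{(i)}\in L$ you take the hard instance of Theorem~\ref{thm:Karchmer_Widgerson} with $d_i+1$ layers and assign the $j$-th layer boundary the $j$-th symbol of $w^{(i)}$, so every $s$-$t$ path spells exactly $w^{(i)}$, the variable sets of $G$ and $G'$ coincide, and the circuit reduction is the identity; idempotence (which absorptive semirings satisfy) collapses the multiple parse trees of an ambiguous grammar into one monomial per path, giving $p^{G'}_{\Pi}(T(s,t)) \equiv_{\bS} p^{G}_{\TC}(T(s,t))$, after which the $\Omega(\log^2 m)$ depth bound over $\bB$ transfers by Proposition~\ref{prop:positive_semiring} and yields super-polynomial formula size by Theorem~\ref{thm:size_depth_tradeoff}. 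Your argument is more elementary and strictly more general---it uses only that $L$ is infinite, not even that it is context-free---at the cost of being a per-size, non-uniform construction valid along the sequence $\{d_i\}$, which, as you correctly note, suffices since $\log d_i = \Theta(\log m)$ and a polynomial formula bound would have to hold at every input size. What the paper's pumping-based construction buys in exchange is a single size-independent gadget reduction that in the regular case runs in \emph{both} directions (Theorem~\ref{thm:dichotomy_RPQ}), and hence also delivers the sharper $\Theta(\log n)$-versus-$\Theta(\log^2 n)$ depth dichotomy rather than only the formula-size dichotomy. Your forward direction (finite $L$ gives a fixed union of path $\CQ$s, then Proposition~\ref{thm:UCQ_AC0}) is essentially the paper's route via boundedness (Proposition~\ref{prop:RPQ_bounded_iff_Datalog_bounded} together with Theorem~\ref{thm:bounded:depth}). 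One hypothesis worth stating explicitly in your write-up: the hard layered instances must have only forward edges between consecutive layers, so that every $s$-$t$ walk is a path crossing each boundary exactly once---this is the same structural property of $(\ell,n)$-layered graphs that the paper itself relies on.
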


The key idea is that boundedness for basic chain Datalog is exactly characterized by the finiteness of the grammar $L$ for any absorptive semiring. On the other hand, for infinite grammars, we can reduce from the transitive closure, hence getting a $\Omega(\log^2 m)$ depth lower bound.

\renewcommand{\arraystretch}{1.25}
\begin{table}[t]
    \centering
    \begin{tabular}{c|c|c|c|c}
    \textbf{$\CFG$ Language} & \multicolumn{2}{|c|}{\textbf{Circuit Size}} & \multicolumn{2}{|c}{\textbf{Circuit Depth}} \\
    \hline
    & Upper Bound & Lower Bound & Upper Bound & Lower Bound  \\ \hline
    finite  & $O(m)$ & $\Omega(m)$ & $O(\log n)$ & $\Omega(\log n)$   \\
    \hline
    \multirow{2}{3em}{infinite  regular} & $O(m n)$ & \multirow{2}{2em}{$\Omega(m)$} & $O(n \log n)$ & \multirow{2}{4em}{$\Omega(\log^2 n)$} \\
    & $O(n^3 \log n)$ & & $O(\log^2 n)$ &  \\
    \hline 
    infinite & $O(n^5)$ & $\Omega(m)$ & $O(n^2 \log n)$ & $\Omega(\log^2 n)$ \\
    \end{tabular}
    \caption{Depth and Size for (absorptive) semiring circuits for basic chain Datalog. Here, $m$ is the input size, and $n$ is the size of the active domain. }
    \label{tab:regular}
\end{table}

\subsection{Boundedness for Basic Chain Datalog}

As a first step, we will show that boundedness is characterized by the finiteness of the $\CFG$. Note that this is not implied by our results in the previous section, as this result holds for any absorptive semiring -- for example, it holds for the tropical semiring as well.

\begin{proposition}\label{prop:RPQ_bounded_iff_Datalog_bounded}
Let $\Pi$ be a basic chain Datalog program with target IDB $T$ over an absorptive semiring $\bS$. Then, $T$ is bounded if and only if the corresponding $\CFG$ $L$ of $\Pi$ is finite.
\end{proposition}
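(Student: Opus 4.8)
The plan is to prove both directions by translating between the naive evaluation of $\Pi$ over $\bS$ and the combinatorics of proof trees, using the fact that (by Proposition~\ref{prop:datalog_from_RPQ}) we may assume every recursive rule has the form $T(x,y) \obtainedfrom U(x,z), V(z,y)$ or $T(x,y) \obtainedfrom U(x,z)$, so that each proof tree of a fact $T(s,t)$ is exactly a parse tree of some word $w \in L$ spelled by a path from $s$ to $t$ in the input graph. The central fact I will establish is that the number of iterations naive evaluation needs to fix the value of a fact $\alpha$ equals $\mathrm{ht}(\alpha)$, the minimum height (number of IDB-internal nodes on a root-to-leaf path) among proof trees of $\alpha$: after $k$ iterations the value of $\alpha$ is the $\oplus$-sum of the monomials of all proof trees of height at most $k$, so $\alpha$ first becomes non-$\mathbf{0}$ at iteration $\mathrm{ht}(\alpha)$, and once all tight proof trees are present the value no longer changes (Proposition~\ref{prop:tight:tree}). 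Thus boundedness is equivalent to a uniform constant bound on the height of the tallest \emph{tight} proof tree, over all inputs and all facts of $T$.

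For the direction ``$L$ finite $\Rightarrow$ bounded'', let $\ell_{\max}$ be the length of the longest word in $L$ and let $c$ be the number of IDB predicates of $\Pi$. I will show every tight proof tree has height at most $H = O(c\,\ell_{\max})$. Fix a root-to-leaf path in such a tree: each binary production along it splits the current span into two strictly shorter non-empty subwords, so at most $\ell_{\max}$ binary productions occur along the path; between two consecutive binary productions the endpoint pair $(s',t')$ is fixed, and tightness forbids repeating an IDB fact $P(s',t')$, so each maximal run of unary productions has length at most $c$. Multiplying bounds the path length by $H$. Hence after $H$ iterations the value of every $T$-fact already equals the $\oplus$-sum over all tight proof trees, which by Proposition~\ref{prop:tight:tree} is the fixpoint, so $\Pi$ is bounded.

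For the direction ``$L$ infinite $\Rightarrow$ unbounded'', I will build a separating family of inputs. Given a target $\ell$, pick $w \in L$ with $|w| \ge \ell$ (possible since $L$ is infinite) and let $G_w$ be the simple directed path $v_0 \to v_1 \to \cdots \to v_{|w|}$ whose $i$-th edge carries the $i$-th label of $w$ and is tagged with value $\mathbf{1}$. Since $G_w$ is a simple path, the only walk from $v_0$ to $v_{|w|}$ spells exactly $w$ and uses all $|w|$ edges, so every proof tree of $T(v_0,v_{|w|})$ is a parse tree of $w$ with $|w|$ leaves; as right-hand sides have length at most two, any such tree has height at least $\log_2 |w| \ge \log_2 \ell$. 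Therefore $\mathrm{ht}(T(v_0,v_{|w|})) \ge \log_2 \ell$, i.e., this fact is still $\mathbf{0}$ after $\lfloor \log_2 \ell\rfloor - 1$ iterations and only later becomes $\mathbf{1}$, so naive evaluation needs more than $\log_2 \ell$ iterations on $G_w$. As $\ell \to \infty$ no constant bound survives, and $\Pi$ is unbounded.

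The main obstacle is the finite direction rather than the visibly transitive-closure-like infinite one: a finite language $L$ does not by itself bound the \emph{height} of derivations, since the grammar may contain unary cycles (e.g.\ $T \to U$, $U \to T$, $T \to E$) that yield arbitrarily tall \emph{non-tight} proof trees. The essential move is therefore to restrict to tight proof trees via Proposition~\ref{prop:tight:tree}, which holds precisely because $\bS$ is absorptive, and then to argue the constant height bound above. This is exactly where absorptiveness (rather than mere $\oplus$-idempotence) is used, and it is what makes the equivalence hold uniformly for every absorptive semiring, including the tropical semiring.
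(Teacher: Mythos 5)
Your proof is correct and follows the same overall route as the paper's: a finite language yields a constant bound on the number of naive-evaluation iterations, and an infinite language yields a family of simple labeled paths spelling ever-longer words $w \in L$, on which every proof tree of the target fact has exactly $|w|$ leaves and hence height $\Omega(\log |w|)$, forcing super-constant iteration counts. The interesting divergence is in the finite direction, where you are actually \emph{more} careful than the paper. The paper argues tersely that by iteration $i$ naive evaluation has discovered all paths of length at most $i$ satisfying $L$, and concludes a fixpoint within $\ell_{\max}$ iterations; this glosses over unary productions (e.g., $T \to U$, $U \to V$, $V \to E$ has $\ell_{\max}=1$ yet the target needs three iterations, and unary \emph{cycles} even make the paper's aside that ``this direction holds for any semiring'' fail over, say, the counting semiring, where naive evaluation diverges). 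Your argument --- bounding the height of \emph{tight} proof trees by at most $\ell_{\max}$ strictly span-decreasing steps interleaved with unary runs of length at most the number of IDB predicates, then discarding non-tight trees via Proposition~\ref{prop:tight:tree} --- is exactly the patch this gloss needs, and it correctly locates where absorptiveness is used. Your infinite direction also replaces the paper's explicit homomorphism transfer from $\bB$ (via positivity) with a direct argument over $\bS$: tag all edges with $\mathbf{1}$ and watch the value jump from $\mathbf{0}$ to $\mathbf{1}$; this is fine, but it silently requires $\mathbf{0} \neq \mathbf{1}$ (for the trivial one-element absorptive semiring the ``only if'' direction is false), which is guaranteed by the paper's standing assumption that all semirings are positive --- you should say so.

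Two sentences in your preamble overreach and should be weakened, even though your actual directional arguments never use the false halves. First, ``the number of iterations naive evaluation needs to fix the value of $\alpha$ equals $\mathrm{ht}(\alpha)$'' conflates first becoming non-$\mathbf{0}$ (which does happen at the minimum proof-tree height, over a positive semiring) with stabilization, which can occur anywhere between the minimum height and the maximum tight height --- indeed, by absorption the value can stabilize \emph{before} all tight trees are accounted for, since a short tree's monomial may absorb a taller tight tree's monomial. For the same reason, the claimed equivalence ``bounded $\Leftrightarrow$ uniform bound on tight-tree height'' is only valid in the direction you use ($\Leftarrow$); the converse happens to hold for chain programs but only as a consequence of the proposition itself, not as a general principle. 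Finally, if you invoke Proposition~\ref{prop:datalog_from_RPQ} to assume the binary normal form, add a remark that the normalization preserves boundedness up to constant factors in the iteration count --- or simply avoid it, since both of your arguments go through verbatim with bodies of length at most some constant $k$, replacing $\log_2$ by $\log_k$, which is how the paper phrases its lower bound.
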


\begin{proof}
For the one direction, suppose that $L$ is a finite $\CFG$. This implies that $L$ only accepts words with length up to some constant $k>0$. Note that at iteration $i>0$ the naive evaluation of $\Pi$ has discovered all paths between two nodes $s,t$ of length at least $i$ that satisfy $L$. This means that the evaluation will reach the fixpoint after at most $k$ iterations, since no path with length $>k$ can be recognized by $L$. Note that this direction holds for any semiring. 

For the other direction, suppose that $L$ is infinite. Then, for any $k>0$, there exists a word $w_k$ with length more than $k$ that is accepted by $L$. Consider now the inputs $\{I_k
\}_{k>0}$ that consist of a path of $|w_k|$ edges with labels that form the word $w_k$.  Note that for any basic chain Datalog program $\Pi$ that computes $L$, at iteration $i>0$ naive evaluation has discovered the paths of length at most $c^i$ for some constant $c$. Suppose we run $\Pi$ on $I_k$ over the Boolean semiring; in this case, naive evaluation would need at least $\log_c k$ rounds to reach a fixpoint, because it needs to find the path of length $k$. For any positive semiring $\bS$ the same lower bound holds because of the semiring homomorphism to the Boolean semiring. 
\end{proof}

In addition, since finiteness of a $\CFG$ can decided in polynomial time, we can decide whether a basic chain Datalog program is bounded or not efficiently.

\subsection{Upper Bounds for Regular Path Queries}

We start with an upper bound result for $\TC$ over an arbitrary absorptive semiring.

\begin{theorem}\label{thm:ssst_reachability}
The provenance polynomial of any fact of the $\TC$ program over any absorptive semiring $\bS$ can be computed by a circuit of size $O(mn)$ and depth $O(n \log n)$,  where $m$ is the input size and $n$ is the size of the active domain.
\end{theorem}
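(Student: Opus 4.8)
The plan is to build the circuit by explicitly unrolling the naive fixpoint iteration from a single fixed source, in the style of a Bellman--Ford computation, and then to justify correctness through the absorption property of $\bS$. Fix the fact $\alpha = T(s,t)$ and let $n = |\dom(I)|$. For the edge relation write $A(u,v)$ for the input variable $x_{u,v}$ when $(u,v)$ is an edge of $I$, and $\mathbf{0}$ otherwise. I would introduce, for every $i \in \{1,\dots,n\}$ and every vertex $v \in \dom(I)$, a node $D_i(v)$ intended to produce the $\oplus$-sum of the monomials of all walks from $s$ to $v$ of length between $1$ and $i$. The base layer is $D_1(v) = A(s,v)$, and for $i \geq 2$ the recurrence is
\[
D_i(v) \;=\; D_{i-1}(v) \;\oplus\; \bigoplus_{u \in \dom(I)} \bigl( D_{i-1}(u) \otimes A(u,v) \bigr),
\]
where only the in-neighbours $u$ of $v$ contribute a nonzero term. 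The output node for $\alpha$ is $D_n(t)$.

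Next I would establish correctness. A straightforward induction on $i$ shows that $D_i(v)$ produces exactly the sum over all walks $s \to v$ of length in $[1,i]$: the first summand carries over the shorter walks, while the $\bigoplus_u$ term extends every length-$\leq(i-1)$ walk ending at $u$ by the edge $(u,v)$. It then remains to show $D_n(t) \equiv_\bS p^I_{\TC}(\alpha)$. By Proposition~\ref{prop:tight:tree}, $p^I_{\TC}(\alpha)$ is equivalent over $\bS$ to the sum of monomials over \emph{all} proof trees, i.e.\ over all walks from $s$ to $t$ of arbitrary length. Every simple path has length at most $n-1 \leq n$, so it already appears among the walks produced by $D_n(t)$; conversely, any non-simple walk repeats a vertex, and deleting the enclosed cycle yields a shorter walk whose monomial is a sub-multiset of the original. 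By the absorption identity $m \oplus m \otimes n = m$ invoked in Proposition~\ref{prop:tight:tree}, each such longer monomial is absorbed by the monomial of its simple sub-path, which is present in $D_n(t)$. Hence the produced polynomial agrees with $p^I_{\TC}(\alpha)$ over $\bS$.

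For the size and depth bounds I would count gates layer by layer. In layer $i$, for each vertex $v$ we spend one $\otimes$-gate per in-edge of $v$ and a $\oplus$-tree combining these products with $D_{i-1}(v)$; summing over $v$ gives $O(\sum_v \deg^-(v)) = O(m)$ gates per layer, hence $O(mn)$ gates overall, matching the claimed size (here sharing of the $D_{i-1}(u)$ nodes across different $v$ is essential, which is exactly why a circuit rather than a formula is used). For the depth, I would combine the at most $\deg^-(v)+1 \leq n$ summands at each vertex with a \emph{balanced} binary $\oplus$-tree of depth $O(\log n)$, preceded by the depth-$1$ layer of $\otimes$-gates. Thus each of the $n$ layers increases the depth by $O(\log n)$, giving total depth $O(n \log n)$, and $D_n(t)$ sits at this depth.

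The hard part will be the correctness argument over $\bS$, not the construction: one must verify that a sum over all bounded-length \emph{walks} (rather than only tight proof trees / simple paths) still yields the correct value, which is precisely where absorptivity is indispensable, and where positivity guarantees no spurious cancellation. Everything else---the recurrence, the inductive characterization of $D_i(v)$, the edge-based gate count, and the balanced-tree depth estimate---is routine. A minor point I would handle explicitly is the convention that walks have length $\geq 1$, so that no spurious $\mathbf{1}$ term is introduced, even in the self-source case $T(s,s)$.
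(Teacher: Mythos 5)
Your proposal is correct and follows essentially the same route as the paper's own proof: both unroll a Bellman--Ford-style recurrence ($D_i(v)$ matching the paper's $f_j^k$), count $O(m)$ gates and $O(\log n)$ depth per layer via shared nodes and balanced $\oplus$-trees, and use absorption ($m \oplus m \otimes n = m$) to collapse the sum over bounded-length walks to the sum over simple paths. Your explicit appeal to Proposition~\ref{prop:tight:tree} and the handling of the length-$\geq 1$ convention are minor presentational refinements, not a different argument.
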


% \begin{proof}[(Proof Sketch of Theorem~\ref{thm:ssst_reachability})]
% The key idea of this construction is to encode the Bellman-Ford algorithm as a semiring circuit. Indeed, the Bellman-Ford algorithm is monotone, computes single-source single-target reachability over any absorptive semiring, and has a good runtime on sparse graphs. The crux of the proof is that the resulting circuit computes the polynomial which is the $\oplus$-sum over all {\em walks} of length no greater than $n$ from node $s$ to node $t$ of the $\otimes$-product over all edges within that walk. A walk may go over the same vertex multiple times. However, since $\bS$ is absorptive, every monomial that corresponds to a walk that is not a path will be absorbed by a monomial that corresponds to a path as a subset of the walk. Hence, the circuit computes the desired result.
% \end{proof}

\begin{proof}
The main observation is that the Bellman-Ford algorithm is monotone, computes single-source single-target reachability over any absorptive semiring, and adapts to sparse graphs~\cite{Bellman58, Ford56, Jukna15}. Assume without loss of generality the source is vertex 1 and the target is vertex $n$. Let $x_{i,j}$ be the variable that corresponds to the edge $E(i,j)$. We will compute the polynomials $f_j^k$, where $1 \leq j \leq n$ and $1\leq k\leq n-1$, for the $k$-th layer and the $j$-th vertex by the recursion 
$$f_j^k := f_j^{k-1} \oplus  \bigoplus_{i \in N_j} (f_{i}^{k-1} \otimes x_{i,j}),$$
where $N_j$ is the in-neighborhood of vertex $j$.
 The initialization consists of $f_j^1 := x_{1,j}$ (if the edge does not exist, we set it to $\mathbf{0}$) and the output gate is simply $f_n^{n-1}$. Clearly each $\otimes$-gate has fan-in 2. We can implement the summation $\bigoplus_{i \neq j}$ with $(|N_j|-1)$
 $\oplus$-gates of fan-in 2 organized in a binary tree of depth $O(\log n)$.

The resulting circuit has $n-1$ layers, each of depth $O(\log n)$ and thus the circuit depth is $O(n \log n)$. Additionally, each layer has $O(n+\sum_j |N_j|) = O(n+m)$ gates. Assuming $m \geq n$ (otherwise we can simply ignore isolated vertices), the circuit has size $O(mn)$. 
 
 We now argue this circuit correctly computes $\TC$ over an absorptive semiring. Indeed, the output gate computes the polynomial which is the sum over all {\em walks}\footnote{A walk can go over the same vertex multiple times.} of length no greater than $n$ from $1$ to $j$ of the product over all edges within that walk. Formally, it produces the polynomial 
    $$
    \bigoplus_{\substack{\text{walk $w$ from $1$ to $n$} \\ w = i_1, i_2, \dots, i_k  \\ i_1 = 1, i_k = n, k \leq n}}\bigotimes_{0 \leq j \leq k-1} x_{i_{j}, i_{j+1}}
    $$
 In particular, the monomial of every path from $1$ to $n$ is produced, i.e., it appears in the polynomial exactly as it is. Since $\bS$ is absorptive, every monomial that corresponds to a walk that is not a path will be absorbed by a monomial that corresponds to a path as a subset of the walk. Hence, the above polynomial is equivalent to:
    $$
    \bigoplus_{\substack{\text{path $p$ from $1$ to $n$} \\ p = i_1, i_2, \dots, i_k \\ i_1 = 1, i_k = n, k \leq n}}\bigotimes_{0 \leq j \leq k-1} x_{i_{j}, i_{j+1}}
    $$
    due to the absorption. This finishes the proof.
\end{proof}

When the input has size $m = \Theta(n^2)$, i.e., the input graph is dense, the circuit constructed in Theorem~\ref{thm:ssst_reachability} has size $O(n^3)$ and depth $O(n\log n)$. We show next that one can construct a circuit of size $O(n^3 \log n)$ and depth $O(\log^2 n)$ over any absorptive semiring. In other words, the depth decreases significantly for parallelization, while the size of the circuit only blows up only by a logarithmic factor. Note that the circuit depth $O(\log^2 n)$ is optimal due to Theorem~\ref{thm:Karchmer_Widgerson}.

\begin{theorem}\label{sec:rpq:matrix}
The provenance polynomial of any fact of the $\TC$ program over any absorptive semiring $\bS$ can be computed by a circuit of size $O(n^3 \log n)$ and depth $O(\log^2 n)$, where $n$ is the size of the active domain.
\end{theorem}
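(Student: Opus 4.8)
The plan is to compute transitive closure via repeated matrix squaring over the semiring, which is the algebraic analogue of the ``path-doubling'' technique referenced in the excerpt. Let $A$ be the $n \times n$ matrix whose $(i,j)$ entry is the edge variable $x_{i,j}$ (or $\mathbf{0}$ if the edge is absent), and let $M := A \oplus \mathbf{I}$, where $\mathbf{I}$ is the matrix with $\mathbf{1}$ on the diagonal and $\mathbf{0}$ elsewhere. Over an absorptive semiring, the $(i,j)$ entry of the matrix power $M^{\otimes k}$ produces the sum over all walks of length \emph{at most} $k$ from $i$ to $j$ of the product of edge variables along the walk; the identity on the diagonal lets shorter walks survive inside the $k$-fold product. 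Since any path in an $n$-vertex graph has length at most $n-1$, the entry $(M^{\otimes (n-1)})_{s,t}$ produces the sum over all such bounded-length walks, and by the same absorption argument used in Theorem~\ref{thm:ssst_reachability}, every non-path walk is absorbed by the path that forms its sub-multiset. Hence this entry \emph{computes} the $\TC$ provenance polynomial for the fact $T(s,t)$.

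The key computational step is that we do not multiply $M$ by itself $n-1$ times directly; instead we compute $M^{\otimes 2^i}$ by repeated squaring, forming $M, M^{\otimes 2}, M^{\otimes 4}, \dots, M^{\otimes 2^{\lceil \log_2(n-1) \rceil}}$. Because $M$ has $\mathbf{1}$ on the diagonal, $M^{\otimes k}$ dominates $M^{\otimes k'}$ for $k \geq k'$ in the natural order (the walk set only grows), so overshooting the exponent to the next power of two is harmless: $M^{\otimes 2^{\lceil \log_2(n-1)\rceil}}$ has the same $(s,t)$ entry as $M^{\otimes(n-1)}$ up to absorption. First I would establish this monotonicity/absorption claim formally, then verify the two resource bounds. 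There are $O(\log n)$ squaring steps. Each squaring computes an $n \times n$ matrix product over the semiring: each of the $n^2$ output entries is a sum $\bigoplus_{\ell=1}^{n}$ of $n$ products, which we implement as $n$ fan-in-two $\otimes$-gates feeding a balanced binary $\oplus$-tree of depth $O(\log n)$. Thus each matrix product contributes $O(n^3)$ gates and $O(\log n)$ depth, and over $O(\log n)$ squaring rounds the total is size $O(n^3 \log n)$ and depth $O(\log^2 n)$, as claimed.

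I expect the main obstacle to be the precise bookkeeping of the absorption argument for repeated squaring, as opposed to a single $(n-1)$-fold product. In Theorem~\ref{thm:ssst_reachability} the recursion tracks walks of length exactly ``at most $k$'' layer by layer, which is transparent; here I must argue that squaring a matrix whose entries already encode bounded-length walks yields entries encoding walks of up to twice the length, and that the diagonal $\mathbf{1}$'s correctly preserve all shorter walks through the doubling (so that after $\lceil \log_2(n-1)\rceil$ rounds no path of length $\le n-1$ is missing). The cleanest way to handle this is to prove by induction on $i$ that $(M^{\otimes 2^i})_{s,t}$ produces exactly the sum over all walks from $s$ to $t$ of length at most $2^i$; the inductive step is a distributivity computation splitting each walk of length $\le 2^{i+1}$ at its midpoint, with the diagonal identity absorbing the case where one half is a trivial (length-zero) walk. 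Once this invariant is in place, the equivalence to the path polynomial over an absorptive semiring follows from exactly the same absorption step already proved in Theorem~\ref{thm:ssst_reachability}, and the size and depth counts are routine.
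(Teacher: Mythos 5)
Your proposal matches the paper's proof essentially step for step: the same matrix $M$ with $\mathbf{1}$ on the diagonal, the same inductive invariant that $(M^{\otimes \ell})_{s,t}$ produces the sum over walks of length at most $\ell$, the same repeated-squaring construction with per-round cost $O(n^3)$ gates and $O(\log n)$ depth, and the same two uses of absorption (non-path walks absorbed by paths, diagonal entries staying $\mathbf{1}$). Your extra care about overshooting $n-1$ to a power of two and the midpoint-splitting induction only makes explicit what the paper states more briefly; there is no substantive difference.
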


% \begin{proof}[(Proof Sketch)]
% The key idea is that we can view the computation of the polynomial as computing the $n$-th power of matrix $M$, $M^n$, where we interpret $+$ and $*$ by the semiring operations $\oplus$ and $\otimes$ respectively. The $(s,t)$-entry of the matrix $M^n$ over $\bS$ computes the provenance polynomial for $\TC$ for the fact $T(s,t)$. 
% To compute $M^n$, we use the standard method of repeated squaring, where we compute $M^2, M^4, M^8, \dots$. Thus, we need to encode in the circuit $O(\log n)$ many matrix multiplication between two $n$-by-$n$ matrices over $\bS$. 
% \end{proof}

\begin{proof}
We are essentially proving the absorptive semiring analogue for $\TC \in \operatorname{NC}^2$. As before, let $x_{i,j}$ be the variable that corresponds to the edge $E(i,j)$. Given the input, consider the matrix $M$ given by 
$$M_{i,j} := \begin{cases}
    \mathbf{1},& \text{if } i = j\\
    x_{i,j}, & \text{if } E(i,j) \text{ exists}\\
    \mathbf{0}, & \text{otherwise}\end{cases}$$
Define the matrix multiplication over $\bS$ as the usual matrix multiplication over $\bC$ except replacing $+$ and $*$ by $\oplus$ and $\otimes$. Then, the $(s,t)$-entry of the matrix $M^n$ over $\bS$ computes the provenance polynomial for $\TC$ for the fact $T(s,t)$. This can be seen by inductively showing that the $(s,t)$-entry of $M^{\ell}$ computes the provenance polynomial of all walks of length no greater than $\ell$ from $s$ to $t$. The absorption property is crucial and will be used in two different places: $(i)$ a monomial of a walk from $s$ to $t$ will be absorbed by a monomial  of a path therein (in particular we can safely ignore self-loops), and $(ii)$ the $(i,i)$-entry of $M^{\ell}$ always equals to $\mathbf{1}$.

    It remains to show how to compute $M^n$ over $\bS$ by a circuit for size $O(n^3 \log n)$ and depth $O(\log^2 n)$. This uses the standard trick of parallelization, which we include for completeness. First, any sum of $n$ elements in $\bS$ can be computed in a ``binary tree fashion'' by a circuit using only $O(\log n)$ many $\oplus$-gates with $O(\log n)$ depth. Then, matrix multiplication between two $n$-by-$n$ matrices over $\bS$ can be computed by a circuit using $O(n^3)$ many $\otimes$-gates and $O(n^2 \log n)$ many $\oplus$-gates with depth $O(\log n)$. Indeed, we can compute the $n^3$ many entry-wise product in parallel and then sum the $n^2$ many $n$-sum in parallel, where during each $n$-sum we apply the previous circuit. Finally, to compute $M^n$, we use the method of repeated squaring, where we compute $M^2, M^4, M^8, \dots$. We need to compute $O(\log n)$ many matrix multiplication between two $n$-by-$n$ matrices over $\bS$. Therefore, there exists a circuit using $O(n^3\log n)$ many $\otimes$-gates and $O(n^2 \log^2 n)$ many $\oplus$-gates with depth $O(\log^2 n)$.
\end{proof}

When the language that corresponds to the $\RPQ$ is finite, we know from Theorem~\ref{thm:bounded:depth} that we can construct a circuit of depth $O(\log |I|) = O(\log n)$. We next show that in this case we can construct a circuit of linear size and logarithmic depth. This circuit is asymptotically optimal.\footnote{Since we require internal gates in the circuit have fan-in 2, we need logarithmic depth to read all inputs.}

\begin{theorem}\label{thm:finite_L}
If the regular language $L$ of an $\RPQ$ is finite, then  we can construct a circuit of size $O(m)$ and depth $O(\log n)$ that computes the provenance polynomial of a fact over any semiring $\bS$. 
\end{theorem}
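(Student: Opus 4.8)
The plan is to build the circuit from the product of the input graph with a deterministic automaton for $L$, exploiting the fact that a \emph{finite} regular language is recognized by an acyclic automaton of constant depth. Concretely, fix a $\DFA$ $A = (Q, \Sigma, \delta, q_0, F)$ with $\mathcal{L}(A) = L$ and discard its trap state together with any state that cannot reach $F$; since $L$ is finite, no surviving (``live'') state can lie on a cycle, because such a cycle could be pumped to produce infinitely many words. Hence the surviving automaton is a DAG whose longest path has length $k := \max_{w \in L} |w| = O(1)$. I would then form the product graph $H$ on vertex set $\dom(I) \times Q$, placing an edge $(u, q) \to (v, q')$ carrying the edge variable $x_{u,v}^a$ whenever $E_a(u,v)$ is an edge of the input and $\delta(q,a) = q'$. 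Because the automaton component strictly advances in topological order along every edge, $H$ is itself a DAG in which every path has length at most $k$.

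Next I would run a single bottom-up dynamic program over $H$, in the same spirit as the Bellman--Ford construction of Theorem~\ref{thm:ssst_reachability}, but now with only a constant number of levels. For each product vertex $(v,q)$ I introduce a gate computing
$$P(v,q) \;:=\; \bigoplus_{(u,q') \to (v,q) \in H} P(u,q') \otimes x_{u,v}^a,$$
seeded with $P(s,q_0) := \mathbf{1}$, where each incoming sum is realized by a balanced binary tree of $\oplus$-gates; the output gate is $\bigoplus_{q_f \in F} P(t, q_f)$. Evaluating these gates in topological order of $H$ yields a circuit that produces exactly the sum, over all $s$-$t$ paths of the input whose label word lies in $L$, of the product of the edge variables along the path---which is precisely $p^I_\Pi(T(s,t))$. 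Since $L$ is finite, this polynomial is a finite expression, so correctness holds over \emph{any} semiring $\bS$ with no appeal to absorption (matching the statement, which is not restricted to absorptive semirings).

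For the bounds I would count as follows. Each input edge spawns at most $|Q| = O(1)$ edges of $H$, so $H$ has $O(m)$ edges; each such edge contributes one $\otimes$-gate and one leaf to some incoming $\oplus$-tree, and the number of $\oplus$-gates is at most the number of edges of $H$ (a binary tree over $d$ leaves uses $d-1$ gates), giving total size $O(m)$. For depth, a path in $H$ has length at most the constant $k$, and traversing one of its edges costs one $\otimes$-gate plus one $\oplus$-tree of fan-in at most $n|Q| = O(n)$, hence depth $O(\log n)$ per level; over the $k = O(1)$ levels the total depth is $O(k \log n) = O(\log n)$. The main obstacle is obtaining \emph{linear} rather than merely polynomial size: the generic bounded-program construction of Theorem~\ref{thm:bounded:depth} (which applies here since finiteness of $L$ implies boundedness by Proposition~\ref{prop:RPQ_bounded_iff_Datalog_bounded}) charges the entire grounded program and is too coarse, so the saving comes entirely from confining the dynamic program to the sparse product graph $H$ and never materializing all paths explicitly. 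A secondary point to verify carefully is that using a \emph{deterministic} automaton makes each accepted word correspond to a unique run, so that every labelled $s$-$t$ path contributes its monomial exactly once, matching the intended provenance semantics.
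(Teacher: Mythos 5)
Your proof is correct and is in substance the paper's own: the paper takes the program for $L$ in left-linear form and applies a magic-set rewriting that binds the source $s$, and the resulting unary IDB facts are exactly your product-graph quantities $P(v,q)$ over $\dom(I)\times Q$, with grounding size $O(m)$ and a constant number of layers because finiteness of $L$ gives boundedness. The extra details you supply---trimming the $\DFA$ and arguing the live part is acyclic (so the dynamic program is well-founded with $k=O(1)$ levels), and using determinism so that each labelled walk contributes its monomial exactly once, which is what makes the claim hold over an arbitrary, not necessarily absorptive, semiring---are points the paper's terse proof leaves implicit, and they check out.
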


\begin{proof}
Consider the Datalog program $\Pi$ for $L$ in left-linear form. The observation is that for a given fact $T(s,t)$, performing a magic-set rewriting will lead to an equivalent program $\Pi'$ where the IDBs are unary (indeed, after the rewriting $s$ will replace the variable in the leftmost position of any IDB). This program $\Pi'$ has grounding of size only $O(m)$. Hence, to construct a circuit it suffices to consider a constant number of layers, each one of size only $O(m)$. 
\end{proof}

\subsection{Lower Bounds for Regular Path Queries}

Our main result in this section states that any $\RPQ$ whose regular language $L$ is infinite has exactly the same circuit size and depth complexity as $\TC$. In other words, $\TC$ is {\em the} question for $\RPQ$ in terms of circuit complexity. Note that Theorem~\ref{thm:dichotomy_RPQ} implies a dichotomy of circuit depth for $\RPQ$s: a depth-optimal circuit for an $\RPQ$ either has depth $\Theta(\log n)$ or $\Theta(\log^2 n)$, with nothing in between. Furthermore, it is decidable which class of the two an $\RPQ$ belongs to, since it is decidable whether a regular language is finite.

\begin{theorem}\label{thm:dichotomy_RPQ}
Let $\Pi$ be a basic chain Datalog program that corresponds to an infinite regular language $L$. Then, the provenance polynomial for $\Pi$ has the same circuit depth and size complexity as $\TC$ over any absorptive semiring $\bS$.
\end{theorem}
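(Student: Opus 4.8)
The plan is to establish two matching directions. Since $\TC$ is itself the $\RPQ$ for the infinite regular language $E^*$, proving "same complexity'' amounts to showing that (i) any $\RPQ$ whose language $L$ is infinite regular is \emph{no harder} than $\TC$ (matching upper bounds), and (ii) it is \emph{at least as hard} as $\TC$ (matching lower bounds). I would obtain the upper bounds from a product-automaton reduction to $\TC$, and the lower bounds from embedding the hard $\TC$ instance into the $\RPQ$ via the pumping lemma.

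For the upper bound, I would fix a DFA $A=(Q,\Sigma,\delta,q_0,F)$ for $L$ with $|Q|=O(1)$ and form the product graph $G\times A$ whose vertices are pairs $(v,q)$ and which has an edge from $(v,q)$ to $(v',\delta(q,a))$ whenever $G$ has an $a$-labeled edge $v\to v'$; this product edge carries the same variable as the underlying $G$-edge. A path in $G\times A$ from $(s,q_0)$ to $(t,q_f)$ with $q_f\in F$ is exactly a path in $G$ spelling a word of $L$, and---because $\bS$ is absorptive, so walks are absorbed into paths exactly as in the proof of Theorem~\ref{thm:ssst_reachability}---the single-source single-target reachability provenance in $G\times A$ coincides with the $\RPQ$ provenance. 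Since $|Q|$ is constant, $G\times A$ has $O(n)$ vertices and $O(m)$ edges, so applying the circuits of Theorem~\ref{thm:ssst_reachability} and Theorem~\ref{sec:rpq:matrix} to $G\times A$ yields circuits of size $O(mn)$, depth $O(n\log n)$ and of size $O(n^3\log n)$, depth $O(\log^2 n)$, matching $\TC$.

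For the lower bound, I would invoke the pumping lemma: since $L$ is infinite and regular, there exist words $u,v,w$ with $|v|\geq 1$ and $uv^iw\in L$ for every $i\geq 0$. Take the hard $(n^{0.1},n)$-layered instance $G$ of Theorem~\ref{thm:Karchmer_Widgerson}, in which every $s$-$t$ path has the same length $D$ equal to the number of layer transitions. Build an $\RPQ$ input $G'$ by replacing each edge of $G$ with a fresh simple path labeled by $v$, prepending a fresh $u$-labeled path from a new source $s'$ to $s$, and appending a fresh $w$-labeled path from $t$ to a new target $t'$. Then every $s'$-$t'$ path in $G'$ spells exactly $uv^Dw\in L$ and corresponds bijectively to an $s$-$t$ path in $G$. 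Now specialize the input variables of $G'$: for the $v$-gadget replacing a $G$-edge $e$ set one designated edge-variable to $x_e$ and all remaining gadget edges (as well as all $u$- and $w$-edges) to $\mathbf{1}$. Under this substitution the monomial of each $G'$-path collapses to the $\TC$ monomial of the corresponding $G$-path, so the specialized $\RPQ$ provenance on $G'$ equals $p^{G}_{\TC}(T(s,t))$. Substituting the constant $\mathbf{1}$ for inputs never increases circuit size or depth, so any circuit computing the $\RPQ$ provenance on $G'$ yields a circuit of no greater size and depth for $p^{G}_{\TC}(T(s,t))$. Because $|G'|=\Theta(|G|)$, the active domain size $n'$ satisfies $\log n'=\Theta(\log n)$, and by Theorem~\ref{thm:Karchmer_Widgerson} together with Proposition~\ref{prop:positive_semiring} this provenance requires $\Omega(\log^2 n)=\Omega(\log^2 n')$ depth over any positive (hence any absorptive) semiring; the $\Omega(m)$ size bound follows since a fan-in-two circuit must read all relevant inputs.

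The main obstacle is the faithfulness of the gadget reduction: I must design the $u$-, $v$-, and $w$-gadgets so that the only words spelled by $s'$-$t'$ paths lie in $L$ and correspond one-to-one to $s$-$t$ paths in $G$, and so that the substitution recovers exactly the $\TC$ monomials. Ruling out "shortcut'' paths that cross gadget boundaries and spell a different word of $L$, and verifying that provenance is preserved under absorption (walks versus paths) through both the product construction and the embedding, is the delicate part; the remaining work is routine bookkeeping about the constant blowups in size, depth, and active domain.
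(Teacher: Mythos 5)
Your proposal is correct and follows essentially the same two-sided strategy as the paper: a pumping-lemma gadget that expands each edge of the hard layered $\TC$ instance into a $v$-labeled path (with one designated variable per gadget and all other inputs set to $\mathbf{1}$) for the lower bound, and the product-automaton graph fed into the $\TC$ circuits of Theorems~\ref{thm:ssst_reachability} and~\ref{sec:rpq:matrix} for the upper bound. The only cosmetic difference is that the paper states the upper-bound direction as a generic size- and depth-preserving circuit reduction from $\Pi$ to $\TC$, explicitly taking an $\oplus$-sum over the $K$ accept states, which your phrase ``target $(t,q_f)$ with $q_f\in F$'' implicitly requires.
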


\begin{proof}
    We begin with the reduction from $\TC$ to $\Pi$.
    Since $L$ is an infinite regular language, 
    by the pumping lemma for regular languages~\cite{Sipser97}, there exists an integer $p \geq 1$ such that every string of length at least $p$ can be written as $xyz$ such that:
    \begin{enumerate}
        \item $|y| \geq 1$;
        \item $|xy| \leq p$;
        \item $(\forall n \geq 0) (xy^nz)$ is accepted.
    \end{enumerate}
    Formally speaking, our proof will consist of two steps. 
    %First, given an input graph $G_{st}$ for $\TC$, we need to construct an instance graph $G_L$ for the $\RPQ$ at hand such that $\RPQ$ on $G_L$ returns $\TRUE$ if and only if $\TC$ on $G_{st}$ returns $\TRUE$. 
    First, given an EDB instance $I$ for $\TC$ and a fact $T(s,t)$, we  construct an EDB instance $\bar{I}$ and a fact $\bar{T}(\bar{s}, \bar{t})$ for $\Pi$ such that $T(s,t)$ is $\TRUE$ for $\TC$ on $I$ if and only if $\bar{T}(\bar{s}, \bar{t})$ is $\TRUE$ for $\Pi$ on $\bar{I}$.
    Second, we need to show that reduction implies a depth-preserving and size-preserving reduction between circuits. That is, given the input of $\TC$ as tagged variables and a circuit for $\Pi$, we need to construct a circuit for $\TC$ of no more depth in an asymptotic sense. 

    For the first step we will apply the pumping lemma. Let $I$ be the input for $\TC$.
    We obtain $\bar{I}$ by performing the following transformation, which expands each edge into a longer labeled path:
    \begin{enumerate}
        \item Introduce $|x|$ many new distinct vertices $s_0, s_1, s_2, \dots, s_{|x|-1}$. Label the edges between $s_{i}$ and $s_{i+1}$, where $0 \leq i \leq |x|-1$, by $x_i$ where $s_{|x|} := s$ and $x = x_1x_2\dots x_{|x|}$;
        \item Introduce $|y|-1$ many new vertices $u_1, u_2, \dots, u_{|y|-1}$ between each edge $(u,v)$. Label the edges between $u_i$ and $u_{i+1}$, where $0 \leq i \leq |y|-1$, by $w_i$ where $u_0 := u$, $u_{|y|} = v$ and $y = y_1 y_2 \dots y_{|y|}$. Note that $|y| \geq 1$, therefore the reduction is always valid;
        \item Introduce $|z|$ many new distinct vertices $t_1, t_2, \dots, t_{|z|}$. Label the edges between $t_{i}$ and $t_{i+1}$, where $0 \leq i \leq |z|-1$, by $z_i$ where $t_{0} := t$ and $z = z_1z_2\dots z_{|z|}$.
    \end{enumerate}
    Finally, the fact that corresponds to $T(s,t)$ is $\bar{T}(s_0, t_{|z|})$.
    By construction, the desired equivalence holds. Furthermore, the instance construction can easily encoded as construction of EDBs. It remains to show this reduction can be translated to a size-preserving and depth-preserving reduction between circuits. Note that a general circuit that solves $\Pi$ must solve $\Pi$ when the input graph is restricted to the exact form given by the above reduction. Then, given the input of $\TC$ as tagged variables and a circuit $\bar{F}$ for $\Pi$, we mimic the instance construction to give a circuit $F$ for $\TC$ by doing the following:
    \begin{enumerate}
        \item The input variables, the output gate and the edges between $\oplus$-gates and $\otimes$-gates remain intact;
        \item Whenever an edge in $\bar{F}$ connects to an input variable $x$ for $\Pi$, connect {\em one of the edges} to the input variable $x'$ in $F$ for $\TC$ where $x$ is an edge in the expansion of $x'$ during the transformation, and connect all other edges to $\mathbf{1} \in \bS$.
    \end{enumerate}
    Over $\bB$, the constructed circuit $F$ returns $\TRUE$ if and only if there exists a path from $s$ to $t$. Moreover, over any absorptive semiring $\bS$, $\bar{F}$ computes the provenance polynomial defined as the sum over all path from $s$ to $t$ of the product over all edge variables within each path, due to the structure of the instance for $\Pi$ and the language $L$. Therefore, $F$ correctly computes the provenance polynomial for $T(s,t)$ over $\bS$. We remark the above reduction is not a formula reduction, since the constant input $\mathbf{1}$ will be used $\Theta(m)$ times.

    \smallskip

  We now show the reduction from $\Pi$ to $\TC$, following the same two steps above. Given an input graph $I$ for $\Pi$, we first construct the {\em product graph} 
 between $I$ and the Deterministic Finite Automaton $\DFA$ that recognizes the language $L$. This product graph has size $O(m)$ and $O(n)$ vertices. Let $K \geq 1$ be the number of accept states in the $\DFA$. Then, we run $\TC$ $K$ times over this product graph, once for each accept state $t$, where for a fact $T(u,v)$ the starting node is $(u, s_0)$ with $s_0$ the initial state of the automaton and the end node is $(v,t)$. Then, we take the union over the $K$ results. To make this a circuit and formula reduction that preserves size and depth, we simply do the following:
     \begin{enumerate}
        \item The input variables, the output gate and the edges between $\oplus$-gates and $\otimes$-gates remain intact;
        \item Let an edge in the product graph be of the form $((v_1, v_2), (s_1,s_2))$, where $(v_1, v_2)$ is an edge in $I$ and $s_1$ to $s_2$ is a valid state transition in the $\DFA$. Whenever an edge in the circuit of $\TC$ connects to the input variable for $((v_1, v_2), (s_1,s_2))$, connect that edge to the input variable for $(v_1, v_2)$ in the $\RPQ$. That is, connecting the input variables based on its projections to $G$.
    \end{enumerate}
    Finally, we take the $\oplus$-sum over $K$ such circuits, one for every accept state.  A moment of reflection should convince the reader that the resulting circuit indeed answers $\Pi$ correctly and it is of the same size and depth. Indeed, any path/walk in the product graph corresponds to a valid path/walk in the labeled graph that satisfies $L$.  
%    It is worth mentioning the above reduction does not hold for formulas, since multiple edges in the product graph might correspond to the same edge in $G$. However, we can repeat each variable $|\DFA|$-many times to make it so. This is valid since $|\DFA|$, the number of states in the $\DFA$, is considered a constant.
\end{proof}

Combining Theorem~\ref{thm:size_depth_tradeoff} and Theorem~\ref{thm:dichotomy_RPQ}, we immediately obtain super-polynomial lower bounds for the formula size.

\begin{theorem}
If the regular language of an $\RPQ$ is infinite, then its equivalent Datalog program $\Pi$ has a $\Omega(n^{\log n})$ formula size lower bound.
\end{theorem}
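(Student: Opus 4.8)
The plan is to chain together the two results that precede this statement, converting a circuit-depth lower bound into a formula-size lower bound. By Theorem~\ref{thm:dichotomy_RPQ}, any basic chain Datalog program $\Pi$ corresponding to an infinite regular language has the same circuit depth complexity as $\TC$ over any absorptive semiring $\bS$. By Proposition~\ref{prop:positive_semiring}, depth lower bounds transfer from $\bB$ to any positive (hence absorptive) semiring, and conversely it suffices to work over $\bB$ to obtain a lower bound that applies everywhere. Invoking Theorem~\ref{thm:Karchmer_Widgerson}, we know $\TC$ requires circuits of depth $\Omega(\log^2 n)$ over $\bB$; transferring through the reduction in Theorem~\ref{thm:dichotomy_RPQ} gives that $\Pi$ also requires circuits of depth $\Omega(\log^2 n)$.

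Next I would apply the size-depth tradeoff in the contrapositive. Theorem~\ref{thm:size_depth_tradeoff} states that any formula of size $S$ over $\bB$ can be rebalanced into an equivalent circuit (formula) of depth $O(\log S)$. Hence if $\Pi$ admitted a formula of size $S$, it would have a circuit of depth $O(\log S)$; combining with the lower bound $\Omega(\log^2 n)$ forces $\log S = \Omega(\log^2 n)$, i.e.\ $S = \Omega(2^{c \log^2 n})$ for some constant $c>0$. Writing this as $\Omega(n^{c \log n})$ recovers the claimed $\Omega(n^{\log n})$ bound (absorbing the constant $c$ into the asymptotic notation, or equivalently stating the bound as $n^{\Omega(\log n)}$). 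The reasoning is carried out over $\bB$, and Proposition~\ref{prop:positive_semiring} then lifts the formula-size lower bound to every absorptive semiring, matching the statement's quantification ``over any absorptive semiring.''

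The one subtlety I would be careful about is the direction in which the reductions and tradeoffs compose, since Theorem~\ref{thm:size_depth_tradeoff} is stated for the Boolean semiring specifically. The clean way to organize the argument is to first reduce everything to $\bB$ via Proposition~\ref{prop:positive_semiring}: a polynomial-size formula over $\bS$ would yield a polynomial-size formula over $\bB$, which by Theorem~\ref{thm:size_depth_tradeoff} yields an $O(\log n)$-depth Boolean circuit, contradicting the $\Omega(\log^2 n)$ depth lower bound inherited from $\TC$ through Theorem~\ref{thm:dichotomy_RPQ}. I do not anticipate a genuine technical obstacle here, as all the machinery is already in place; the main care is bookkeeping the logarithms so that $\Omega(\log^2 n)$ depth translates to exactly $n^{\Omega(\log n)}$ formula size rather than an off-by-a-polynomial-factor statement, and noting that $n$ (active domain size) and $m$ (input size) are polynomially related so the bound can be stated in either parameter.
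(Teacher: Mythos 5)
Your proposal is correct and is essentially the paper's own argument: the paper derives this theorem in one line by combining the size--depth tradeoff (Theorem~\ref{thm:size_depth_tradeoff}) with the $\Omega(\log^2 n)$ depth lower bound that Theorem~\ref{thm:dichotomy_RPQ} transfers from $\TC$ (Theorem~\ref{thm:Karchmer_Widgerson}), working over $\bB$ and lifting via Proposition~\ref{prop:positive_semiring}, exactly as you do. Your extra care about the constant in the exponent (i.e., reading the bound as $n^{\Omega(\log n)}$) and about relating $n$ and $m$ is a sound clarification of the same route, not a deviation from it.
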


Note that even though our upper bound is super-polynomial in $n$, there is still some gain since the provenance polynomial might contain exponentially many monomials and thus the trivial way to construct a formula requires truly exponential size. 
%Finally, since any linear program has the polynomial fringe property~\cite{UllmanG88}, the matching circuit depth upper bound is given by Theorem~\ref{thm:pfp_small_depth_circuit}.

%\begin{corollary}
%    If the regular language $L$ in $\RPQ$ accepts infinitely many strings, then its equivalent Datalog program $\Pi$ has circuit depth complexity $\Theta(\log^2 n)$.
%\end{corollary}

\subsection{Bounds for General Basic Chain Datalog}

%\begin{proposition}\label{prop:CFG bounded_iff_Datalog_bounded}
%    A $\CFG$ accepts strings with bounded length if and only if its equivalent Datalog program is bounded.
%\end{proposition}
%\begin{proof}
%    Similarly as in the proof of Proposition~\ref{prop:RPQ_bounded_iff_Datalog_bounded}, we simply note that each iteration of the naive evaluation of the program will extend the length of the accepted string by at most 1.
%\end{proof}

%Since it is decidable whether a $\CFG$ accepts strings with bounded length, Proposition~\ref{prop:CFG bounded_iff_Datalog_bounded} implies whether a basic chain Datalog program is bounded can be decided; in fact, can be done within polynomial time in the program size. 

In this section, we focus on general basic chain Datalog. Our main result is the following circuit depth lower bound for any unbounded basic chain Datalog program.

\begin{theorem}\label{thm:unbounded_CFG}
    If a basic chain Datalog program $\Pi$ is unbounded, then there exists a set of EDBs and a fact $\alpha$  such that the circuit for the provenance polynomial of $\alpha$ over any absorptive semiring $\bS$ has depth $\Omega(\log ^2n)$, where $n$ is the size of the active domain.
\end{theorem}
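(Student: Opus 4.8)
The plan is to adapt the two-step reduction of Theorem~\ref{thm:dichotomy_RPQ} from regular languages to arbitrary context-free grammars, replacing the regular pumping lemma by the pumping lemma for context-free languages and exploiting the rigid geometry of the Karchmer--Wigderson instances. First, since $\Pi$ is unbounded, Proposition~\ref{prop:RPQ_bounded_iff_Datalog_bounded} tells us that the associated $\CFG$ $L$ is infinite, so $L$ contains arbitrarily long words. The pumping lemma for context-free languages then yields a word $w = uvxyz \in L$ with $|vy| \geq 1$, $|vxy|$ bounded by a constant depending only on $L$, and $u\,v^{n}\,x\,y^{n}\,z \in L$ for every $n \geq 0$. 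Assume without loss of generality that $v \neq \epsilon$ (the case $v=\epsilon$, $y \neq \epsilon$ is symmetric). The hard part, relative to the regular case, is exactly here: a context-free pump ties together \emph{two} substrings $v$ and $y$ through a single common exponent, so a one-sided edge-expansion no longer suffices; the reduction must manufacture precisely as many $v$-blocks as $y$-blocks.

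The key observation that resolves this is that the hard $\TC$ instances of Theorem~\ref{thm:Karchmer_Widgerson} are $(n^{0.1}, n)$-layered graphs, i.e.\ acyclic graphs in which \emph{every} $s$--$t$ path has the same length $n-1$. This equal-length property lets me route the graph on the $v$-side and balance it against a \emph{fixed} $y$-side gadget of matching length. Concretely, given such an instance $I$ with fact $T(s,t)$, I would construct an instance $\bar I$ for $\Pi$ by chaining, in order: a fixed $u$-labeled path from a fresh source $\bar s$ into $s$; the graph $I$ with every edge subdivided into a path spelling $v$; a fixed $x$-labeled connector leaving $t$; a fixed gadget consisting of $n-1$ consecutive blocks each spelling $y$; and a fixed $z$-labeled path ending at a fresh sink $\bar t$. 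Because the only entry into the graph region is $s$ and the only exit is $t$, every $\bar s$--$\bar t$ path must traverse an $s$--$t$ path of $I$, and by the layered structure it spells exactly $u\,v^{\,n-1}\,x\,y^{\,n-1}\,z \in L$. Hence $\bar T(\bar s,\bar t)$ holds if and only if $t$ is reachable from $s$ in $I$, since all gadget edges exist unconditionally and contribute nothing to reachability.

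Next I would upgrade this instance-level reduction to a depth- and size-preserving circuit reduction, exactly as in Theorem~\ref{thm:dichotomy_RPQ}. Given a circuit $\bar F$ for $p^{\bar I}_\Pi(\bar T(\bar s,\bar t))$ over $\bS$, I leave its gate structure untouched and only re-wire its input gates: for each edge $e$ of $I$, feed its $\TC$-variable $x_e$ into one input of the corresponding $v$-block, and feed $\mathbf{1}$ into every remaining input (the leftover $v$-block edges and all $u$-, $x$-, $y$-, $z$-gadget edges). Since the layered graph is acyclic, walks coincide with paths and the resulting circuit $F$ produces $\bigoplus_{P:\, s\text{-}t\text{ path}} \bigotimes_{e \in P} x_e$, which over any absorptive (hence idempotent) semiring is equivalent to $p^I_{\TC}(T(s,t))$ --- duplicate monomials arising from distinct parses of the same label collapse by idempotency. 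Re-wiring input gates does not change depth, so $\operatorname{depth}(F) \le \operatorname{depth}(\bar F)$; as noted before, this is a circuit reduction and not a formula reduction because $\mathbf{1}$ is reused many times.

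Finally, I would close the argument with the known lower bound. By Theorem~\ref{thm:Karchmer_Widgerson} combined with Proposition~\ref{prop:positive_semiring}, any circuit computing $p^I_{\TC}(T(s,t))$ over any positive (in particular, absorptive) semiring $\bS$ has depth $\Omega(\log^2 n)$. Since the gadgets enlarge the vertex set only by subdividing each edge a constant number of times and appending $O(n)$ fresh gadget vertices, the active domain of $\bar I$ has size $\operatorname{poly}(n)$, so its logarithm is $\Theta(\log n)$ and the bound $\Omega(\log^2 n)$ is unaffected. Combining with $\operatorname{depth}(F) \le \operatorname{depth}(\bar F)$ forces $\operatorname{depth}(\bar F) = \Omega(\log^2 n)$, which is the claim. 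I expect the synchronization of the two context-free pumps to be the only genuine obstacle; everything else is a faithful transcription of the regular-language reduction, and it is precisely the equal-path-length property of the layered Karchmer--Wigderson graphs that makes the fixed $y^{\,n-1}$ gadget legitimate.
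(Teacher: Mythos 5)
Your proposal is correct and takes essentially the same route as the paper's own proof: the paper likewise invokes the pumping lemma for context-free languages, subdivides each edge of the layered Karchmer--Wigderson instance so that it spells the first pumped block $v$, appends a \emph{fixed} suffix gadget containing exactly as many copies of the second pumped block as the common $s$--$t$ path length of the layered graph (the paper's $q := w x^{\ell} y$), and then rewires input gates to the relevant edge variable or to $\mathbf{1}$ exactly as in the circuit reduction of Theorem~\ref{thm:dichotomy_RPQ}, finishing with Theorem~\ref{thm:Karchmer_Widgerson} and Proposition~\ref{prop:positive_semiring}. Your synchronization of the two pumps via the equal-path-length property of the layered instances is precisely the paper's mechanism, and your index bookkeeping (prefix $u$, graph spelling $v^{\,n-1}$, connector, $y^{\,n-1}$ gadget, suffix $z$) is, if anything, stated more carefully than the paper's.
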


\begin{proof}
    Roughly, the proof consists of the same idea as in the proof of Theorem~\ref{thm:dichotomy_RPQ}. %However, due to the different structure of the pumping lemma for $\CFG$s, we will make the reduction look like a ``return run'' rather than a ``one-way'' run.
    Recall that the pumping lemma for $\CFG$s~\cite{Sipser97} says that there exists some integer $p \geq 1$ such that every accepted string $s$ that has a length of $p$ or more can be written as $s = uvwxy$ with sub-strings $u,v,w,x,y$, such that: \begin{itemize}
        \item $|vx| \geq 1$;
        \item $|vwx| \leq p$;
        \item $uv^iwx^iy$ is accepted for all integer $i \geq 0$.
    \end{itemize}
    We follow the two-steps approach in the proof of Theorem~\ref{thm:dichotomy_RPQ}. For the first step, we will show an instance-level reduction from $\TC$ to $\CFG$. It will be without loss of generality to assume $|v| \geq 1$, since $|vx| \geq 1$, in the construction that we are going to explain. Also, observe that when $w$ and $x$ are both the empty string, the $\CFG$ degenerates to a regular language, which is addressed by Theorem~\ref{thm:dichotomy_RPQ}. Therefore, we can assume $|wx| \geq 1$.

    Let $I$ be an $(\ell,n)$-layered directed graph as the input to $\TC$. We construct $\bar{I}$ as the input for the $\CFG$ by the following transformation:
    \begin{enumerate}
        \item Let $p:= uv$. Introduce $|p|$ many new distinct vertices $s_0, s_1, s_2, \dots, s_{|p|-1}$. Label the edges between $s_{i}$ and $s_{i+1}$, where $0 \leq i \leq |p|-1$, by $_i$ where $s_{|p|} := s$ and $p = p_1p_2\dots p_{|p|}$;
        \item Introduce $|v|-1$ many new vertices $u_1, u_2, \dots, u_{|v|-1}$ between each edge $(u,u')$. Label the edges between $u_i$ and $u_{i+1}$, where $0 \leq i \leq |y|-1$, by $w_i$ where $u_0 := u$, $u_{|v|} = u'$ and $v = v_1 v_2 \dots v_{|v|}$. Note that $|v| \geq 1$, therefore the reduction is always valid;
        \item Let $q:= wx^\ell y$. Introduce $|q|$ many new distinct vertices $t_1, t_2, \dots, t_{|q|}$. Label the edges between $t_{i}$ and $t_{i+1}$, where $0 \leq i \leq |q|-1$, by $q_i$ where $t_{0} := t$ and $q = q_1q_2\dots q_{|z|}$.
    \end{enumerate}
    Finally, the fact that corresponds to $T(s,t)$ is $\bar{T}(s_0, t_{|q|})$.
    By construction, the desired equivalence holds. Furthermore, the instance construction can easily encoded as construction of EDBs. The reason that this instance-level reduction can be made into a size-preserving and depth-preserving reduction between circuits is similar to the proof of Theorem~\ref{thm:dichotomy_RPQ}: Keep all internal gates and wires between internal gates intact; For each wire that connects to an input variable $x = (u,u')$, connect the wire to the input variable $(u,u_1)$ in transformation step (2); Connect all other wires to the constant $\mathbf{1} \in \bS$. The same argument in the proof of Theorem~\ref{thm:dichotomy_RPQ} shows the constructed circuit computes the provenance polynomial of $\Pi$ over $\bS$ correctly. 
\end{proof}

Combining Theorem~\ref{thm:unbounded_CFG} and Theorem~\ref{thm:size_depth_tradeoff}, we obtain that any unbounded basic chain Datalog program has super-polynomial formula size lower bound over $\bB$. {The following theorem is specified over $\bB$ since a lower bound over $\bB$ implies a lower bound over any absorptive semiring.}

\begin{theorem}
    Any formula for an unbounded basic chain Datalog program has size at least $\Omega(n^{\log n})$ over $\bB$.
\end{theorem}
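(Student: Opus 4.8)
The plan is to derive the formula-size lower bound purely by combining the circuit-depth lower bound of Theorem~\ref{thm:unbounded_CFG} with the depth-size tradeoff of Wegener (Theorem~\ref{thm:size_depth_tradeoff}), using the latter in its contrapositive form. Since $\Pi$ is unbounded, Theorem~\ref{thm:unbounded_CFG} supplies a family of EDB instances $\bar{I}$, parameterized by the active-domain size $n$, together with a target fact $\alpha$, such that every circuit computing the provenance polynomial $p^{\bar{I}}_\Pi(\alpha)$ over an absorptive semiring---and in particular over $\bB$---has depth $\Omega(\log^2 n)$. I would fix this family and prove the size bound for it.

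Next I would take any formula $F$ over $\bB$ that computes $p^{\bar{I}}_\Pi(\alpha)$, and set $S := |F|$. By Theorem~\ref{thm:size_depth_tradeoff}, $F$ admits an equivalent balanced formula, hence an equivalent circuit, of depth $O(\log S)$. The key point is that this rebalanced object computes the same Boolean function $p^{\bar{I}}_\Pi(\alpha)$, so the $\Omega(\log^2 n)$ depth lower bound must apply to it as well. Writing $c_1, c_2 > 0$ for the constants hidden in the two asymptotic statements, this forces $c_1 \log^2 n \le c_2 \log S$, and solving gives $\log S = \Omega(\log^2 n)$, i.e.\ $S \ge 2^{(c_1/c_2)\log^2 n} = n^{(c_1/c_2)\log n}$. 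This is exactly the claimed super-polynomial bound $\Omega(n^{\log n})$, with the constant in the exponent being implicit, completing the argument.

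There is no genuine mathematical obstacle here; the work is entirely in bookkeeping. The one point requiring care is to invoke Wegener's theorem over the Boolean semiring, where it is stated, and to verify that the rebalanced circuit indeed computes the identical function so that the depth bound transfers to it. Finally, I would remark why stating the result over $\bB$ is without loss of generality and in fact the strongest phrasing: by Proposition~\ref{prop:positive_semiring}, a formula-size lower bound for $p^{\bar{I}}_\Pi(\alpha)$ over $\bB$ transfers up to any positive---in particular any absorptive---semiring, so the Boolean bound already yields the corresponding bound for all the semirings of interest.
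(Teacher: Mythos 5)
Your proposal is correct and follows essentially the same route as the paper, which obtains the bound exactly by combining the $\Omega(\log^2 n)$ circuit-depth lower bound of Theorem~\ref{thm:unbounded_CFG} with Wegener's depth--size tradeoff (Theorem~\ref{thm:size_depth_tradeoff}); your explicit calculation $\log S = \Omega(\log^2 n)$, hence $S \geq n^{c\log n}$, just spells out what the paper leaves implicit. The closing remark on transferring the Boolean bound to positive semirings via Proposition~\ref{prop:positive_semiring} matches the paper's stated rationale for phrasing the theorem over $\bB$.
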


\section{Upper and Lower Bounds for General Datalog}
\label{sec:upper}

In this section, we prove both upper and lower bounds on circuit depth for Datalog programs that go beyond basic chain Datalog programs.

\subsection{Upper Bounds for Datalog with the Polynomial Fringe Property}

We start by defining the notion of a polynomial fringe as introduced in~\cite{UllmanG88}.

\begin{definition}[Polynomial Fringe Property]
Let $\Pi$ be a Datalog program with target $T$. We say that $\Pi$ has the polynomial fringe property if for every input $I$ of size $m$, the size of any tight proof tree of a fact in $T$ is polynomial in $m$.
\end{definition}

In other words, programs with the polynomial fringe property have "small" tight proof trees. Any linear Datalog program has the polynomial fringe property~\cite{UllmanG88}. It can also be shown that the data complexity of evaluating any program  with the polynomial fringe property is $\textsf{NC}^2$~\cite{UllmanG88}. Hence, such Datalog programs are inherently parallel. Here, we show an analogue of this result for circuits over absorptive semirings.

\begin{theorem}\label{thm:pfp_small_depth_circuit}
Let $\Pi$ be a Datalog program with target $T$ that satisfies the polynomial fringe property. Let $\bS$ be an absorptive semiring. Then, for any input $I$ and any fact of $T$, we can construct a circuit for the provenance polynomial $p^I_\Pi(\alpha)$ over $\bS$ with polynomial size and depth $O(\log^2 |I|)$.
\end{theorem}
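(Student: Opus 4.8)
The plan is to combine the polynomial fringe property with a divide-and-conquer argument on proof trees, adapting the classical $\mathsf{NC}^2$ evaluation of Ullman--Van Gelder~\cite{UllmanG88} to the semiring setting. The key observation is that by Proposition~\ref{prop:tight:tree}, over an absorptive semiring it suffices to sum over tight proof trees, and the polynomial fringe property guarantees each such tree has at most $\operatorname{poly}(|I|)$ leaves (hence polynomially many internal nodes, since every internal node has fan-in bounded by the maximum body length, a constant depending only on $\Pi$).

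First I would set up a collection of ``generalized'' provenance polynomials indexed by pairs of IDB facts. For IDB facts $\alpha$ and $\beta$ and an integer parameter $\ell$, let $g_\ell(\alpha,\beta)$ denote the sum over all proof-tree fragments rooted at $\alpha$ with a single distinguished ``hole'' leaf labeled $\beta$, where the fragment has at most $\ell$ leaves, of the product of the variables at the genuine (non-hole) leaves. Intuitively $g_\ell(\alpha,\beta)$ is the contribution of a path-like piece of a proof tree that starts at $\alpha$ and still needs to derive $\beta$. The target polynomial $p^I_\Pi(\alpha)$ is then obtained by summing, over all IDB facts $\beta$ and all EDB-only completions, the product $g_\ell(\alpha,\beta)$ with the base contribution of $\beta$. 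The heart of the argument is a ``balanced splitting'' lemma: for any tight proof tree with at most $\ell$ leaves, there exists an internal node $\beta$ whose subtree has between $\ell/3$ and $2\ell/3$ leaves; this lets us write a recurrence $g_\ell(\alpha,\beta) \gets \bigoplus_{\gamma} g_{\lceil 2\ell/3\rceil}(\alpha,\gamma) \otimes g_{\lceil 2\ell/3\rceil}(\gamma,\beta)$, so that each doubling level of the circuit multiplies the allowed size by a constant factor $3/2$.

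Next I would bound the circuit resources. Since the fringe is polynomial, say bounded by $m^c$, we only need $\ell$ to range up to $m^c$, and the recurrence reduces $\ell$ to $\tfrac{2}{3}\ell$ at each level, so $O(\log m)$ levels suffice. Within each level, for each triple $(\alpha,\gamma,\beta)$ of IDB facts we introduce one $\otimes$-gate, and for each pair $(\alpha,\beta)$ we $\oplus$-sum over the intermediate fact $\gamma$; implementing each such sum over the $O(N)$ choices of $\gamma$ in a balanced binary tree costs $O(\log N) = O(\log |I|)$ depth, where $N$ is the number of IDB facts (polynomial in $|I|$). Multiplying the $O(\log m)$ outer levels by the $O(\log |I|)$ depth-per-level gives total depth $O(\log^2 |I|)$, and the size is polynomial since each level has $\operatorname{poly}(N)$ gates and there are $O(\log m)$ levels. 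Correctness over the semiring follows because the recurrence faithfully enumerates every tight proof tree exactly once as a product of two smaller fragments glued at the splitting node, and absorption (Proposition~\ref{prop:tight:tree}) lets us safely include non-tight trees that the recurrence may also generate without changing the value.

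The main obstacle I anticipate is making the splitting recurrence both correct and non-redundant in a way that respects the branching (non-path) structure of general proof trees: unlike the $\mathsf{TC}$ case where fragments are genuinely path-like with a single hole, a rule body may contain several IDB atoms, so a fragment rooted at $\alpha$ can have multiple ``pending'' IDB leaves, and the single-hole formulation $g_\ell(\alpha,\beta)$ must be justified by showing that one can always peel off one heavy subtree at a time while treating the sibling subtrees as already-completed multiplicative factors. Handling this carefully — defining the generalized quantities so that the balanced-separator argument applies to arbitrary bounded-fan-in trees, and verifying that each tight proof tree decomposes uniquely (or at least that over-counting is absorbed) — is where the real work lies; the resource bookkeeping that yields $O(\log^2|I|)$ depth is then routine.
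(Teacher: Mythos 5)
Your high-level plan is the same as the paper's, which instantiates the Ullman--Van Gelder construction: single-hole ``conditional'' provenance values indexed by pairs of IDB facts, composed over $O(\log m)$ levels each costing $O(\log |I|)$ depth, with absorption (Proposition~\ref{prop:tight:tree}) justifying restriction to tight trees. But as written there is a genuine gap, and it is exactly the one you flag in your last paragraph. Your only recurrence, $g_\ell(\alpha,\beta) \gets \bigoplus_{\gamma} g_{\lceil 2\ell/3\rceil}(\alpha,\gamma) \otimes g_{\lceil 2\ell/3\rceil}(\gamma,\beta)$, composes two single-hole fragments along a path; it never applies a rule, so starting from base fragments it can only build path-shaped derivations and cannot generate any proof tree in which some rule body contains two or more IDB atoms. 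Moreover, the balanced-separator node $\gamma$ of a fragment with hole $\beta$ need not lie on the root-to-hole path: if the heavy subtree hangs off a sibling branch, cutting at $\gamma$ leaves an upper piece with \emph{two} holes ($\beta$ and $\gamma$), which your single-hole quantities cannot express. So the decomposition is not complete as stated, and ``peeling off one heavy subtree at a time while treating siblings as completed factors'' is precisely the claim that needs proof, not a detail to defer.

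The paper closes this gap the way Ullman and Van Gelder do: the circuit maintains at stage $k$ both completed values $G^{(k)}(\id{0},\id{\alpha})$ and conditional values $G^{(k)}(\id{\delta},\id{\alpha})$, and each stage \emph{interleaves} two operations: (i) a rule-application step, in which for a grounding $\alpha \obtainedfrom \delta \wedge_i \beta_i \wedge_j \gamma_j$ one IDB atom $\delta$ is left as the hole while every sibling $\beta_i$ is plugged in at its already-completed value from the previous stage --- this is what handles branching and is exactly the operator your recurrence lacks --- and (ii) one squaring step of transitive closure on the two-index graph, which is your path composition. The UVG potential argument then shows each combined stage shrinks the remaining tight proof tree by a constant factor, so $K = O(\log_{4/3} \max_i |T_i|) = O(\log |I|)$ stages suffice by the polynomial fringe property, each stage having polynomial size and depth $O(\log |I|)$ since each $\oplus$-summation is implemented as a balanced binary tree. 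Your resource bookkeeping and the appeal to absorption are fine, but without the rule-application step and a resolution of the two-hole issue, the core combinatorial lemma of the proof is missing rather than routine.
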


\begin{proof}
The idea is to apply the construction of Ullman and Van Gelder~\cite{UllmanG88} for a fast parallel algorithm for Datalog programs with the polynomial fringe property. 
%This construction will give a circuit that can output the values of not only one, but all IDB facts of $T$.

We will first ground the program $\Pi$. Let $\mathcal{N}$ be the set of IDB facts in the grounding. We assign for every IDB fact $\alpha$ a unique id $\id{\alpha}$. We will also introduce a special id $\id{0}$. It will be convenient to represent a grounded rule as $\alpha \obtainedfrom \wedge_i \beta_i  \wedge_j \gamma_j$, where $\beta_i$'s are the IDB facts, and $\gamma_j$'s the EDB facts.

Intuitively, the circuit will keep track of a directed graph $G$ with vertices the ids from $\mathcal{N} \cup \{0\}$. The idea is that the edge $G(\id{0},\id{\alpha})$ will hold the correct value of IDB fact $\alpha$ at the end of the computation. The other edges of the form $G(\id{\beta},\id{\alpha})$ will help to speed up the computation.

The circuit is constructed in $K$ stages (we will define $K$ at a later point). For stage $0$, we define every edge in $G$ to have value $\mathbf{0}$, that is, for every $\alpha,\beta \in \mathcal{N} \cup \{0\}$, $G^{(0)}(\id{\beta}, \id{\alpha}) \gets \mathbf{0}$. For stage $k=1, \dots, K$, we do the following three things. 

First, for any IDB fact $\alpha \in \mathcal{N}$:
$$G_1^{(k)}(\id{0}, \id{\alpha}) \gets \bigoplus_{\alpha \obtainedfrom \wedge_i \beta_i  \wedge_j \gamma_j} \left( \bigotimes_{i} G^{(k-1)}(\id{0}, \id{\beta_i}) \otimes \bigotimes_{i} x_{\gamma_j} \right)$$

Second, for any pair of IDB facts $\alpha,\delta \in \mathcal{N}$:
$$G_1^{(k)}(\id{\delta}, \id{\alpha}) \gets  \bigoplus_{\alpha \obtainedfrom \delta \wedge_i \beta_i  \wedge_j \gamma_j} \left( \bigotimes_{i} G_1^{(k)}(\id{0}, \id{\beta_i}) \otimes \bigotimes_{j} x_{\gamma_j} \right)$$

Third, compute the new graph. For any  $\alpha, \beta  \in \mathcal{N} \cup \{0\}$:
$$G_2^{(k)}(\id{\alpha}, \id{\beta}) \gets 
G^{(k-1)}(\id{\alpha}, \id{\beta}) \oplus G_1^{(k)}(\id{\alpha}, \id{\beta})$$

Fourth, we compute one step of transitive closure on $G$. For any  $\alpha, \beta  \in \mathcal{N} \cup \{0\}$:
$$ G^{(k)}(\id{\alpha}, \id{\beta}) \gets G_2^{(k)}(\id{\alpha}, \id{\beta}) \oplus \bigoplus_{\gamma} \left( G_2^{(k)}(\id{\alpha}, \id{\gamma})\otimes G_2^{(k)}(\id{\gamma}, \id{\beta}) \right) $$
We obtain the value of IDB fact $\alpha$ by reading the gate $G^{(K)}(\id{0}, \id{\alpha})$ at the last stage $K$.

Note that each stage can be implemented via a sub-circuit of polynomial size and $O(\log |I|)$, since each summation needs logarithmic depth to be computed.

We have not defined what is the number of necessary stages $K$. Let $\{T_1, \dots, T_m\}$ be the set of tight derivation trees for an IDB fact $\alpha$. Recall that because of the polynomial fringe property, the size of every tight derivation tree $T_i$ is polynomial in the input $I$. Ullman and Van Gelder~\cite{UllmanG88} show that we need $\log_{4/3} |T_i|$ stages to compute correctly $T_i$. Hence, if we set $K = \max_i \{\log_{4/3} |T_i|\}$, the IDB fact will be computed correctly. Moreover, $K$ must also be $O(\log |I|)$.
\end{proof}

%Since every linear program $\Pi$ has the polynomial fringe property, we obtain that we can construct circuits with $O(\log^2 |I|)$ depth for any linear Datalog program. 

\begin{corollary}
Let $\Pi$ be a linear Datalog program with target $T$. Let $\bS$ be an absorptive semiring. Then, for any input $I$ and any IDB fact of $T$, we can construct a circuit for the provenance polynomial $p^I_\Pi(\alpha)$ over $\bS$ with polynomial size and depth $O(\log^2 |I|)$.
\end{corollary}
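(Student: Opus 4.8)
The plan is to reduce this statement immediately to Theorem~\ref{thm:pfp_small_depth_circuit} by establishing that every linear Datalog program satisfies the polynomial fringe property. Once that structural fact is in hand, the circuit of polynomial size and depth $O(\log^2 |I|)$ is produced verbatim by the construction of that theorem, so no new circuit-building work is required.

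First I would analyze the shape of an arbitrary tight proof tree $\tau$ of a fact $\alpha$ in $T$. Because $\Pi$ is linear, every rule has at most one IDB in its body; hence every internal node of $\tau$ has at most one child that is again an internal node (corresponding to the single body IDB), while all of its remaining children are leaves labeled by EDB facts. Consequently the internal nodes of $\tau$ form a single root-to-node path, a \emph{spine}, rather than a branching tree.

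Second I would bound the length of the spine. By tightness, no IDB fact is repeated along any leaf-to-root path, and the spine is exactly one such path, so the facts labeling the spine are pairwise distinct. The number of distinct grounded IDB facts is at most $|\mN|$, which is $O(n^a)$ for $a$ the maximum IDB arity and $n = |\dom(I)|$; since $a$ is a constant of $\Pi$, the spine has length polynomial in the input size. Then, since each spine node contributes at most $b$ leaf children, where $b$ is the maximum body length of a rule of $\Pi$ (again a constant), the total size of $\tau$ is at most $b$ times the spine length, hence polynomial in $m$. This is precisely the polynomial fringe property.

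Finally I would invoke Theorem~\ref{thm:pfp_small_depth_circuit} to obtain the claimed circuit over any absorptive semiring $\bS$. I do not anticipate a genuine obstacle here: the entire content is the observation that linearity forces the non-branching spine structure, from which the polynomial bound on the fringe is immediate, and the remainder is a direct appeal to the already-proved upper bound.
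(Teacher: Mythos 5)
Your proposal is correct and follows essentially the same route as the paper: the corollary there is obtained immediately from Theorem~\ref{thm:pfp_small_depth_circuit} together with the fact, cited from Ullman and Van Gelder~\cite{UllmanG88}, that every linear Datalog program has the polynomial fringe property. The only difference is that you prove this fact directly rather than citing it---your spine argument (linearity forces the internal nodes of a tight proof tree onto a single path of pairwise-distinct grounded IDB facts, hence of length polynomial in the input, with constantly many leaf children per node) is a correct, self-contained substitute for the citation.
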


Theorem~\ref{thm:pfp_small_depth_circuit} also tells us that the formulas for programs with the polynomial fringe property can be of size $O(m^{\log m})$, thus subexponential in $m$. Not all programs with the polynomial fringe property are linear. 

\begin{example}
One interesting and important example of a non-linear program with the polynomial fringe property is the basic chain Datalog program with grammar expressions with matching parentheses, $S \gets () \mid (S) \mid SS$, also known as {\em Dyck-1 reachability}: 
\begin{align*}
   S(x,y) & \obtainedfrom L(x,z) \wedge R(z,y) \\
   S(x,y) & \obtainedfrom L(x,w) \wedge S(w,z) \wedge R(z,y) \\ 
   S(x,y) & \obtainedfrom S(x,z) \wedge S(z,y) 
\end{align*}
Since this program both has the polynomial fringe property, and its grammar is infinite, we obtain that circuits for Dyck-1 reachability have depth $\Theta(\log^2 m)$.
\end{example}

{We remark that our work does not yield a full dichotomy on whether formulas of polynomial size exist for a fixed program. Indeed, for a bounded program, we have polynomial-size and logarithmic-depth formulas in terms of formula complexity; at the same time, we show in this paper that some unbounded programs admit a super-polynomial lower bound on formula size.}

\subsection{Lower Bounds for Monadic Linear Datalog}

% \blue{In this section, we prove several circuit depth lower bounds for unbounded datalog program. All Datalog programs will be over $\bB$ unless stated otherwise. By Proposition~\ref{prop:positive_semiring}, all lower bounds results over $\bB$ transfer easily to any positive semiring.
%}

A rule $r$ in a monadic linear Datalog program is of the form $Q_1(x) \obtainedfrom a \wedge Q_2(y)$ where the $Q_i$'s are IDBs and $a$ is the set of EDBs in the rule; $Q_2$ could be potentially absent, in which case the rule is an initialization rule. Given a set of atoms $F$, the {\em variable graph} for $F$ is an undirected graph $G_F = (V,E)$ where $V$ is the set of variables appears in $F$ and two variables are connected by an edge if and only if they both appear in one atom in $F$. We say that a rule $r$ is {\em connected} if $G_A$ is a connected graph and $x, y$ are vertices in $G_A$. A Datalog program is called {\em connected} if every rule is connected. Most Datalog programs seen in practice are connected. We prove the following classification result:

\begin{theorem}\label{thm:monadic_linear_boundedness}
Let $\Pi$ be a monadic linear connected Datalog program, and $\bS$ be a $\otimes$-idempotent absorptive semiring. Then, the target predicate $Q$ of $\Pi$ admits a circuit of depth $\Theta(\log^ 2 m)$ if $\Pi$ is unbounded, and $\Theta(\log m)$ if it is bounded.   
\end{theorem}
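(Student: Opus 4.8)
The plan is to handle the bounded and unbounded cases separately, after observing that three of the four required bounds follow directly from earlier results. In the bounded case the upper bound $O(\log m)$ is exactly Theorem~\ref{thm:bounded:depth}, while the matching $\Omega(\log m)$ is the trivial fan-in-two bound: on an appropriate instance the provenance polynomial depends on $\Omega(m)$ input variables, and combining that many inputs with fan-in-two gates requires logarithmic depth. In the unbounded case the upper bound $O(\log^2 m)$ is immediate as well, since a monadic linear program is in particular linear and hence has the polynomial fringe property~\cite{UllmanG88}, so Theorem~\ref{thm:pfp_small_depth_circuit} (via its corollary for linear programs) applies. Thus the entire content of the theorem reduces to proving the $\Omega(\log^2 m)$ depth lower bound for unbounded programs.

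For that lower bound I would first pass to the Boolean semiring. Since $\bS$ is absorptive and $\otimes$-idempotent, Corollary~\ref{cor:bounded:equiv} guarantees that $\Pi$ is unbounded over $\bS$ if and only if it is unbounded over $\bB$, and by Proposition~\ref{prop:positive_semiring} any depth lower bound established over $\bB$ transfers to $\bS$. So it suffices to show that an unbounded monadic linear connected $\Pi$ requires depth $\Omega(\log^2 n)$ over $\bB$, which I would do by a depth- and size-preserving reduction from $st$-connectivity on $(\ell, n)$-layered graphs, exactly as in the proofs of Theorems~\ref{thm:dichotomy_RPQ} and~\ref{thm:unbounded_CFG}; Theorem~\ref{thm:Karchmer_Widgerson} then forces the bound.

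The structural observation that drives the reduction is that monadicity together with linearity and connectedness turns each recursive rule $Q_1(x) \obtainedfrom a \wedge Q_2(y)$ into a constant-size connected EDB gadget with exactly two ports --- the input port $y$ and the output port $x$ --- and, because $Q_2$ is monadic, consecutive gadgets in an expansion are glued at a single shared vertex. In effect $\Pi$ behaves like a ``generalized $\RPQ$'' whose alphabet symbols are these gadgets, and its expansions are the gadget-sequences labeling paths of the IDB-dependency automaton from an initialization rule to the target $Q$. Using unboundedness I would extract a productive loop: a recursive predicate $Q'$ reachable from an initialization rule and co-reachable to $Q$, together with a gadget-loop $g^\star$ from $Q'$ back to $Q'$ whose repeated insertion yields arbitrarily long, non-absorbed expansions. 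Given an $(\ell, n)$-layered input $I$ with source $s$ and sink $t$, I would build $\bar I$ by replacing every edge of $I$ with a fresh copy of $g^\star$ (identifying the output port of one copy with the input port of the next), prepending a constant-size prefix gadget anchored by an initialization rule at the source, and appending a constant-size suffix gadget that reaches $Q$ at a vertex $\bar t$ corresponding to $t$. Over $\bB$ the target fact $Q(\bar t)$ is then derivable on $\bar I$ exactly when $t$ is reachable from $s$ in $I$; each gadget has constant size, so the active domain of $\bar I$ is $\Theta(m)$. The circuit reduction is identical in spirit to Theorem~\ref{thm:dichotomy_RPQ}: keep all internal gates and wires intact, route each gadget's ``spine'' variable to the corresponding layered-edge variable, and wire every remaining auxiliary EDB variable to the constant $\mathbf{1}$, so that absorption makes the produced polynomial the sum over $s$-to-$t$ paths with no asymptotic change in depth or size.

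The main obstacle is the structural pumping step. I must make precise that unboundedness --- and not merely recursiveness --- guarantees a productive loop gadget $g^\star$ whose expansions are genuinely new rather than absorbed by bounded-length ones; this is exactly where Corollary~\ref{cor:bounded:equiv} and the boundedness/$\UCQ$ correspondence of Section~\ref{sec:bounded} do the work, letting me argue at the level of the Boolean gadget-automaton. The second delicate point is ruling out ``shortcut'' derivations when many copies of $g^\star$ are chained: because each gadget is a connected $\CQ$ (this is where the connectedness hypothesis is indispensable) and the layered graph only has edges between consecutive layers, I must argue that the only homomorphic embeddings of an expansion into $\bar I$ traverse the gadgets in order, so that derivability of $Q(\bar t)$ over $\bB$ coincides with honest layered reachability rather than a collapsed shorter derivation. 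Once these two points are secured, the reduction together with Theorem~\ref{thm:Karchmer_Widgerson} and Proposition~\ref{prop:positive_semiring} closes the argument.
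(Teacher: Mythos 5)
Your skeleton matches the paper's: three of the four bounds come from Theorem~\ref{thm:bounded:depth}, the fan-in-two depth bound, and Theorem~\ref{thm:pfp_small_depth_circuit}; the transfer to $\bB$ via Corollary~\ref{cor:bounded:equiv} and Proposition~\ref{prop:positive_semiring} is exactly right; and the reduction from layered $st$-connectivity using two-port connected gadgets is the same construction as the paper's Theorem~\ref{thm:lower_bound_monadic_linear}. But the step you flag as the ``main obstacle'' is the actual content of the proof, and the tools you propose for it do not suffice. The paper does not argue from the generic boundedness/$\UCQ$ correspondence of Section~\ref{sec:bounded}; it invokes the specific characterization of Cosmadakis, Gaifman, Kanellakis and Vardi~\cite{CosmadakisGKV88} --- $\Pi$ is unbounded over $\bB$ iff for every $k$ there is $w \in \operatorname{expand}(\Pi)$ with $|w|>k$ whose length-$k$ prefix lies in $\operatorname{notaccept}(\Pi)$ --- together with the regularity of both $\operatorname{expand}(\Pi)$ and $\operatorname{notaccept}(\Pi)$ (Proposition~\ref{prop:expand_notaccept_regular}). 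Pumping the rejected prefix and truncating against the DFA for $\operatorname{accept}(\Pi)$ then yields a word $xyzu$ with the two properties the reduction lives on: $xy^izu \in \operatorname{accept}(\Pi)$ for all $i \geq 0$, \emph{and every nontrivial prefix of $xy^izu$ lies in $\operatorname{notaccept}(\Pi)$}. Your notion of a ``productive loop $g^\star$ whose repeated insertion yields non-absorbed expansions'' delivers only the first property; the $\UCQ$-equivalence characterization gives you arbitrarily long genuinely-new expansions, but says nothing about prefixes, and the prefix condition is not obtainable from it.

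This is not a cosmetic omission: without prefix rejection your reduction is unsound. Suppose some proper prefix $w'$ of $xy^izu$ were accepted. Take a layered graph in which $t$ is unreachable from $s$ but which still has edges out of $s$; your instance $\bar I$ then contains the prefix gadget anchored at $s$ followed by a partial chain of copies of $g^\star$, and the expansion witnessing $w' \in \operatorname{accept}(\Pi)$ embeds homomorphically into that initial segment, deriving the target fact --- a false positive. Connectedness and the layered structure cannot exclude this, because the offending derivation traverses the gadgets in order; it is simply \emph{shorter} than a full $s$-to-$t$ traversal. Your second flagged point (``only embeddings that traverse gadgets in order'') is thus aimed at the wrong failure mode; the shortcut derivations are blocked precisely by the notaccept-prefix property, which your plan never establishes. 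A smaller but genuine second gap: you assume every recursive rule $Q_1(x) \obtainedfrom a \wedge Q_2(y)$ yields a gadget with two distinct ports, but rules with $x = y$ give single-port gadgets that cannot advance across layers; the paper handles this with a separate argument that some rule in the pumped loop must have $x \neq y$, since a loop consisting solely of $x=y$ rules produces equivalent expansions and would contradict unboundedness.
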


An application of Theorem~\ref{thm:pfp_small_depth_circuit} gives us the $O(\log^2 m)$ depth upper bound. To prove  the lower bound, observe that from Theorem~\ref{cor:bounded:equiv} unboundedness over $\bS$ is equivalent to unboundedness over $\bB$. Thus, in the rest of the section we will show an $\Omega(\log^2 m)$ circuit depth lower bound over $\bB$, which will then transfer up to any $\otimes$-idempotent absorptive semiring using Proposition~\ref{prop:positive_semiring}.

\smallskip

Given a linear monadic program $\Pi$ with target $Q$, recursive bodies $\{a_1, \dots, a_m\}$ and initialization bodies $\{b_1, \dots, b_n\}$, we consider the expansions of $\Pi$ into $\CQ$s that sequentially apply some recursive rules and finish with a initialization rule~\cite{CosmadakisGKV88}, in the spirit of Theorem~\ref{thm:bound:idem}. Formally, the {\em recursive alphabet} $\Sigma_{\Pi, r}$ is the set $\{a_1, \ldots, a_m\}$, and {\em initialization alphabet} $\Sigma_{\Pi, i}$ is the set $\{b_1, \ldots, b_n\}$. The {\em alphabet} $\Sigma_{\Pi}$ is the union of $\Sigma_{\Pi, r}$ and $\Sigma_{\Pi, i}$. Then, the set of expansions of $\Pi$ can be viewed as a language over $\Sigma_{\Pi}$; we denote this language as $\operatorname{expand}(\Pi)$. Furthermore, let $C$ be a $\CQ$ denoted by a word of $(\Sigma_{\Pi, r})^*$ or a word of $\left(\Sigma_{\Pi, r}\right)^*\left(\Sigma_{\Pi, i}\right)$. We say that $C$ is {\em accepted} by $\Pi$, if there is some expansion $C_i$ such that there exists a homomorphism from $C_i$ to $C$, which implies the result of $C$ is contained in $C_i$ over $\bB$~\cite{ChandraM77}. Let $\operatorname{accept}(\Pi)$ be the language over alphabet $\Sigma_{\Pi}$ consisting of the words which correspond to $\CQ$s accepted by $\Pi$. The language $\operatorname{notaccept}(\Pi)$ is the complement of $\operatorname{accept}(\Pi)$ over $(\Sigma_{\Pi}^*)$. Note that $\operatorname{expand}(\Pi) \subseteq \operatorname{accept}(\Pi)$. The following result proved in~\cite{CosmadakisGKV88} characterizes the boundedness of a linear connected monadic Datalog.

\begin{theorem}[Cosmadakis, Gaifman, Kanellakis \& Vardi, 88']
    Let $\Pi$ be a linear connected monadic program. Then, $\Pi$ is unbounded over $\bB$ if and only if for every $k>0$, there is a word $w$ in $\operatorname{expand}(\Pi)$ such that $|w|>k$ and the prefix of $w$ of length $k$ is in $\operatorname{notaccept}(\Pi)$.
\end{theorem}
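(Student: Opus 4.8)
The plan is to route the proof through the containment characterization of boundedness already available over $\bB$ and then translate it into the language-theoretic condition on prefixes. Since $\bB$ is an $\otimes$-idempotent absorptive semiring, Theorem~\ref{thm:bound:idem} (specialized to $\bB$) gives that $\Pi$ is bounded iff there is an integer $N$ such that every expansion $C_n$ satisfies $C_n \subseteq_{\bB} \bigcup_{i \le N} C_i$. Over $\bB$, containment of a single $\CQ$ in a $\UCQ$ is witnessed by a homomorphism, so this says: every expansion's $\CQ$ admits a homomorphism into it from some expansion of length at most $N$. I would use the contrapositive form throughout: $\Pi$ is unbounded iff no uniform length bound suffices, i.e., for every $k$ there is an expansion $w$ that is not subsumed by any expansion of length $\le k$.

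First I would fix the combinatorial model of expansions. Each recursive body $a_i$ is a connected ``gadget'' linking an entry variable to an exit variable, and an expansion $w = a_{i_1}\cdots a_{i_\ell} b_j$ is the chain obtained by gluing consecutive gadgets at their shared monadic junction variable and capping the chain with the initialization gadget $b_j$, with the argument of the target $Q$ as the distinguished variable. Connectedness guarantees each gadget is internally connected and genuinely joins its entry to its exit, and monadicity guarantees that consecutive gadgets meet in a single variable, so $C_w$ is a ``thin'' chain. The heart of the argument is a homomorphism-locality lemma: because the chain is thin and every gadget is connected, any homomorphism between two expansion $\CQ$s must respect the linear order of the chain, so its image on an initial segment of the source is confined to an initial segment of the target, up to a window whose size is bounded by the longest gadget. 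I would prove this by tracking the images of the junction variables and invoking connectedness to rule out ``wrap-around'' or collapsing maps.

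With this locality lemma in hand, I would establish the two implications linking prefixes and subsumption. For one direction, if the length-$k$ prefix $u$ of $w = uv$ is accepted, then the witnessing homomorphism folds the initial segment of the chain onto a strictly shorter segment; splicing this folded prefix onto the unchanged suffix $v$ yields a strictly shorter expansion $w'$ together with a homomorphism $C_{w'} \to C_w$, so $w$ is subsumed by something shorter. For the converse, if $w$ is subsumed by an expansion of length $\le k$, the locality lemma forces the subsuming homomorphism to land inside the length-$k$ prefix region of $w$, which exhibits a homomorphism witnessing that the prefix lies in $\operatorname{accept}(\Pi)$. Combining the two directions with a pigeonhole argument over the finitely many IDB predicates (to turn ``subsumed by something shorter'' into ``subsumed by something of length at most a fixed $k_0$'') gives: $\Pi$ is bounded iff there is $k_0$ such that every expansion of length greater than $k_0$ has its length-$k_0$ prefix in $\operatorname{accept}(\Pi)$. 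Negating both sides yields exactly the stated equivalence.

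I expect the main obstacle to be the homomorphism-locality lemma, and in particular controlling the interaction between the recursive prefix and the initialization base inside a subsuming homomorphism: one must rule out that the base atoms of a short subsuming expansion map deep into the suffix $v$ in a way that bypasses the prefix, and it is precisely here that connectedness of the gadgets and monadicity of the junction variables are indispensable. Once homomorphisms are pinned down to act as order-preserving sliding windows along the chain, the prefix/subsumption dictionary and the final pigeonhole step are routine.
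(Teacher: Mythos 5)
First, a point of comparison: the paper does not prove this statement at all --- it imports it verbatim from Cosmadakis, Gaifman, Kanellakis and Vardi~\cite{CosmadakisGKV88}, whose argument is automata-theoretic. So your sketch is measured against the original proof, and as written it has two genuine gaps. The first is the homomorphism-locality lemma, which is false in the strong form you need. Connectedness and monadicity do \emph{not} force homomorphisms between expansion chains to act as order-preserving sliding windows: a rule such as $U(x) \obtainedfrom E(x,y) \wedge E(z,y) \wedge U(z)$ has a zigzag gadget, and chains of zigzags admit folding homomorphisms that map later gadgets back onto earlier ones. What connectedness and monadicity actually give is much weaker: a homomorphism from a $\CQ$ of bounded size that preserves the distinguished variable has image of bounded diameter around the distinguished element. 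That weaker fact is enough for one direction (a subsumer of length at most $N$, with $k \gg N$, must land inside the length-$k$ prefix region, so a non-accepted prefix rules out short subsumers and yields unboundedness), but it is not enough for the converse, which is where the real work lies.

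The second gap is in that converse, and your step there misreads the definition of acceptance. If the length-$k$ prefix $u$ of $w$ is accepted, this means some \emph{complete} expansion $C_i$ (ending in an initialization atom) maps homomorphically into the canonical database $D_u$; nothing ``folds the prefix onto a strictly shorter segment,'' and you cannot splice the suffix $v$ onto $C_i$, since an expansion cannot continue past its initialization atom. What you do get is $C_i \to D_u \subseteq D_w$, hence $C_w \subseteq_{\bB} C_i$ --- but $C_i$ need not be shorter than $w$, and your pigeonhole on IDB predicates alone does not terminate the resulting descent: each subsumer may itself be long, and iterating the hypothesis just produces another possibly-long subsumer. To repair this you must bound the length of a \emph{minimal} subsumer mapping into the bounded database $D_u$, by pigeonholing junctions of $C_i$ on triples (state of the automaton for $\operatorname{expand}(\Pi)$, cf.\ Proposition~\ref{prop:expand_notaccept_regular}; IDB predicate; image element in $D_u$) and splicing both the word and the homomorphism at a repeated junction --- legitimate precisely because monadicity makes the junction a single shared variable. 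Only with that pumping argument does ``every long expansion has an accepted length-$k$ prefix'' yield a finite set of subsuming expansions and hence boundedness via Theorem~\ref{thm:bound:idem}; your sketch gestures at the right ingredients but the stated lemma is false and the stated splicing step does not typecheck.
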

We shall also make use of the following proposition proved in~\cite{CosmadakisGKV88}.

\begin{proposition}\label{prop:expand_notaccept_regular}
    Both the language $\operatorname{expand}(\Pi)$ and $\operatorname{notaccept}(\Pi)$ are regular.
\end{proposition}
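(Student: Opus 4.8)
The plan is to prove both languages regular by constructing finite automata directly, treating the two cases separately. Since regular languages are closed under complement and $\operatorname{notaccept}(\Pi)$ is by definition the complement of $\operatorname{accept}(\Pi)$ over $\Sigma_\Pi^*$, it suffices to show that $\operatorname{expand}(\Pi)$ and $\operatorname{accept}(\Pi)$ are each regular.

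First I would handle $\operatorname{expand}(\Pi)$, which is the easy direction. Recall that an expansion is a word $a_{i_1} \cdots a_{i_k} b_j$ obtained by applying a sequence of recursive rules starting from the target $Q$ and capping with one initialization rule. Such a sequence is a valid expansion precisely when the rule applications chain correctly: the head IDB of $a_{i_1}$ must be $Q$, the body IDB of each recursive symbol must equal the head IDB of the following symbol, and the body IDB of $a_{i_k}$ must equal the head IDB of $b_j$. Since $\Pi$ has only finitely many IDB predicates, I would build a DFA whose states are the IDB predicates (together with an initial state and an accepting sink): on reading a recursive symbol the automaton checks that the symbol's head matches the current state and transitions to the state named by the symbol's body IDB, and on reading an initialization symbol whose head matches the current state it moves to the accepting sink. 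This automaton recognizes exactly $\operatorname{expand}(\Pi)$, which is therefore regular.

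The substantive work is showing $\operatorname{accept}(\Pi)$ is regular. Here I would first make the chain structure of expansions explicit: unfolding a word $c_1 \cdots c_\ell$ produces a CQ whose canonical structure is a spine of variables $x_0, x_1, \ldots$, where the EDB atoms contributed by each rule symbol $c_t$, together with a bounded number of auxiliary variables, are glued between the consecutive spine variables $x_{t-1}$ and $x_t$. This is exactly where monadicity and connectedness are used, since they force each rule body to be a single connected gadget attached to the two spine variables $x = x_{t-1}$ and $y = x_t$. A word $C$ lies in $\operatorname{accept}(\Pi)$ iff some expansion $C_i$ admits a homomorphism into the CQ $C_w$ determined by $C$, and I would build a nondeterministic automaton that reads $C$ left to right and guesses such a homomorphism on the fly. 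The state records only a bounded window of alignment information: which spine position of a candidate expansion is currently being matched, together with the finite local data needed to check that the EDB atoms of the expansion map consistently into the atoms available at the current spine segment of $C_w$.

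The main obstacle, and the heart of the argument, is to prove that this homomorphism search needs only finitely much state. This is where the path-like structure forced by monadicity and connectedness is essential: a homomorphism between two chain-structured CQs must send the spine of $C_i$ into the spine of $C_w$ along a monotone (non-decreasing) alignment, so that verifying the homomorphism reduces to a sequence of local consistency checks between consecutive spine segments, each involving only the finitely many atoms and variables contributed by the finitely many rules of $\Pi$. Consequently no unbounded information about the already-read prefix is required, the automaton has finitely many states, and $\operatorname{accept}(\Pi)$ is regular. Taking the complement over $\Sigma_\Pi^*$ then yields the regularity of $\operatorname{notaccept}(\Pi)$, completing the proof.
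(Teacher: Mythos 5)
Your proof for $\operatorname{expand}(\Pi)$ (a DFA whose states are the IDB predicates, checking that heads and body-IDBs chain correctly and that the final letter is an initialization body with matching head) is correct, and the reduction of $\operatorname{notaccept}(\Pi)$ to $\operatorname{accept}(\Pi)$ via closure under complement is fine. Note, for calibration, that the paper itself gives no proof of this proposition: it is imported wholesale from Cosmadakis, Gaifman, Kanellakis and Vardi (1988), so the real question is whether your argument for $\operatorname{accept}(\Pi)$ stands on its own.

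It does not, because of one step you flag as the heart of the argument: the claim that a homomorphism between two chain-structured $\CQ$s ``must send the spine of $C_i$ into the spine of $C_w$ along a monotone (non-decreasing) alignment.'' This is false in general. Consider the monadic, linear, connected rule $Q(x) \obtainedfrom E(x,y) \wedge E(z,y) \wedge Q(z)$: its gadget is symmetric in its two spine endpoints, so the canonical structure $C_w$ behaves like an undirected path, and a homomorphism from a longer expansion can fold back and forth across $C_w$ (e.g., map spine vertices $y_0,y_1,y_2,y_3,y_4$ of the expansion to $x_0,x_1,x_2,x_1,x_2$); moreover spine variables of $C_i$ may land on auxiliary, non-spine variables of $C_w$. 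Your one-way NFA, whose state is a single pointer recording ``which spine position of a candidate expansion is currently being matched,'' is therefore inadequate: a folding homomorphism crosses the same cut of $C_w$ an unbounded number of times, which a single left-to-right pointer cannot certify. The bounded-interface intuition you invoke is the right one — monadicity and connectedness do force every cut of $C_w$ to consist of one spine variable plus a gadget of bounded size — but the correct way to exploit it is to observe that the homomorphism search is precisely an accepting run of a \emph{two-way} nondeterministic automaton over $w$ (each rule application moves the head left, right, or not at all by one block), and then convert to a one-way automaton by Shepherdson's crossing-sequence construction, equivalently by tracking at each cut the finite \emph{set} of crossing states and directions rather than a single position. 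With that substitution your outline closes; without it, the construction as written rejects words that are genuinely in $\operatorname{accept}(\Pi)$ via folding homomorphisms.
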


We are now ready to prove the lower bound via a circuit reduction from transitive closure. The main idea is to encode each edge between two layers by the canonical database of the expansions instead of by a sequence of edges as in the case of basic chain Datalog.

\begin{theorem}\label{thm:lower_bound_monadic_linear}
Let $\Pi$ be a monadic linear connected Datalog program, and $\bS$ be a $\otimes$-idempotent absorptive semiring. If $\Pi$ is unbounded over $\bB$, then there exists an input $I$ and a fact $t$ in the target $Q$ such that the circuit for the provenance polynomial of $t$ has depth $\Omega(\log ^2 m)$.
\end{theorem}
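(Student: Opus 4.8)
The plan is to reduce from $st$-connectivity on layered graphs, following the two-step template of Theorem~\ref{thm:dichotomy_RPQ} but replacing each edge gadget (there a labeled path) with the canonical database of an expansion. By Proposition~\ref{prop:positive_semiring} a depth lower bound over $\bB$ transfers to every positive semiring, in particular to the $\otimes$-idempotent absorptive $\bS$ at hand, so it suffices to prove the $\Omega(\log^2 m)$ bound over $\bB$. Concretely, I would give a size- and depth-preserving reduction from $st$-connectivity on an $(\ell,n)$-layered graph, whose Boolean circuits require depth $\Omega(\log^2 n)$ by Theorem~\ref{thm:Karchmer_Widgerson}, to the provenance polynomial of $Q$ on a suitably built input, so that a shallow circuit for $Q$ would contradict Karchmer--Wigderson.

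The combinatorial heart is to manufacture a single \emph{pumpable} expansion unit. Using the characterization of Cosmadakis et al.~\cite{CosmadakisGKV88}, unboundedness over $\bB$ guarantees that for every $k$ there is a word $w \in \operatorname{expand}(\Pi)$ with $|w|>k$ whose length-$k$ prefix lies in $\operatorname{notaccept}(\Pi)$. Since both $\operatorname{expand}(\Pi)$ and $\operatorname{notaccept}(\Pi)$ are regular (Proposition~\ref{prop:expand_notaccept_regular}), I would feed these arbitrarily long witnesses through the product DFA of the two languages and apply a pumping argument: a repeated state along the recursive prefix yields a decomposition $w = \alpha\,\beta^{t}\,\gamma$, with $\alpha,\beta,\gamma$ over the recursive alphabet $\Sigma_{\Pi,r}$ and the trailing initialization symbol folded into $\gamma$, such that $|\beta|\ge 1$, every prefix ending inside the $\beta^{t}$-block stays in $\operatorname{notaccept}(\Pi)$, yet the completed word remains in $\operatorname{expand}(\Pi)$. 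The block $\beta$ plays exactly the role of the pumped factor $y$ in Theorem~\ref{thm:dichotomy_RPQ}.

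With $\beta$ in hand I would build the instance. Fix the layered graph $I$ for $\TC$ with source $s$ at the bottom layer and target $t$ at the top, and replace every edge $e=(u,u')$ of $I$ between consecutive layers by a fresh copy of the canonical database of the recursive segment $\beta$, identifying the chain's head variable with $u'$ and its tail variable with $u$; because the program is monadic, the IDB threads through a single distinguished variable, so each gadget has a well-defined entry and exit. The connectedness hypothesis is what makes each gadget one connected structure with unambiguous entry/exit, so the gadgets glue along the chain without spurious shared variables. A top gadget for $\alpha$ and a bottom gadget for the terminal initialization symbol are attached at $t$ and $s$, and the designated fact is $Q(\bar t)$ at the top. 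A full $s$-$t$ path through all $n$ layers then reads precisely $\alpha\,\beta^{\,n-1}\gamma \in \operatorname{expand}(\Pi)$, hence is accepted and $Q(\bar t)$ is derivable; conversely, since every strictly shorter reading lands in $\operatorname{notaccept}(\Pi)$, no expansion can homomorphically embed into a partial (non-full-path) structure, ruling out shortcut derivations and forcing derivability of $Q(\bar t)$ to witness a genuine $s$-$t$ path.

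Finally, the instance map is promoted to a circuit reduction exactly as in Theorem~\ref{thm:dichotomy_RPQ}: given a circuit $\bar F$ for $Q$ on gadget-shaped inputs, route each gadget's EDB input wires to the single original edge variable $x_e$ (choosing one representative copy) and wire all remaining copies to the constant $\mathbf 1 \in \bS$, leaving internal gates and their connections untouched. This produces a circuit for $st$-connectivity on $I$ of the same depth and size up to constants; since each gadget has size depending only on $\Pi$, the transformed input size $m$ is polynomial in $n$, so $\log^2 m = \Theta(\log^2 n)$ and Theorem~\ref{thm:Karchmer_Widgerson} yields the claimed $\Omega(\log^2 m)$ bound. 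I expect the main obstacle to be the second and third steps together: extracting a $\beta$ whose iterates keep the completed word in $\operatorname{expand}(\Pi)$ while all intermediate prefixes stay in $\operatorname{notaccept}(\Pi)$, and verifying that gluing the corresponding canonical databases along the monadic chain introduces no unintended homomorphism that would derive $Q(\bar t)$ from fewer than $n$ layers. Connectedness and the prefix-$\operatorname{notaccept}$ property are precisely the levers that close this gap.
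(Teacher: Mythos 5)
Your overall route coincides with the paper's: reduce from $\TC$ on layered graphs (so Theorem~\ref{thm:Karchmer_Widgerson} applies), use the Cosmadakis et al.\ characterization together with the regularity of $\operatorname{expand}(\Pi)$ and $\operatorname{notaccept}(\Pi)$ to extract a pumpable word, decompose its expansion CQ into head/pump/tail segments, encode each graph edge by the canonical database of the pumped segment, and lift the instance map to a circuit reduction by wiring one representative fact per gadget to the edge variable and the rest to $\mathbf{1}$. However, there are two genuine gaps. The first is your claim that, because the program is monadic, ``each gadget has a well-defined entry and exit.'' This is false in general: a recursive rule may have the form $U_1(X) \obtainedfrom a \wedge U_2(Y)$ with $X = Y$, and if every rule occurring in your pumped block $\beta$ is of this form, the expansion CQ $C_\beta(X_2,X_3)$ identifies $X_2$ with $X_3$, so the gadget has a single attachment point and cannot be glued along an edge $(u,u')$ with $u \neq u'$ --- the reduction collapses. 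The paper devotes a dedicated argument to this case: rules with $X=Y$ merely re-filter the same distinguished value, so iterating a loop consisting solely of such rules yields equivalent CQs, and if \emph{every} loop of the $\DFA$ for $\operatorname{expand}(\Pi)$ had this form, $\Pi$ would be bounded; hence unboundedness lets one choose the pumpable word so that $\beta$ contains a rule with $X \neq Y$. Without this step your gadget construction is not well-defined.

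The second gap concerns the prefix condition beyond the pumped block. Your product-DFA pumping guarantees that prefixes ending \emph{inside} the $\beta^{t}$-block lie in $\operatorname{notaccept}(\Pi)$ (the Cosmadakis witness controls only the length-$k$ prefix, and shorter prefixes inherit non-acceptance by monotonicity), but the soundness of the reduction --- ruling out a derivation of $Q(\bar t)$ without a full $s$-$t$ path --- needs \emph{every} proper prefix of $\alpha\beta^{i}\gamma$ to be in $\operatorname{notaccept}(\Pi)$, including prefixes ending inside the trailing segment $\gamma$, since a spurious homomorphic image of an expansion can stop partway through the final gadget. You assert exactly this (``every strictly shorter reading lands in $\operatorname{notaccept}(\Pi)$''), but your construction does not deliver it. The paper closes this hole with a truncation step: run the witness through the $\DFA$ for $\operatorname{accept}(\Pi)$ and cut it at the first accepting state reached after the pumped block; the truncated word then has all proper prefixes in $\operatorname{notaccept}(\Pi)$, at the price that the completed word is only guaranteed to lie in $\operatorname{accept}(\Pi)$ rather than in $\operatorname{expand}(\Pi)$ --- which suffices, since acceptance already yields derivability of the goal fact over $\bB$. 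Your insistence that the completed pumped word remain in $\operatorname{expand}(\Pi)$ is precisely what blocks this fix; either adopt the truncation and weaken ``expand'' to ``accept'' in your first condition, or supply a separate argument for the prefix condition past the pump.
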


% \begin{proof}[(Proof Sketch of Theorem~\ref{thm:lower_bound_monadic_linear})]
%     We note that a recent result~\cite{AGRZ24} shows $L$-hardness for an extension of monadic linear Datalog with linear temporal logic, using also the tools from~\cite{CosmadakisGKV88} and reducing from undirected reachability.
%     For this proof, we reduce from the transitive closure $\TC$ to $\Pi$ where $Q$ is the designated target. We first claim that there exists a word of the form $xyzu$ where $x,y,z,u$ are words over $\Sigma_\Pi$ in $\operatorname{accept}(\Pi)$ such that, for any integer $i \geq 0$: (1) the word $xy^izu$ lies in $\operatorname{accept}(\Pi)$; (2) any non-trivial prefix of $xy^izu$ lies in  $\operatorname{notaccept}(\Pi)$. Let $C_x(X_1, X_2)$, $C_y(X_2, X_3)$, $C_{zu}(X_3,X_4)$ be the expansion $\CQ$s corresponding to $x, y, zu$ respectively. We then construct the input $I$ given a layered graph by replacing each edge between $s$ and first layer, between two consecutive layers, and between top layer and $t$ by the canonical database for $C_x(X_1,X_2)$, $C_y(X_2,X_3)$, and $C_{zu}(X_3, X_4)$, respectively. This construction of the instance can be made into a circuit reduction similar to the proof of Theorem~\ref{thm:dichotomy_RPQ}. By the property of the chosen $xyzu$ and the connectedness of $\Pi$, we are able to argue the constructed circuit correctly computes the provenance polynomial of $t$ over $\bB$. Combining this with Proposition~\ref{prop:positive_semiring}, the proof is complete.
% \end{proof}

\begin{proof}[Proof of Theorem~\ref{thm:lower_bound_monadic_linear}]
    % We call a recursive rule {\em unessential} if it is of the form $Q_1(X):-A, Q_2(X)$, i.e.\ the variable inside the IDB atoms are the same. We claim that $Q$ in program $\Pi$ is unbounded if and only if $Q$ is unbounded when we remove all unessential rules from $\Pi$. Indeed, all such  
    We note that a recent result~\cite{AGRZ24} shows $L$-hardness for an extension of monadic linear Datalog with linear temporal logic, using also the tools from~\cite{CosmadakisGKV88} and reducing from undirected reachability.
    For this proof, we reduce from the transitive closure $\TC$ to $\Pi$ where $Q$ is the designated target. Suppose $G$ is the input $(\ell, n)$-layered directed graph and $s, t$ are the distinguished vertices. Since $\operatorname{notaccept}(\Pi)$ is a regular language, its complement $\operatorname{accept}(\Pi)$ is also a regular language. We will assume for now that within each recursive rule $U_1(X)\obtainedfrom q \wedge U_2(Y)$, we have $X \neq Y$; we shall argue for the general case later in this proof.
    
    We first claim that there exists a word of the form $xyzu$ where $x,y,z,u$ are words over $\Sigma_\Pi$ in $\operatorname{accept}(\Pi)$ such that, for any integer $i \geq 0$:
    \begin{enumerate}
        \item the word $xy^izu$ lies in $\operatorname{accept}(\Pi)$;
        \item any non-trivial prefix of $xy^izu$ lies in  $\operatorname{notaccept}(\Pi)$.
    \end{enumerate}
    Indeed, by Theorem~\ref{thm:monadic_linear_boundedness}, we know that for any $k$ there exists a word $w$ in $\operatorname{expand}(\Pi)$ such that $|w| > k$ and the prefix of $w$ of length $k$ is in $\operatorname{notaccept}(\Pi)$. Due to Proposition~\ref{prop:expand_notaccept_regular}, we can choose $k$ large enough such that the prefix of $w$ is of the form $xy^iz$ by invoking the pumping lemma for $\operatorname{notaccept}(\Pi)$. Say $w = xy^izu$ and consider the $\DFA$ $M$ of $\operatorname{notaccept}(\Pi)$. Since $\operatorname{notaccept}(\Pi)$ is the complement of $\operatorname{accept}(\Pi)$, the $\DFA$ that flips in $M$ all accepting state to be non-accepting and all non-accepting state to be accepting, we arrive the $\DFA$ $M'$ for $\operatorname{accept}(\Pi)$. Recall that $\operatorname{expand}(\Pi) \subseteq  \operatorname{accept}(\Pi)$ and consider now the run of $w$ on $M'$. The run must end in an accepting state, and if any word between $xy^iz$ and $xy^izu$ end in an accepting state, we can truncate $w$ to be this word and the new $w$ will again satisfy Theorem~\ref{thm:monadic_linear_boundedness}. This finishes the proof of the claim.

    %By the pumping lemma, there exists a word $xyz$ in $\operatorname{accept}(\Pi,Q)$ such that $xy^iz$ lies in $\operatorname{accept}(\Pi,Q)$ for any integer $i$. We 

    Consider now the expansion $\CQ$ that corresponds to the word $xyzu$. Because all rules are monadic, linear and connected, 
    this $\CQ$ can be decomposed into three connected $\CQ$s $C_x(X_1, X_2)$, $C_y(X_2, X_3)$, $C_{zu}(X_3,X_4)$ (corresponding to the words $x, y, zu$ respectively) such that no other variables are shared between the $\CQ$s. Notice that the expansion $\CQ$ for the word $xy^2zu$ would then be $C_x(X_1, X_2) \wedge C_y(X_2, X_2') \wedge C_y(X_2', X_3)  \wedge C_{zu}(X_3,X_4)$.  We now construct the input $I$ for $\Pi$, given the layered graph $G$, as follows:
    \begin{enumerate}
        \item For each edge between $s$ and a vertex $v$ in the first layer of $G$, we introduce the canonical database for $C_x(X_1, X_2)$ into $I$, where we identify $X_1$ with $s$ and $X_2$ with the vertex  $v$. Crucially, the remaining values introduced are distinct fresh values.  
        \item For each edge $(v_1, v_2)$ between any two consecutive layers in $G$, we introduce the canonical database for $C_y(X_2,X_3)$ into $I$, where we identify $X_2$ with $v_1$ and $X_3$ with $v_2$.
        \item For each edge between a  vertex $v$ in the top layer and $t$, we introduce the canonical database for $C_{zu}(X_3,X_4)$ into $I$, where we identify $X_3$ with $v$ and $X_4$ with $t$.
    \end{enumerate} 
    We now claim that $\Pi$ over $I$ contains the distinguished domain element $t$ if and only if $s$ and $t$ are reachable in $G$. Indeed, since we fix the distinguished element $t$, $\Pi$ contains $t$ over $I$ if and only if there exists a homomorphism from some expansion of $\Pi$ to $I$. By the construction, this implies there exists a homomorphism from some expansion of $\Pi$ to a $\CQ$ that is the expansion corresponding to a prefix $w'$ of $w$, which means $w' \in \operatorname{accept}(\Pi,Q)$. However, since $\Pi$ is connected, this homomorphism extends to a homomorphism to a non-trivial prefix of $xy^izu$, contradicting to the construction of $w$. This finishes our claim. Finally, the construction of the instance can be made into a circuit reduction as in the proof of Theorem~\ref{thm:dichotomy_RPQ} (in every canonical database we create for an edge $(u,v)$, one fact gets the value of the variable $x_{u,v}$ for $\TC$ and the remaining facts are set to $\mathbf{1}$).

    We now argue for the general case where it is not necessarily $X \neq Y$ for every recursive rule $U_1(X)\obtainedfrom a\wedge  U_2(Y)$. We claim that, upon the existence of a word $xy^iu$ in $\operatorname{accept}(\Pi)$ and any prefix of it in $\operatorname{notaccept}(\Pi)$, such a word also has the property that there exists a rule in the expansion of $y$ satisfying $X \neq Y$. Note that this property guarantees the reduction by encoding of the canonical databases between each layer. Such a word exists because any $\CQ$ corresponding to the expansion $y^i$ of rules in $y$ which all have $X = Y$ will be equivalent. Indeed, they simply repeat the same filtering of the join of some EDBs. Therefore, every rule in all loops in the $\DFA$ for $\operatorname{expand}(\Pi)$ having $X = Y$ contradicts to the assumption that $Q$ is unbounded. This finishes our proof.
\end{proof}

\section{Related Work}
\label{sec:related}

\introparagraph{Provenance for Datalog} The seminal paper of Green et. al~\cite{GreenKT07} first introduced the idea of using provenance semirings for Datalog programs. Recent work has looked into efficient computation of provenance values in Datalog over different 
 semirings~\cite{RamusatMS21,RamusatMS22}, and other work has studied the complexity of why-provenance for Datalog~\cite{CLPS24}.

\introparagraph{Datalog and Semirings} In a fundamental result, Khamis et. al~\cite{KhamisNPSW24} characterized when a $\operatorname{Datalog}^\circ$ program converges in a fintie number of steps depending on it underlying Partially Ordered Pre-Semiring (POPS), a generalization of naturally ordered semirings. This work introduced the notion of a $p$-stable semiring. Interestingly, the $\TC$ program also has a crucial role in the characterization. Recent work has constructed tighter bounds on the convergence~\cite{ImM0P24} when the Datalog program is linear. There has also been progress in the direction of obtaining faster runtime results for Datalog over semirings~\cite{ZhaoDKRT24}.

\introparagraph{Boundedness of Datalog} The boundedness of a Datalog program has been studied intensively in the literature~\cite{CosmadakisGKV88, GaifmanMSV93, HillebrandKMV91, Naughton89, Naughton86, NaughtonS87, Vardi88}. Most of the work focused on finding classes of Datalog programs where deciding boundedness is a tractable property, since in general it is undecidable.

{\introparagraph{Semiring Provenance} The area of capturing provenance via semirings has recently seen significant advancements. For the reader's convenience, we emphasize several key contributions in this field~\cite{GradelT24, DannertGNT21, DannertGNT19}}

\introparagraph{Circuit Complexity Classes}
In a sense, our paper considers the separation of Datalog programs w.r.t. the semiring analogues of the complexity classes $\operatorname{NC}^1$ and $\operatorname{NC}^2$. These classes (and in general $\operatorname{AC}^i$ and $\operatorname{NC}^i$) allow for circuits with Boolean operators $\{\wedge,\vee,\neg\}$. The use of negation is critical, since for monotone Boolean circuits there are linear depth lower bounds~\cite{RazW92} and superpolynomial lower bounds~\cite{razborov1985lower}. However, no such lower bounds are known for Boolean circuits with negation. The circuits in $\operatorname{AC}^i$ and $\operatorname{NC}^i$ are also {\em uniform}, which means that they can be generated efficiently for a given input length -- typically in logarithmic space. In our case, we construct circuits for a specific input (all our constructions in this paper are also efficiently computable).

Related to query evaluation, it is known that $\UCQ$ evaluation is in the class $\operatorname{AC}^0$ that has circuits of constant depth and unbounded fan-in. Ullman and Van Gelder~\cite{UllmanG88} showed that evaluation of linear Datalog -- and any program with the polynomial fringe property -- is in $\operatorname{NC}^2$. 

\section{Conclusion}

In this paper, we studied the question of which Datalog programs have low-depth circuits that represent their provenance polynomials. Several questions remain open.

The most important question is whether we can show that every Datalog program either admits $O(\log m)$-depth circuits or $\Omega(\log^2 m)$-depth circuits. Are there programs with intermediate circuit depth? This also ties to the question of whether it is possible to have $O(\log m)$-depth circuits for unbounded Datalog programs. 

We should also note that we are not aware of any circuit depth lower bounds stronger than the $\Omega(\log^2 m)$ bound for a problem that can be encoded in Datalog. However, we expect that there are Datalog programs that do not have polylogarithmic depth circuits. There exist problems in polynomial time that have stronger lower bounds -- for example, graph matching has monotone Boolean circuits of depth $\Omega(n)$, where $n$ is the number of vertices~\cite{RazW92}. However, graph matching has super-polynomial-size circuits, while all Datalog programs have a polynomial-size circuit. Hence, we would need to find a problem with a poly-size circuit, but depth $\Omega(n^{\epsilon})$ for some $\epsilon>0$.
\section*{Acknowledgments}
This work was done in part while Koutris and Roy were visiting the Simons Institute for the Theory of Computing in Berkeley. This research was partially supported by NSF IIS-2008107, NSF IIS-2147061, and the DeWitt Graduate Fellowship. 

%%
% \newpage
%% The next two lines define the bibliography style to be used, and
%% the bibliography file.

\bibliographystyle{ACM-Reference-Format}
\bibliography{ref}

%%% -*-BibTeX-*-
%%% Do NOT edit. File created by BibTeX with style
%%% ACM-Reference-Format-Journals [18-Jan-2012].

\begin{thebibliography}{35}

%%% ====================================================================
%%% NOTE TO THE USER: you can override these defaults by providing
%%% customized versions of any of these macros before the \bibliography
%%% command.  Each of them MUST provide its own final punctuation,
%%% except for \shownote{}, \showDOI{}, and \showURL{}.  The latter two
%%% do not use final punctuation, in order to avoid confusing it with
%%% the Web address.
%%%
%%% To suppress output of a particular field, define its macro to expand
%%% to an empty string, or better, \unskip, like this:
%%%
%%% \newcommand{\showDOI}[1]{\unskip}   % LaTeX syntax
%%%
%%% \def \showDOI #1{\unskip}           % plain TeX syntax
%%%
%%% ====================================================================

\ifx \showCODEN    \undefined \def \showCODEN     #1{\unskip}     \fi
\ifx \showDOI      \undefined \def \showDOI       #1{#1}\fi
\ifx \showISBNx    \undefined \def \showISBNx     #1{\unskip}     \fi
\ifx \showISBNxiii \undefined \def \showISBNxiii  #1{\unskip}     \fi
\ifx \showISSN     \undefined \def \showISSN      #1{\unskip}     \fi
\ifx \showLCCN     \undefined \def \showLCCN      #1{\unskip}     \fi
\ifx \shownote     \undefined \def \shownote      #1{#1}          \fi
\ifx \showarticletitle \undefined \def \showarticletitle #1{#1}   \fi
\ifx \showURL      \undefined \def \showURL       {\relax}        \fi
% The following commands are used for tagged output and should be
% invisible to TeX
\providecommand\bibfield[2]{#2}
\providecommand\bibinfo[2]{#2}
\providecommand\natexlab[1]{#1}
\providecommand\showeprint[2][]{arXiv:#2}

\bibitem[Abiteboul and Vianu(1997)]%
        {AV97}
\bibfield{author}{\bibinfo{person}{Serge Abiteboul} {and}
  \bibinfo{person}{Victor Vianu}.} \bibinfo{year}{1997}\natexlab{}.
\newblock \showarticletitle{Regular Path Queries with Constraints}. In
  \bibinfo{booktitle}{\emph{{PODS}}}. \bibinfo{publisher}{{ACM} Press},
  \bibinfo{pages}{122--133}.
\newblock


\bibitem[Artale et~al\mbox{.}(2024)]%
        {AGRZ24}
\bibfield{author}{\bibinfo{person}{Alessandro Artale},
  \bibinfo{person}{Anton~R. Gnatenko}, \bibinfo{person}{Vladislav Ryzhikov},
  {and} \bibinfo{person}{Michael Zakharyaschev}.}
  \bibinfo{year}{2024}\natexlab{}.
\newblock \showarticletitle{On Deciding the Data Complexity of Answering Linear
  Monadic Datalog Queries with {LTL} Operators (Extended Abstract)}. In
  \bibinfo{booktitle}{\emph{Description Logics}} \emph{(\bibinfo{series}{{CEUR}
  Workshop Proceedings}, Vol.~\bibinfo{volume}{3739})}.
  \bibinfo{publisher}{CEUR-WS.org}.
\newblock


\bibitem[Barak et~al\mbox{.}(2023)]%
        {BarakKRVV23}
\bibfield{author}{\bibinfo{person}{Boaz Barak}, \bibinfo{person}{Yael Kalai},
  \bibinfo{person}{Ran Raz}, \bibinfo{person}{Salil~P. Vadhan}, {and}
  \bibinfo{person}{Nisheeth~K. Vishnoi}.} \bibinfo{year}{2023}\natexlab{}.
\newblock \showarticletitle{On the works of Avi Wigderson}.
\newblock \bibinfo{journal}{\emph{CoRR}}  \bibinfo{volume}{abs/2307.09524}
  (\bibinfo{year}{2023}).
\newblock


\bibitem[Bellman(1958)]%
        {Bellman58}
\bibfield{author}{\bibinfo{person}{Richard Bellman}.}
  \bibinfo{year}{1958}\natexlab{}.
\newblock \showarticletitle{On a routing problem}.
\newblock \bibinfo{journal}{\emph{Quarterly of applied mathematics}}
  \bibinfo{volume}{16}, \bibinfo{number}{1} (\bibinfo{year}{1958}),
  \bibinfo{pages}{87--90}.
\newblock


\bibitem[Calautti et~al\mbox{.}(2024)]%
        {CLPS24}
\bibfield{author}{\bibinfo{person}{Marco Calautti}, \bibinfo{person}{Ester
  Livshits}, \bibinfo{person}{Andreas Pieris}, {and} \bibinfo{person}{Markus
  Schneider}.} \bibinfo{year}{2024}\natexlab{}.
\newblock \showarticletitle{The Complexity of Why-Provenance for Datalog
  Queries}.
\newblock \bibinfo{journal}{\emph{Proc. {ACM} Manag. Data}}
  \bibinfo{volume}{2}, \bibinfo{number}{2} (\bibinfo{year}{2024}),
  \bibinfo{pages}{83}.
\newblock


\bibitem[Chandra and Merlin(1977)]%
        {ChandraM77}
\bibfield{author}{\bibinfo{person}{Ashok~K. Chandra} {and}
  \bibinfo{person}{Philip~M. Merlin}.} \bibinfo{year}{1977}\natexlab{}.
\newblock \showarticletitle{Optimal Implementation of Conjunctive Queries in
  Relational Data Bases}. In \bibinfo{booktitle}{\emph{{STOC}}}.
  \bibinfo{publisher}{{ACM}}, \bibinfo{pages}{77--90}.
\newblock


\bibitem[Cosmadakis et~al\mbox{.}(1988)]%
        {CosmadakisGKV88}
\bibfield{author}{\bibinfo{person}{Stavros~S. Cosmadakis},
  \bibinfo{person}{Haim Gaifman}, \bibinfo{person}{Paris~C. Kanellakis}, {and}
  \bibinfo{person}{Moshe~Y. Vardi}.} \bibinfo{year}{1988}\natexlab{}.
\newblock \showarticletitle{Decidable Optimization Problems for Database Logic
  Programs (Preliminary Report)}. In \bibinfo{booktitle}{\emph{{STOC}}}.
  \bibinfo{publisher}{{ACM}}, \bibinfo{pages}{477--490}.
\newblock


\bibitem[Dannert et~al\mbox{.}(2019)]%
        {DannertGNT19}
\bibfield{author}{\bibinfo{person}{Katrin~M. Dannert}, \bibinfo{person}{Erich
  Gr{\"{a}}del}, \bibinfo{person}{Matthias Naaf}, {and} \bibinfo{person}{Val
  Tannen}.} \bibinfo{year}{2019}\natexlab{}.
\newblock \showarticletitle{Generalized Absorptive Polynomials and Provenance
  Semantics for Fixed-Point Logic}.
\newblock \bibinfo{journal}{\emph{CoRR}}  \bibinfo{volume}{abs/1910.07910}
  (\bibinfo{year}{2019}).
\newblock


\bibitem[Dannert et~al\mbox{.}(2021)]%
        {DannertGNT21}
\bibfield{author}{\bibinfo{person}{Katrin~M. Dannert}, \bibinfo{person}{Erich
  Gr{\"{a}}del}, \bibinfo{person}{Matthias Naaf}, {and} \bibinfo{person}{Val
  Tannen}.} \bibinfo{year}{2021}\natexlab{}.
\newblock \showarticletitle{Semiring Provenance for Fixed-Point Logic}. In
  \bibinfo{booktitle}{\emph{{CSL}}} \emph{(\bibinfo{series}{LIPIcs},
  Vol.~\bibinfo{volume}{183})}. \bibinfo{publisher}{Schloss Dagstuhl -
  Leibniz-Zentrum f{\"{u}}r Informatik}, \bibinfo{pages}{17:1--17:22}.
\newblock


\bibitem[Deutch et~al\mbox{.}(2014)]%
        {DeutchMRT14}
\bibfield{author}{\bibinfo{person}{Daniel Deutch}, \bibinfo{person}{Tova Milo},
  \bibinfo{person}{Sudeepa Roy}, {and} \bibinfo{person}{Val Tannen}.}
  \bibinfo{year}{2014}\natexlab{}.
\newblock \showarticletitle{Circuits for Datalog Provenance}. In
  \bibinfo{booktitle}{\emph{{ICDT}}}. \bibinfo{publisher}{OpenProceedings.org},
  \bibinfo{pages}{201--212}.
\newblock


\bibitem[Ford(1956)]%
        {Ford56}
\bibfield{author}{\bibinfo{person}{Lester~Randolph Ford}.}
  \bibinfo{year}{1956}\natexlab{}.
\newblock \showarticletitle{Network flow theory}.
\newblock \bibinfo{journal}{\emph{Rand Corporation Paper, Santa Monica, 1956}}
  (\bibinfo{year}{1956}).
\newblock


\bibitem[Gaifman et~al\mbox{.}(1993)]%
        {GaifmanMSV93}
\bibfield{author}{\bibinfo{person}{Haim Gaifman}, \bibinfo{person}{Harry~G.
  Mairson}, \bibinfo{person}{Yehoshua Sagiv}, {and} \bibinfo{person}{Moshe~Y.
  Vardi}.} \bibinfo{year}{1993}\natexlab{}.
\newblock \showarticletitle{Undecidable Optimization Problems for Database
  Logic Programs}.
\newblock \bibinfo{journal}{\emph{J. {ACM}}} \bibinfo{volume}{40},
  \bibinfo{number}{3} (\bibinfo{year}{1993}), \bibinfo{pages}{683--713}.
\newblock


\bibitem[Gr{\"{a}}del and Tannen(2024)]%
        {GradelT24}
\bibfield{author}{\bibinfo{person}{Erich Gr{\"{a}}del} {and}
  \bibinfo{person}{Val Tannen}.} \bibinfo{year}{2024}\natexlab{}.
\newblock \showarticletitle{Provenance Analysis and Semiring Semantics for
  First-Order Logic}.
\newblock \bibinfo{journal}{\emph{CoRR}}  \bibinfo{volume}{abs/2412.07986}
  (\bibinfo{year}{2024}).
\newblock


\bibitem[Green(2011)]%
        {Green11}
\bibfield{author}{\bibinfo{person}{Todd~J. Green}.}
  \bibinfo{year}{2011}\natexlab{}.
\newblock \showarticletitle{Containment of Conjunctive Queries on Annotated
  Relations}.
\newblock \bibinfo{journal}{\emph{Theory Comput. Syst.}} \bibinfo{volume}{49},
  \bibinfo{number}{2} (\bibinfo{year}{2011}), \bibinfo{pages}{429--459}.
\newblock


\bibitem[Green et~al\mbox{.}(2007)]%
        {GreenKT07}
\bibfield{author}{\bibinfo{person}{Todd~J. Green}, \bibinfo{person}{Gregory
  Karvounarakis}, {and} \bibinfo{person}{Val Tannen}.}
  \bibinfo{year}{2007}\natexlab{}.
\newblock \showarticletitle{Provenance semirings}. In
  \bibinfo{booktitle}{\emph{{PODS}}}. \bibinfo{publisher}{{ACM}},
  \bibinfo{pages}{31--40}.
\newblock


\bibitem[Hillebrand et~al\mbox{.}(1991)]%
        {HillebrandKMV91}
\bibfield{author}{\bibinfo{person}{Gerd~G. Hillebrand},
  \bibinfo{person}{Paris~C. Kanellakis}, \bibinfo{person}{Harry~G. Mairson},
  {and} \bibinfo{person}{Moshe~Y. Vardi}.} \bibinfo{year}{1991}\natexlab{}.
\newblock \showarticletitle{Tools for Datalog Boundedness}. In
  \bibinfo{booktitle}{\emph{{PODS}}}. \bibinfo{publisher}{{ACM} Press},
  \bibinfo{pages}{1--12}.
\newblock


\bibitem[Im et~al\mbox{.}(2024)]%
        {ImM0P24}
\bibfield{author}{\bibinfo{person}{Sungjin Im}, \bibinfo{person}{Benjamin
  Moseley}, \bibinfo{person}{Hung~Q. Ngo}, {and} \bibinfo{person}{Kirk Pruhs}.}
  \bibinfo{year}{2024}\natexlab{}.
\newblock \showarticletitle{On the Convergence Rate of Linear Datalog
  {\^{}}{\(\circ\)} over Stable Semirings}. In
  \bibinfo{booktitle}{\emph{{ICDT}}} \emph{(\bibinfo{series}{LIPIcs},
  Vol.~\bibinfo{volume}{290})}. \bibinfo{publisher}{Schloss Dagstuhl -
  Leibniz-Zentrum f{\"{u}}r Informatik}, \bibinfo{pages}{11:1--11:20}.
\newblock


\bibitem[Jukna(2015)]%
        {Jukna15}
\bibfield{author}{\bibinfo{person}{Stasys Jukna}.}
  \bibinfo{year}{2015}\natexlab{}.
\newblock \showarticletitle{Lower Bounds for Tropical Circuits and Dynamic
  Programs}.
\newblock \bibinfo{journal}{\emph{Theory Comput. Syst.}} \bibinfo{volume}{57},
  \bibinfo{number}{1} (\bibinfo{year}{2015}), \bibinfo{pages}{160--194}.
\newblock


\bibitem[Karchmer and Wigderson(1990)]%
        {KarchmerW90}
\bibfield{author}{\bibinfo{person}{Mauricio Karchmer} {and}
  \bibinfo{person}{Avi Wigderson}.} \bibinfo{year}{1990}\natexlab{}.
\newblock \showarticletitle{Monotone Circuits for Connectivity Require
  Super-Logarithmic Depth}.
\newblock \bibinfo{journal}{\emph{{SIAM} J. Discret. Math.}}
  \bibinfo{volume}{3}, \bibinfo{number}{2} (\bibinfo{year}{1990}),
  \bibinfo{pages}{255--265}.
\newblock


\bibitem[Khamis et~al\mbox{.}(2024)]%
        {KhamisNPSW24}
\bibfield{author}{\bibinfo{person}{Mahmoud~Abo Khamis},
  \bibinfo{person}{Hung~Q. Ngo}, \bibinfo{person}{Reinhard Pichler},
  \bibinfo{person}{Dan Suciu}, {and} \bibinfo{person}{Yisu~Remy Wang}.}
  \bibinfo{year}{2024}\natexlab{}.
\newblock \showarticletitle{Convergence of datalog over (Pre-) Semirings}.
\newblock \bibinfo{journal}{\emph{J. {ACM}}} \bibinfo{volume}{71},
  \bibinfo{number}{2} (\bibinfo{year}{2024}), \bibinfo{pages}{8:1--8:55}.
\newblock


\bibitem[Kostylev et~al\mbox{.}(2014)]%
        {KRS14}
\bibfield{author}{\bibinfo{person}{Egor~V. Kostylev}, \bibinfo{person}{Juan~L.
  Reutter}, {and} \bibinfo{person}{Andr{\'{a}}s~Z. Salamon}.}
  \bibinfo{year}{2014}\natexlab{}.
\newblock \showarticletitle{Classification of annotation semirings over
  containment of conjunctive queries}.
\newblock \bibinfo{journal}{\emph{{ACM} Trans. Database Syst.}}
  \bibinfo{volume}{39}, \bibinfo{number}{1} (\bibinfo{year}{2014}),
  \bibinfo{pages}{1:1--1:39}.
\newblock
\urldef\tempurl%
\url{https://doi.org/10.1145/2556524}
\showDOI{\tempurl}


\bibitem[Naaf(2024)]%
        {Naaf24}
\bibfield{author}{\bibinfo{person}{Matthias~Ferdinand Naaf}.}
  \bibinfo{year}{2024}\natexlab{}.
\newblock \emph{\bibinfo{title}{Logic, semirings, and fixed points}}.
\newblock \bibinfo{thesistype}{Ph.\,D. Dissertation}.
  \bibinfo{school}{Dissertation, RWTH Aachen University, 2024}.
\newblock


\bibitem[Naughton(1986)]%
        {Naughton86}
\bibfield{author}{\bibinfo{person}{Jeffrey~F. Naughton}.}
  \bibinfo{year}{1986}\natexlab{}.
\newblock \showarticletitle{Data Independent Recursion in Deductive Databases}.
  In \bibinfo{booktitle}{\emph{{PODS}}}. \bibinfo{publisher}{{ACM}},
  \bibinfo{pages}{267--279}.
\newblock


\bibitem[Naughton(1989)]%
        {Naughton89}
\bibfield{author}{\bibinfo{person}{Jeffrey~F. Naughton}.}
  \bibinfo{year}{1989}\natexlab{}.
\newblock \showarticletitle{Minimizing function-free recursive inference
  rules}.
\newblock \bibinfo{journal}{\emph{J. {ACM}}} \bibinfo{volume}{36},
  \bibinfo{number}{1} (\bibinfo{year}{1989}), \bibinfo{pages}{69--91}.
\newblock


\bibitem[Naughton and Sagiv(1987)]%
        {NaughtonS87}
\bibfield{author}{\bibinfo{person}{Jeffrey~F. Naughton} {and}
  \bibinfo{person}{Yehoshua Sagiv}.} \bibinfo{year}{1987}\natexlab{}.
\newblock \showarticletitle{A Decidable Class of Bounded Recursions}. In
  \bibinfo{booktitle}{\emph{{PODS}}}. \bibinfo{publisher}{{ACM}},
  \bibinfo{pages}{227--236}.
\newblock


\bibitem[Ramusat et~al\mbox{.}(2021)]%
        {RamusatMS21}
\bibfield{author}{\bibinfo{person}{Yann Ramusat}, \bibinfo{person}{Silviu
  Maniu}, {and} \bibinfo{person}{Pierre Senellart}.}
  \bibinfo{year}{2021}\natexlab{}.
\newblock \showarticletitle{Provenance-Based Algorithms for Rich Queries over
  Graph Databases}. In \bibinfo{booktitle}{\emph{{EDBT}}}.
  \bibinfo{publisher}{OpenProceedings.org}, \bibinfo{pages}{73--84}.
\newblock


\bibitem[Ramusat et~al\mbox{.}(2022)]%
        {RamusatMS22}
\bibfield{author}{\bibinfo{person}{Yann Ramusat}, \bibinfo{person}{Silviu
  Maniu}, {and} \bibinfo{person}{Pierre Senellart}.}
  \bibinfo{year}{2022}\natexlab{}.
\newblock \showarticletitle{Efficient provenance-aware querying of graph
  databases with datalog}. In \bibinfo{booktitle}{\emph{GRADES-NDA@SIGMOD}}.
  \bibinfo{publisher}{{ACM}}, \bibinfo{pages}{4:1--4:9}.
\newblock


\bibitem[Raz and Wigderson(1992)]%
        {RazW92}
\bibfield{author}{\bibinfo{person}{Ran Raz} {and} \bibinfo{person}{Avi
  Wigderson}.} \bibinfo{year}{1992}\natexlab{}.
\newblock \showarticletitle{Monotone Circuits for Matching Require Linear
  Depth}.
\newblock \bibinfo{journal}{\emph{J. {ACM}}} \bibinfo{volume}{39},
  \bibinfo{number}{3} (\bibinfo{year}{1992}), \bibinfo{pages}{736--744}.
\newblock


\bibitem[Razborov(1985)]%
        {razborov1985lower}
\bibfield{author}{\bibinfo{person}{Alexander Razborov}.}
  \bibinfo{year}{1985}\natexlab{}.
\newblock \showarticletitle{Lower bounds on the monotone complexity of some
  Boolean function}. In \bibinfo{booktitle}{\emph{Soviet Math. Dokl.}},
  Vol.~\bibinfo{volume}{31}. \bibinfo{pages}{354--357}.
\newblock


\bibitem[Sipser(1997)]%
        {Sipser97}
\bibfield{author}{\bibinfo{person}{Michael Sipser}.}
  \bibinfo{year}{1997}\natexlab{}.
\newblock \bibinfo{booktitle}{\emph{Introduction to the theory of
  computation}}.
\newblock \bibinfo{publisher}{{PWS} Publishing Company}.
\newblock


\bibitem[Ullman and Gelder(1988)]%
        {UllmanG88}
\bibfield{author}{\bibinfo{person}{Jeffrey~D. Ullman} {and}
  \bibinfo{person}{Allen~Van Gelder}.} \bibinfo{year}{1988}\natexlab{}.
\newblock \showarticletitle{Parallel Complexity of Logical Query Programs}.
\newblock \bibinfo{journal}{\emph{Algorithmica}}  \bibinfo{volume}{3}
  (\bibinfo{year}{1988}), \bibinfo{pages}{5--42}.
\newblock


\bibitem[Vardi(1988)]%
        {Vardi88}
\bibfield{author}{\bibinfo{person}{Moshe~Y. Vardi}.}
  \bibinfo{year}{1988}\natexlab{}.
\newblock \showarticletitle{Decidability and Undecidability Results for
  Boundedness of Linear Recursive Queries}. In
  \bibinfo{booktitle}{\emph{{PODS}}}. \bibinfo{publisher}{{ACM}},
  \bibinfo{pages}{341--351}.
\newblock


\bibitem[Wegener(1983)]%
        {Wegener83}
\bibfield{author}{\bibinfo{person}{Ingo Wegener}.}
  \bibinfo{year}{1983}\natexlab{}.
\newblock \showarticletitle{Relating Monotone Formula Size and Monotone Depth
  of Boolean Functions}.
\newblock \bibinfo{journal}{\emph{Inf. Process. Lett.}} \bibinfo{volume}{16},
  \bibinfo{number}{1} (\bibinfo{year}{1983}), \bibinfo{pages}{41--42}.
\newblock


\bibitem[Yannakakis(1990)]%
        {Yannakakis90}
\bibfield{author}{\bibinfo{person}{Mihalis Yannakakis}.}
  \bibinfo{year}{1990}\natexlab{}.
\newblock \showarticletitle{Graph-Theoretic Methods in Database Theory}. In
  \bibinfo{booktitle}{\emph{{PODS}}}. \bibinfo{publisher}{{ACM} Press},
  \bibinfo{pages}{230--242}.
\newblock


\bibitem[Zhao et~al\mbox{.}(2024)]%
        {ZhaoDKRT24}
\bibfield{author}{\bibinfo{person}{Hangdong Zhao}, \bibinfo{person}{Shaleen
  Deep}, \bibinfo{person}{Paraschos Koutris}, \bibinfo{person}{Sudeepa Roy},
  {and} \bibinfo{person}{Val Tannen}.} \bibinfo{year}{2024}\natexlab{}.
\newblock \showarticletitle{Evaluating Datalog over Semirings: {A}
  Grounding-based Approach}.
\newblock \bibinfo{journal}{\emph{Proc. {ACM} Manag. Data}}
  \bibinfo{volume}{2}, \bibinfo{number}{2} (\bibinfo{year}{2024}),
  \bibinfo{pages}{90}.
\newblock


\end{thebibliography}

% \onecolumn
% \appendix
% \input{appendix.tex}

%%
%% If your work has an appendix, this is the place to put it.
\end{document}